\newcommand{\set}[1]{\left\{ #1 \right\}}
\newcommand{\cd}{\text{ :- }}
\newcommand{\setof}[2]{\left\{ #1 \mid #2 \right\}}
\newcommand{\Dom}{\text{\sf Domain}}
\newcommand{\AGM}{\textsf{AGM}}
\newcommand{\GLVV}{\textsf{GLVV}}
\newcommand{\FD}{\textsf{FD}}
\newcommand{\Light}{\textsf{Lite}}
\newcommand{\Heavy}{\textsf{Heavy}}
\newcommand{\coloring}{{\cal L}}
\newcommand{\onehat}{\hat{1}}
\newcommand{\zerohat}{\hat{0}}
\newcommand{\querylabels}{\textsf{Labels}}
\newcommand{\co}{\text{co}}
\newcommand{\vars}{\text{vars}}
\newcommand{\calA}{\mathcal A}
\newcommand{\calB}{\mathcal B}
\newcommand{\calP}{\mathcal P}
\newcommand{\calS}{\mathcal S}
\newcommand{\bR}{{\mathbf R}}
\newcommand{\bJ}{{\mathbf J}}
\newcommand{\bX}{{\mathbf X}}
\newcommand{\bC}{{\mathbf C}}
\newcommand{\bL}{{\mathbf L}}
\newcommand{\incomp}{\not\lessgtr}
\newcommand{\eat}[1]{}
\newcommand{\flow}{{\sf netflow}}
\newcommand{\lelp}{{\sf LLP}}
\newcommand{\cllp}{{\sf CLLP}}
\newcommand{\opt}{{\textsf{OPT}}}
\newcommand{\obj}{{\textsf{OBJ}}}
\newcommand{\Deg}{{\textsf{degree}}}
\newcommand{\R}{\mathbb{R}}
\newcommand{\argmin}{\mathop{\text{argmin}}}
\newcommand{\be}{\begin{enumerate}}
\newcommand{\ee}{\end{enumerate}}
\newcommand{\bi}{\begin{itemize}}
\newcommand{\ei}{\end{itemize}}
\newcommand{\beq}{\begin{equation}}
\newcommand{\eeq}{\end{equation}}
\newcommand{\bp}{\begin{proof}}
\newcommand{\ep}{\end{proof}}
\newcommand{\bcor}{\begin{cor}}
\newcommand{\ecor}{\end{cor}}
\newcommand{\bthm}{\begin{thm}}
\newcommand{\ethm}{\end{thm}}
\newcommand{\blmm}{\begin{lmm}}
\newcommand{\elmm}{\end{lmm}}
\newcommand{\bdefn}{\begin{defn}}
\newcommand{\edefn}{\end{defn}}
\newcommand{\bprop}{\begin{prop}}
\newcommand{\eprop}{\end{prop}}
\newcommand{\bconj}{\begin{conj}}
\newcommand{\econj}{\end{conj}}
\newcommand{\bopm}{\begin{opm}}
\newcommand{\eopm}{\end{opm}}
\newcommand{\brmk}{\begin{rmk}}
\newcommand{\ermk}{\end{rmk}}
\newcommand{\suchthat}{\ | \ }
\theoremstyle{plain}                   
\newtheorem{thm}{Theorem}[section]
\newtheorem{lmm}[thm]{Lemma}
\newtheorem{prop}[thm]{Proposition}
\newtheorem{cor}[thm]{Corollary}
\theoremstyle{definition}              
\newtheorem{opm}[thm]{Open Problem}
\newtheorem{conj}[thm]{Conjecture}
\newtheorem{example}[thm]{Example}
\newtheorem{defn}[thm]{Definition}
\newtheorem{rmk}[thm]{Remark}
\newtheorem*{example*}{Example}
\newcommand{\continued}[1]{{\bf \ref{#1} Continued}}
\newcommand{\bbox}{
\begin{center}
\begin{tabular}{|c|}
\hline
}
\newcommand{\ebox}{
\\
\hline
\end{tabular}
\end{center}
}
\newlength{\toppush}
\def\subjnum{CSE 713}
\def\subjname{Probabilistically Checkable Proofs and Inapproximability}
\def\doheading#1#2#3{\vfill\eject\vspace*{-\toppush}%
  \vbox{\hbox to\textwidth{{\bf}
  \subjnum: \subjname
  \hfil Lecturer: Hung Q. Ngo}%
  \hbox to\textwidth{{\bf} SUNY at Buffalo, Fall 2004\hfil#3\strut}%
  \hrule}
}
\newcommand{\defeq}{\stackrel{\mathrm{def}}{=}}
\algrenewcommand\algorithmicrequire{\textbf{Input:}}
\algrenewcommand\algorithmicensure{\textbf{Output:}}
\algrenewcommand\algorithmicwhile{\textbf{While}}
\algrenewcommand\algorithmicfor{\textbf{For}}
\algrenewcommand\algorithmicreturn{\textbf{Return}}
\algrenewcommand\algorithmicif{\textbf{If}}
\newcommand{\Lm}{{{D}}}
\newcommand{\Ln}{{{E}}}
\newcommand{\Lp}{{{F}}}
\newcommand{\Lmn}{{{G}}}
\newcommand{\Lmp}{{{I}}}
\newcommand{\Lnp}{{{J}}}
\newcommand{\Lx}{{{M}}}
\newcommand{\Ly}{{{N}}}
\newcommand{\Lz}{{{O}}}
\newcommand{\Lmnp}{{{Z}}}
\newcommand{\La}{{{P}}}
\newcommand{\Lb}{{{S}}}
\newcommand{\Lc}{{{T}}}
\newcommand{\Lab}{{{U}}}
\newcommand{\Lac}{{{V}}}
\newcommand{\Lbc}{{{W}}}
\begin{document}

\title{Computing Join Queries with Functional Dependencies}

\author{
   {\sf Mahmoud Abo Khamis}\\
   LogicBlox, Inc.\\
   \& SUNY Buffalo
   \and
   {\sf Hung Q. Ngo}\\
   LogicBlox, Inc.\\
   \& SUNY Buffalo
   \and
   {\sf Dan Suciu}\\
   LogicBlox, Inc.\\
   \& University of Washington
}

\date{}

\maketitle

\thispagestyle{empty}

\begin{abstract}

Recently, Gottlob, Lee, Valiant, and Valiant (GLVV) presented an output size bound for join queries with functional dependencies (FD), based on a linear program on polymatroids. GLVV bound strictly generalizes the bound of Atserias, Grohe and Marx (AGM) for queries with no FD, in which case there are known algorithms running within AGM bound and thus are worst-case optimal.

A main result of this paper is an algorithm for computing join queries with FDs, running within GLVV bound up to a poly-log factor. In particular, our algorithm is worst-case optimal for any query where the GLVV bound is tight. As an unexpected by-product, our algorithm manages to solve a harder problem, where (some) input relations may have prescribed maximum degree bounds, of which {\em both} functional dependencies and cardinality bounds are special cases.

We extend Gottlob et al. framework by replacing all variable subsets with the lattice of closed sets (under the given FDs). This gives us new insights into the structure of the worst-case bound and worst-case instances. While it is still open whether GLVV bound is tight in general, we show that it is tight on distributive lattices and some other simple lattices. Distributive lattices capture a strict superset of queries with no FD and with simple FDs. We also present two simpler algorithms which are also worst-case optimal on distributive lattices within a single-$\log$ factor, but they do not match GLVV bound on a general lattice. Our algorithms are designed based on a novel principle: we turn a proof of a polymatroid-based output size bound into an algorithm.

\end{abstract}

\newpage


\section{Introduction}

\label{sec:intro}

Several results published in the last ten years or so have lead to tight
worst-case output size bounds ({\em AGM-bound} \cite{AGM,GM06}) 
and the development of a new class of 
query processing algorithms running within the bound's time 
budget~\cite{DBLP:conf/pods/NgoPRR12,LFTJ,skew}.  These new {\em
worst-case optimal} algorithms are quite different from traditional
query plans, in the sense that they no longer compute one pairwise join at a
time, but instead process the query globally.  The runtime is bounded
by $O(N^{\rho^*})$, where $N$ is the size of the database, while
$\rho^*$ is the value of the optimal fractional edge cover of the
query.  For example, they compute the triangle query $Q(x,y,z) \cd
R(x,y), S(y,z), T(z,x)$, in worst-case time $O(N^{3/2})$, while
any traditional query plan requires worst-case time
$\Omega(N^2)$~\cite{skew}.

While the vast majority of database engines today still rely on
traditional query plans, new, complex data analytics engines
increasingly switch to worst-case optimal algorithms: LogicBlox'
engine~\cite{logicblox} is built on a worst-case optimal algorithm
called {\sf LeapFrog Triejoin}~\cite{LFTJ} (LFTJ), the {\em Myria}
data analytics platform supports a variant of
LFTJ~\cite{DBLP:conf/sigmod/ChuBS15}, and the increased role of SIMD
instructions in modern processors also favors the new class of
worst-case optimal algorithms~\cite{DBLP:journals/corr/AbergerNOR15}.

When there are functional dependencies (FDs), however, the AGM-bound
is no-longer tight.  Gottlob, Lee, Valiant, and Valiant~\cite{GLVV}
initiated the study of the worst-case output size of a query in the
presence of FDs and described an upper bound (reviewed in
Section~\ref{sec:background}), called the {\em GLVV-bound} henceforth.  It
remains open whether the GLVV bound is tight; at the present, it is
the best known upper bound for the query size in the presence of
FDs.  A recent result by Gogacz and
Toru{\'{n}}czyk~\cite{DBLP:journals/corr/GogaczT15} proved a weaker
statement, namely that if we replace
the polymatroidal constraints in the GLVV-bound 
with entropic constraints (of which there are infinitely many),
then the bound is tight.


Our paper proposes a novel approach to studying queries in the
presence of FDs, by using lattice theory.  We present several
theoretical results that clarify precisely where queries with FDs
become more difficult to study than those without.  We describe a
an algorithm which runs in time proportional to the
GLVV-bound, within a polylogarithmic factor.  We also
describe two special cases where the polylogarithmic factor is
reduced to a single $\log$.  Before discussing our results, we explain their
significance.

\subsection{Motivation}

\label{subsec:motivation}

\paragraph*{User-Defined Functions.} UDF's, or interpreted predicates,
significantly affect the runtime of a query. Consider:
\begin{equation}
  Q(x,y,z,u) \text{ :- } R(x,y), S(y,z), T(z,u), u = f(x,z), x = g(y,u).
   \label{eqn:running}
\end{equation}
The predicates $u=f(x,z)$ and $x=g(y,u)$ represent two user-defined
functions, $f$ and $g$; for example $f(x,z)$ could be $x+z$, or the
concatenation of two strings $x$ and $z$.  Any UDF can be modeled as a
relation with a primary key, for example the function $f$ can be
viewed as a relation $F(x,z,u)$ of cardinality $N^2$ (with one entry
for every $x,z$ pair) satisfying the FD $xz \rightarrow u$; similarly
$g$ introduces the FD $yu \rightarrow x$.  The addition of these two
FD's significantly affects the output size and the query evaluation
complexity:
if we first computed the intermediate query $R(x,y),S(y,z),T(z,u)$
then applied the two predicates $u = f(x,z)$ and $x = g(y,u)$ then the
runtime can be as high as $N^2$ because the size of the intermediate query
is $N^2$ when $R=\setof{(i,1)}{i\in [N]}$, $S=\set{(1,1)}$,
$T=\setof{(1,i)}{i\in [N]}$.  However, as we will show, the GLVV bound
for the size of $Q$ is $\leq N^{3/2}$. Our new algorithms run in
$\tilde O(N^{3/2})$, and thus reduce the running time asymptotically.
As shall be seen, this runtime is also worst-case optimal.

A related problem is querying relations with restricted access
patterns~\cite{DBLP:journals/pvldb/BenediktLT15}.  In that setting,
some of the relations in the database can only be read by providing
the values of one or more attributes.  As shown above, a user defined
function $f(x,z)$ can be modeled as an infinite relation $F(u,x,z)
\equiv (u = f(x,z))$ with the restriction that $F$ can be accessed
only by providing inputs for the variables $x$ and $z$.  The work on
querying under restricted access patterns has focused on the
answerability question (whether a query can or cannot be answered).
Our work extends that, by finding algorithms for
answering the query within GLVV-bound.

\paragraph*{Known Frequencies.} Systems often know an upper bound
on the frequencies (or degrees) in the database.  For example,
consider the triangle query above, and assume that the binary graph
defined by the relation $R(x,y)$ has a bounded degree: all outdegrees
are $\text{deg}(x) \leq d_1$ and all indegrees are $\text{deg}(y) \leq
d_2$.  One can model this scenario by introducing an artificial color
$c_1$ on outgoing edges, and a color $c_2$ for incoming edges of $R$:
\begin{align}
  Q(x,y,z,c_1,c_2) \cd & R(x,c_1,c_2,y),S(y,z),T(z,x),C_1(c_1),C_2(c_2),\nonumber\\
   & xc_1\rightarrow y, yc_2 \rightarrow x, xy \rightarrow c_1c_2
   \label{eqn:degree-bound}
\end{align}
where $|R|=|S|=|T|=N$ and $|C_1|= d_1, |C_2|=d_2$.  Thus, each
outgoing edge from $x$ is colored with some distinct color
$c_1$, and similarly each incoming edge to $y$ is colored with some
distinct color $c_2$.  The new predicates $C_1,C_2$ limit the number
of colors to $d_1, d_2$ respectively.  We will show that the worst
case query output decreases from $N^{3/2}$ to
$\min(N^{3/2},Nd_1,Nd_2)$
Alternatively, we can use the linear program and algorithm in
Sec~\ref{sec:csma} to capture queries with known maximum degree
bounds.  We note that another approach to evaluate a limited class of
queries over databases with known degrees has been recently described
in~\cite{DBLP:journals/corr/JoglekarR15}.

\subsection{Overview of the Results}
\label{subsec:overview}

Grohe and Marx~\cite{GM06} and later Atserias,
Grohe and Marx~\cite{AGM} derived an elegant tight upper bound on the 
output size of a join query: $\prod_{j=1}^m |R_j|^{w_j}$,
where $R_1, \dots, R_m$ are the input relations to the query, and
$(w_j)_{j=1}^m$ is any fractional edge cover of the query's
hypergraph.  This bound is known today as the AGM bound.  A simple
example that gives a great intuition for this formula, due to
Grohe~\cite{grohe2013bounds}, is the following.  Suppose we choose a subset of
relations $R_{j_1}, R_{j_2}, \ldots$ that together contain all variables of the
query, in other words they form an integral cover of the query's
hypergraph. Then, obviously $|Q|$ is upper bounded by the product
$|R_{j_1}| |R_{j_2}| \cdots$, because the output to $Q$ is contained
in the cross product of these relations.  The AGM bound simply
generalizes this property from integral to fractional edge covers.
They proved that the bound is tight 
by describing a simple database instance for any query,
such that the query's output matches the bound.  
In that instance, every relation is a cross
product of sets, one set per variable; we call it a {\em product
instance}.

An obvious open question was whether a query $Q$ can be evaluated on any 
database instance $D$ in time that is no larger than the AGM bound of $Q$ 
on databases with the same cardinalities as $D$; such an algorithm is 
called {\em worst-case optimal}.  Ngo, Porat, R{\'e}, and 
Rudra~\cite{DBLP:conf/pods/NgoPRR12} described the
first worst-case optimal algorithm;
later Veldhuizen~\cite{LFTJ} proved that LFTJ, an algorithm already implemented
at LogicBlox earlier, is also worst-case optimal.
A survey and unification of these two algorithms can be found in~\cite{skew}.

Neither the AGM bound nor the associated algorithms analytically exploit FDs in
the database.\footnote{Algorithmically, LFTJ handles FDs by binding variables 
at the earliest trie level at which they are functionally determined.  
For example, in $R(x,y),S(y,z),T(z,u),u=f(x,z),x=g(y,u)$ with a key order 
$[x,y,z,u]$, whenever $z$ was bound the value of $u$ would be immediately 
computed by $u=f(x,z)$.}  
Such FDs can provably reduce the worst-case output of a
query, but the upper bound and algorithms mentioned above cannot use
this information, and instead treat the query by ignoring the FDs.
Gottlob et al.~\cite{GLVV} studied the upper bound of the query size
in the presence of FDs, and established two classes of results.  The
first was a characterization of this bound in the case when the FD's
are restricted to simple keys; as we will show, this case
can be solved entirely using the AGM bound by simply
replacing each relation with its closure.  Next, they described a
novel approach to reasoning about the worst-case output of a query,
using information theory.  They viewed the query output as a
multivariate probability space, and introduced two constraints on the
marginal entropies: a cardinality constraint for each input
relation $R$, stating that the entropy of its variables cannot exceed
the uniform entropy $H(\vars(R))\leq \log_2 |R|$, and one constraint
for each FD $X\rightarrow Y$, stating $H(XY) = H(X)$. (Note that $X$ and $Y$
are sets of variables.) The largest
answer to the query $Q$ is then given by the largest possible value of
$2^{H(\vars(Q))}$, over all choices of entropic functions $H$ that
satisfy these constraints.  But characterizing the space of all
entropic functions $H$ is a long standing open problem in information
theory \cite{Yeung:2008:ITN:1457455}; 
to circumvent that, they relax the function $H$ by
allowing it to be any function that satisfies Shannon
inequalities. Such a function is called a {\em polymatroid} in the
literature, and we denote it with lower case $h$ to distinguish it from
entropic functions $H$.  Thus, the problem in~\cite{GLVV} can be
stated equivalently as: find the maximum value $h(\vars(Q))$
where $h$ ranges over all polymatroids satisfying the given
constraints. 

In this paper, we continue the study of query evaluation under general FDs.
We establish both bounds and algorithms.  Our novelty is to model FDs
as a lattice $\bL$ of the closed sets of attributes, and to study
polymatroids on lattices.  The function $h$ is now any non-negative,
monotone, sub-modular function 
(i.e. $h(X)+h(Y) \geq h(X \vee Y) + h(X \wedge Y)$)
that satisfies all cardinality constraints. FD constraints are
enforced {\em automatically} by the lattice structure and the upper bound on
the query size is $2^{\max h(\vars(Q))}$. When there are no FDs, the
lattice is a Boolean algebra.

Our first question is whether the elegant AGM bound and worst-case
product instance carries over to arbitrary FDs.  We answer this
question completely, by proving that both sides of the AGM bound hold
iff the lattice has a special structure, which we call a {\em normal
  lattice}.  Both upper and lower bounds require minor extensions to
be applicable to normal lattices.  The standard AGM upper bound is
given in terms of fractional edge covers of the query's hypergraph,
but in a normal lattice one needs to consider a dual hypergraph, whose
nodes are $L$'s co-atoms. In a Boolean algebra, these two hypergraphs
are isomorphic, because of the bijection $X \mapsto (\vars(Q) -
\set{X})$ between variables and co-atoms, but in a general lattice
they can be significantly different.  
The notion of normal lattice seems novel, and
strictly includes all distributive lattices, which in turn include all
lattices corresponding to simple FDs (each FD is of the form $a
\rightarrow b$, where $a,b$ are attributes).  The second minor change
is that one needs to allow for a slight generalization of product
instances, to what we call quasi-product instances.  Importantly, both
these properties fail on non-normal lattices; in particular, worst-case 
instances cannot be quasi-free.

The canonical example of a non-normal lattice is $M_3$ (one of the two
canonical non-distributive lattices, see the right part of
Fig.~\ref{fig:non:normal}).  
Every lattice $L$ that contains
$M_3$ as a sublattice such that $\max L = \max M_3$ is non-normal; we
conjecture that the converse also holds.   Interestingly, the other
canonical non-distributive lattice $N_5$ is normal.

Next, we examine algorithms whose runtime is bounded by the GLVV bound.  
We propose a
novel methodology for designing such algorithms, starting from the
observation that such an algorithm must provide a proof of the query's
upper bound, equivalently, a proof of an inequality of the form
$\sum_j w_j h(\vars(R_j)) \geq h(\vars(Q))$, where $R_1,
R_2, \ldots$ are the input relations.  In the case of a Boolean
algebra, Shearer's lemma~\cite{MR859293} is of this form; in a normal lattice this 
corresponds to a fractional edge cover of the co-atomic hypergraph; and, for a
general lattice it is a general inequality. 
A key motivation behind 
NPRR~\cite{DBLP:conf/pods/NgoPRR12} was to prove inequalities algorithmically.
This paper completes the cycle by proceeding in the opposite direction: 
given a proof method for such inequalities, we
design algorithms whose steps correspond to the proof steps.  We
design three such algorithms, corresponding to three methods for
proving the above type of inequalities.

The first algorithm called the {\em chain algorithm} runs within the {\em
chain bound}; the proof technique is adapted from 
Radhakrishnan's proof~\cite{radhakrishnan} of Shearer's lemma to general
lattices. Both the chain bound and algorithm strictly generalize AGM-bound and
worst-case optimal algorithms for join queries {\em without} FDs.
The second algorithm, called the {\em sub-modular algorithm}, runs within
the {\em sub-modularity bound}; the proof technique is that of
Balister and
Bollob{\'{a}}s's~\cite{DBLP:journals/combinatorica/BalisterB12}.
The third algorithm, called the {\em conditional sub-modularity algorithm}
(CSMA) runs within the general {\em GLVV bound},
up to a polylogarithmic factor; the proof technique is our own, based on linear
programming duality.
In addition to being able to achieve the most general bound, CSMA can be
used straightforwardly to handle input relations with known maximum degree
bounds.
We remark that GLVV bound is stronger than both the other two bounds. We
show that they are tight, and thus our algorithms are worst-case optimal,
for distributive lattices.

%

\paragraph*{Outline.} The paper is organized as follows.  Background material
is reviewed in Sec.~\ref{sec:background}, and basic definitions for
our lattice-based approach are given in Sec.~\ref{sec:lattice}.  We
describe the main result on normal lattices in Sec.~\ref{sec:normal},
then present our three algorithms and bounds in Sec.~\ref{sec:proof:algorithms}.


\section{Notations and Prior Results}
\label{sec:background}

For any positive integer $n$, $[n]$ denotes the set $\{1,\dots,n\}$.
All $\log$ in the paper are of base $2$.
We fix a relational schema $\bR = \{R_1,\ldots,R_m\}$ whose attributes
belong to a set of attributes $\bX = \{x_1, \ldots, x_k\}$.  We refer
to $x_i \in \bX$ interchangeably as an {\em attribute} or a {\em
  variable}; similarly we refer to $R_j$ as a {\em relation}, or an
{\em input}.  We use lower case letters $x \in \bX$ to denote single
variables, and upper case letters $X \subseteq \bX$ to denote sets of
variables. The {\em domain} of variable $x$ is denoted by $\Dom(x)$. 
For each relation $R_j$ we denote 
$\vars(R_j) = X_j \subseteq \bX$ its
set of attributes, and sometimes blur the distinction between $R_j$
and $X_j$, writing, with some abuse, $R_j \subseteq \bX$.  We consider
full conjunctive queries without self-joins:
\begin{align}
  Q(x_1, \ldots, x_k) \cd R_1(X_1), \ldots, R_m(X_m) \label{eq:q}
\end{align}
%
We will drop variables from the head, since it is understood that all
variables need to be listed.

A database instance $D$ consists of one relational instance $R_j^D$
for each relation symbol; we denote $N_j = |R_j^D|$,
$N=|D| = \sum_j N_j$, and use lower case for logs, $n_j = \log_2 N_j$.
We denote $Q^D$ the answer to $Q$ on the database instance $D$.  A
{\em product database instance} is a database instance such that
$R_j^D = \prod_{x_i \in R_j} \Dom(x_i)$ for $j\in [m]$;
the query answer on a product database is the cross product of all domains,
$Q^D = \prod_{i=1}^{k} \Dom(x_i)$.

\paragraph*{The AGM Bound.} A series of results over the last ten
years~\cite{GM06,AGM,DBLP:conf/pods/NgoPRR12,skew,LFTJ} have
established tight connections between the maximum output size of a
query and the runtime of a query evaluation algorithm.  The {\em query
  hypergraph} of a query $Q$ is $H_Q = (\bX, \bR)$; its nodes are the
variables and its hyperedges are the input relations (where each $R_j
\in \bR$ is viewed as a set of variables).  Consider the following two
linear programs (LP's), called weighted fractional edge cover LP and
vertex packing LP, respectively:
%
\begin{equation*}
   \begin{array}{lrlll@{\hskip 0.5in}lrlll}
      \multicolumn{5}{l}{\text{(Weighted) Fractional Edge Cover}} &\multicolumn{5}{l}{\text{(Weighted) Fractional Vertex Packing}} \\
      \text{minimize} &\multicolumn{1}{l}{\sum_j w_jn_j} &    &    &&\text{maximize} & \multicolumn{1}{l}{\sum_i v_i} \\
                      & \sum_{j: x_i \in R_j} w_j        &\geq& 1, & \forall i\in [k] &&   \sum_{i:  x_i \in R_j} v_i &\leq& n_j,&\forall j \in [m]\\
                      & w_j                              &\geq& 0, & \forall j
   \in [m]&& v_i &\geq& 0, &\forall i \in [k] \end{array} \end{equation*}
We call a feasible solution to the first LP a {\em fractional edge
cover}, and to the second a {\em fractional vertex packing}.  
The traditional (unweighted) notions correspond to $n_j=1, \forall j$.

\begin{thm}[AGM
  bound]\cite{GM06,AGM} \label{th:agm}
  (1) Let $(w_j)_{j=1}^m$ be a fractional edge cover.  Then, for any
  input database $D$ s.t. $|R^D_j| \leq N_j$ for all $j\in[m]$, the
  output size of $Q$ is bounded by $2^{\sum_j w_jn_j}$. In other
  words, $|Q^D| \leq \prod_j N_j^{w_j}$.  (2) Let $v_i, i \in [k]$ be
  a fractional vertex packing, and let $D$ be the product database
  instance where $D_i = [2^{v_i}]$.  Then, $|Q^D| = \prod_i 2^{v_i}$.
\end{thm}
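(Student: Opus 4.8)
The plan is to prove the two parts by entirely different means: part~(1) is an information-theoretic upper bound obtained from Shearer's lemma applied to the uniform distribution on the output, while part~(2) is essentially a direct computation showing that on a product instance the output is the full cross product.

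For part~(1), I would place the uniform probability distribution on the output set $Q^D$ and let $H$ denote the Shannon entropy of the resulting joint distribution over $x_1,\ldots,x_k$. Since the distribution is uniform on $Q^D$, the joint entropy of all variables equals $\log|Q^D|$. The first step is to bound each marginal entropy: every output tuple, projected onto the attributes $X_j$ of $R_j$, must lie in $R_j^D$, so the marginal on $X_j$ is supported on a set of size at most $|R_j^D|\le N_j$, and the entropy of any distribution supported on such a set is at most $\log N_j = n_j$. Hence $H(X_j)\le n_j$ for every $j$. The second, central, step is Shearer's lemma: since $(w_j)_j$ is a fractional edge cover, every variable $x_i$ satisfies $\sum_{j:\, x_i\in R_j} w_j \ge 1$, and Shearer's inequality gives $H(x_1,\ldots,x_k)\le \sum_j w_j H(X_j)$. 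Combining the two steps yields $\log|Q^D|\le \sum_j w_j n_j$, i.e. $|Q^D|\le \prod_j N_j^{w_j}$.

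The only nontrivial ingredient is Shearer's lemma itself, which I would establish from the chain rule for entropy. Fixing any linear order on the variables, write $H(x_1,\ldots,x_k)=\sum_i H(x_i\mid x_{<i})$; for each relation, $H(X_j)\ge \sum_{i\in X_j} H(x_i\mid x_{<i})$, because the prefix $x_{<i}$ contains $\{x_{i'}:\, i'\in X_j,\ i'<i\}$ and conditioning on a superset only decreases entropy. Multiplying by $w_j$, summing over $j$, and exchanging the order of summation, the coefficient of each $H(x_i\mid x_{<i})$ becomes $\sum_{j:\, x_i\in R_j} w_j\ge 1$, which recovers the inequality. This chain-rule-plus-submodularity argument is where all the work sits, so I regard it as the main obstacle, although it is by now standard.

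For part~(2), the argument is almost immediate from the definition of a product instance given above. With $D_i=[2^{v_i}]$ and $R_j^D=\prod_{x_i\in R_j} D_i$, the output is the full cross product $Q^D=\prod_{i=1}^k D_i$, so $|Q^D|=\prod_i 2^{v_i}$. I would also record that this instance is a legitimate tightness witness: the vertex-packing constraint $\sum_{i:\, x_i\in R_j} v_i\le n_j$ forces $|R_j^D|=2^{\sum_{i\in R_j} v_i}\le 2^{n_j}=N_j$. Finally, although the statement presents the two bounds separately, LP duality between the fractional edge cover and vertex packing programs equates their optima, so the upper bound of~(1) and the lower bound of~(2) meet, which is what makes the AGM bound tight. (If the $v_i$ or $n_j$ are not integral one rescales and rounds, which perturbs the bound only by a lower-order factor.)
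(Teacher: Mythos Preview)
Your proof is correct and is precisely the standard argument; the paper does not supply its own proof of this theorem but cites it as background from~\cite{GM06,AGM}, and your entropy-plus-Shearer approach (with Shearer proved via the chain-rule/conditioning argument, i.e.\ Radhakrishnan's proof that the paper later adapts in the chain-bound section) is exactly the intended one. Your remark on integrality of the $v_i$ also matches the paper's parenthetical note following the theorem statement.
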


(In the statement above, we ignore the issue of integrality of the $v_i$
for the sake of clarity. There is a bit of loss when $v_i$ are not integers,
but this fact does not affect the asymptotics of the lowerbound \cite{AGM}.)
By strong duality, these two LPs have the same optimal objective
value, denoted by $\rho^*(Q,(N_j)_{j=1}^m)$~\cite{grohe2013bounds}.
Let $(w_j^*)_{j=1}^m$ be the optimal edge cover.  The {\em AGM bound}
of the query $Q$ is 
\[ \AGM(Q,(N_j)_{j=1}^m) = 2^{\rho^*(Q,(N_j)_{j=1}^m)} = \prod_j N_j^{w_j^*},\] or
just $\AGM(Q)$ when the cardinalities $(N_j)_{j=1}^m$ are clear from
the context.  The query's output size is always $\leq \AGM(Q)$, and
this bound is tight, because on the product database described above,
the output is $\AGM(Q)$.  It is easy to check that $\AGM(Q) = \min_w
\prod_j N_j^{w_j}$, where $w$ ranges over the vertices of the edge
cover polytope.  For example, for $Q=R(x,y),S(y,z),T(z,x)$ the edge
cover polytope has vertices $\left\{ (\frac 1 2,\frac 1 2,\frac 1 2),
  (1,1,0),(1,0,1),(0,1,1) \right\}$, thus,
\begin{align}
\AGM(Q) =
\min (\sqrt{N_R N_S N_T},N_R N_S, N_R N_T, N_S N_T) \label{eq:shearer}
\end{align}

Several query evaluation algorithms have been described in the
literature with runtime\footnote{$\tilde O$ means up to a logarithmic
factor.} $\tilde O(N+\AGM(Q))$: NPRR~\cite{DBLP:conf/pods/NgoPRR12},
LFTJ~\cite{LFTJ},
Generic-join~\cite{skew}.

\paragraph*{Functional Dependencies.} 
A query with functional dependencies is a pair $(Q,\FD)$, where $Q$ is
a query and $\FD$ is a set of {\em functional dependencies} (fd),
which are expressions of the form $U \rightarrow V$ where $U,V
\subseteq \bX$.  An fd can be either defined by some relation
$R_j(X_j)$, in which case we call it {\em guarded} (in particular $U,
V \subseteq X_j$), or can be defined by a UDF (as we saw in
Sec.~\ref{subsec:motivation}), and then we call it {\em unguarded}.  A
{\em simple} fd is of the form $u \rightarrow v$ where both $u, v$ are
variables, and a {\em simple key} is a simple fd guarded in some $R_j$
s.t. $u$ is a key for $R_j$.

Output size bounds in the presence of a set of fd's was studied
in~\cite{GLVV}. Their bound is defined only in terms of the maximum
cardinality, $\max_j N_j$, but in this paper we generalize the
discussion to all cardinalities $(N_j)_j$.  The key technique
introduced in~\cite{GLVV} consists of using information theory to
upper bound the size of the query, as reviewed next.

Let $Q^D$ be the query answer over some instance $D$.  Define a
probability distribution over $\prod_{i=1}^k D_i$  by randomly drawing one tuple
from $Q^D$ with probability $1/|Q^D|$ each.  Under this distribution, the joint
entropy of the $k$ random variables $\bX$ is $H(\bX)= \log_2 |Q^D|$.
Each subset of variables $X \subseteq \bX$ defines a (marginal)
distribution, with entropy $H(X)$.  If $X = X_j$ ($= \vars(R_j)$),
then $H$ must satisfy the following {\em cardinality
constraint} $H(X_j) \leq \log_2 N_j = n_j$, because $\Pi_{X_j}(Q^D)
\subseteq R_j^D$ and the marginal entropy is bounded above by the
uniform marginal entropy. With some abuse we write $H(R_j) \leq \log_2
N_j$, blurring the distinction between $R_j$ and $X_j$.  In addition,
for any fd $U \rightarrow V$ the entropy must satisfy the {\em
  fd-constraint} $H(U) = H(UV)$.

We give here a very simple illustration of how the approach
in~\cite{GLVV} models the query output using entropy, by illustrating
on the query $Q(x,y,z) = R(x,y),S(y,z),T(z,x)$ (without fd's) and
output with five outcomes:

\begin{center}
\begin{tabular}[t]{|l|l|l|l} \cline{1-3} $x$ & $y$ & $z$ \\
\cline{1-3} $a$ & $3$ & $r$ & $1/5$\\ $a$ & $2$ & $q$ & $1/5$\\ $b$ &
$2$ & $q$ & $1/5$\\ $d$ & $3$ & $r$ & $1/5$\\ $a$ & $3$ & $q$ &
$1/5$\\ \cline{1-3}
\end{tabular}
\begin{tabular}[t]{|l|l|l} \cline{1-2} $x$ & $y$ \\ \cline{1-2} $a$ &
$3$ & $2/5$ \\ $a$ & $2$ & $1/5$ \\ $b$ & $2$ & $1/5$ \\ $d$ & $3$ &
$1/5$ \\ \cline{1-2}
\end{tabular}
\begin{tabular}[t]{|l|l|l} \cline{1-2} $y$ & $z$ \\ \cline{1-2} $3$ &
$r$ & $2/5$ \\ $2$ & $q$ & $2/5$ \\ $3$ & $q$ & $1/5$ \\ $4$&$q$&0\\ \cline{1-2}
\end{tabular}
\begin{tabular}[t]{|l|l|l} \cline{1-2} $x$ & $z$ \\ \cline{1-2} $a$ &
$r$ & $1/5$ \\ $a$ & $q$ & $2/5$ \\ $b$ & $q$ & $1/5$ \\ $d$ & $r$ &
$1/5$ \\ \cline{1-2}
\end{tabular}
\end{center}

Here $H(xyz) = \log 5$, $H(xy) \leq \log |R| = \log 4$, $H(yz) \leq
\log |S| = \log 4$, and $H(xz) \leq \log |T| = \log 4$.

\paragraph*{GLVV Bound.} 
Gottlob et al.~\cite{GLVV} observe that $\log |Q^D| \leq \max_H
H(\bX)$, where $H$ ranges over all entropic functions that satisfy the
cardinality constraints and the fd-constraints, and this bound has
recently been shown to be tight~\cite{DBLP:journals/corr/GogaczT15}.
However, computing this upper bound is extremely difficult, because of
a long standing open problem in information theory: the
characterization of the cone of the closure of the set of entropic
functions.  To circumvent this difficulty, in~\cite{GLVV} the entropic
function $H$ is replaced with a polymatroid function $h$.  A {\em
  polymatroid} over a set of variables $\bX$ is a function $h : 2^\bX
\rightarrow \R^+$ that satisfies the following inequalities, called
{\em Shannon inequalities}:
\begin{align*}
  h(X) + h(Y) \geq & \ h(X \cap Y) + h(X \cup Y) &&  \text{Sub-modularity} \\
  h(X\cup Y) \geq & \ h(X) && \text{Monotonicity} \\
  h(\emptyset) = & \ 0 && \text{Zero}
\end{align*}
Every entropic function $H$ is a polymatroid, and the converse fails
when $\bX$ has four or more
variables~\cite{DBLP:journals/tit/ZhangY98}.  The bound introduced
in~\cite{GLVV} is $\GLVV(Q, \FD, (N_j)_j) \defeq \max_h h(\bX)$, where
$h$ ranges over all polymatroids that satisfy all cardinality
constraints $h(X_j) \leq n_j$, and all fd-constraints $h(UV)=h(U)$ for
$U \rightarrow V \in \FD$.  We abbreviate it $\GLVV(Q)$ when $\FD$ and
$(N_j)_j$ are clear from the context.  Clearly $\GLVV(Q)$ is an upper
bound on $|Q^D|$, and it is open whether the bound is always tight.

\paragraph*{Closure.} Fix a set $\FD$.  The {\em closure} of a set $X
\subseteq \bX$ is the smallest set $X^+$ such that $X\subseteq X^+$,
and that $U\to V \in \FD$ and $U\subseteq X^+$ imply $V \subseteq X^+$.
The intersection $X^+ \cap Y^+$ of closed
sets is closed, hence the family $\setof{X^+}{X \subseteq \bX}$ is a
{\em closure}.

Denote by $Q^+$ the query obtained by replacing each relation
$R_j(X_j)$ with $R_j(X_j^+)$, then forgetting all functional
dependencies; it is easy to check that $\max_D |Q^D| \leq \AGM(Q^+)$.
This bound is tight when all fd's are simple keys, because any
product database over the schema of $Q^+$ can be converted into a
database over the schema for $Q$ that satisfies all simple keys.
Thus, for simple keys, $\AGM(Q^+)$ is a tight upper bound on $|Q^D|$.
Theorem 4.4 in~\cite{GLVV} uses query coloring to prove essentially
the same result.  For a simple illustration, consider $Q \cd
R(x,y),S(y,z),T(z,u),K(u,x)$, where $\AGM(Q) =
\min(|R|\cdot|T|,|S|\cdot|K|)$.  If we define $y$ to be a key in $S$,
in other words $\FD = \set{y\rightarrow z}$, then $Q^+ =
R(x,y,z),S(y,z),T(z,u),K(u,x)$ and $AGM(Q^+) = \min
\set{|R|\cdot|T|,|S|\cdot|K|,|R|\cdot|K|}$.  However, for fd's other
than simple keys this technique fails, as illustrated by $Q(x,y,z) =
R(x),S(y),T(x,y,z)$ where $xy$ is a key in $T$ (in other words $xy
\rightarrow z$): when $|R|=|S|=N$, $|T| = M \gg N^2$ then $Q^+=Q$ and
$\AGM(Q^+) = M$, yet one can easily verify that $|Q^D| \leq N^2$.



\paragraph*{The Expansion Procedure.} All our algorithms use the
following simple subroutine, called an expansion.  Fix a relation
$R(X)$ and a database instance $D$.  $R$ may be an input relation, or
some intermediate relation generated during query evaluation.  An {\em
  expansion} of $R^D$ is some relation $(R^+)^D$ over attributes $X^+$
such that $\Pi_{X^+}(Q^D)\subseteq(R^+)^D$ and $\Pi_X((R^+)^D)
\subseteq R^D$.  If $X^+ = X$, then the expansion could be $R^D$
itself, or any partial semi-join reduction that removes dangling
tuples from $R^D$ (which do not join with tuples in other relations).
If $X \neq X^+$, then the expansion fills in the extra attributes, by
repeatedly applying functional dependencies $X\rightarrow y$: if the
fd is guarded in $R_j$, then it joins $R$ with $\Pi_{Xy}(R_j)$;
otherwise, if the fd corresponds to a UDF $y = f(X)$ then it simply
computes $f$ for each tuple in $R$.  The expansion of
$R$ can be done in time $\tilde O(N)$ using standard techniques.
%

%


\section{A Lattice-based Approach}
\label{sec:lattice}

\subsection{Lattice representation of queries with FDs}
\label{subsec:lattice:rep}

Fix a query with functional dependencies $(Q,\FD)$, over variables
$\bX$.  It is well
known~\cite{DBLP:journals/dam/DemetrovicsLM92,DBLP:journals/actaC/Levene95,Harremoes2011g}
that the set of closed sets forms a lattice:

\begin{defn}[Lattice representation]
  The lattice associated to $\FD$ is $\bL_{\FD} = (L,\preceq)$, where $L$
  consists of all closed sets and the partial order
  $\preceq$ is $\subseteq$.
  The {\em lattice representation} of a query $(Q,\FD)$ is the pair
  $(\bL_{\FD}, \bR^+)$, where $\bR^+ = \set{R_1^+, \ldots, R_m^+}
  \subseteq L$ is the set of closures of the attributes of the input
  relations.
We drop the subscript $\FD$ when it is clear from the context and write simply
$\bL$.  
With the expansion procedure in place, w.l.o.g. we 
assume that all the input $R_j$ are closed sets; thus, the lattice
representation of the query can be denoted simply by
$(\bL, \bR)$, where $\bR = \set{R_1, \ldots, R_m} \subseteq L$, and
$\bigvee \bR = \bigvee_{R_j\in \bR} R_j = \hat 1$.
\end{defn}

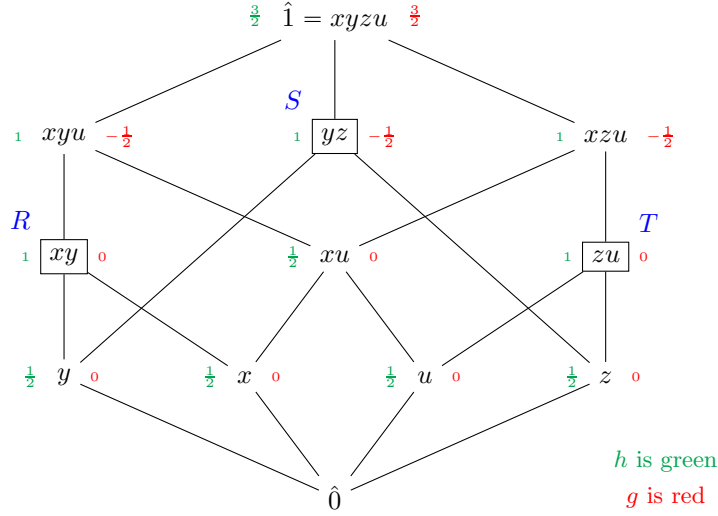
\begin{figure}[t]
   \centering
\begin{tikzpicture}[domain=0:20, scale=0.8]
  \node[Green] at (10,0.6) (0) {\small $h$ is green};
  \node[red] at (10,0) (0) {\small $g$ is red};
  \node[] at (4.5,0) (0) {$\hat 0$};
  \node[] at (0,2) (y) {$y$};
  \node[] at (3,2) (x) {$x$};
  \node[] at (6,2) (u) {$u$};
  \node[] at (9,2) (z) {$z$};
  \node[rectangle,draw] at (0,4) (xy) {$xy$};
  \node[] at (4.5,4) (xu) {$xu$};
  \node[rectangle,draw] at (9,4) (zu) {$zu$};
  \node[] at (0,6) (xyu) {$xyu$};
  \node[rectangle,draw] at (4.5,6) (yz) {$yz$};
  \node[] at (9,6) (xzu) {$xzu$};
  \node[] at (4.5,8) (1) {$\hat 1 = xyzu$};

  \node[thick, Green, left=0 of y] {\tiny $\frac 1 2$};
  \node[thick, red, right=0 of y] {\tiny $0$};
  \node[thick, Green, left=0 of x] {\tiny $\frac 1 2$};
  \node[thick, red, right=0 of x] {\tiny $0$};
  \node[thick, Green, left=0 of u] {\tiny $\frac 1 2$};
  \node[thick, red, right=0 of u] {\tiny $0$};
  \node[thick, Green, left=0 of z] {\tiny $\frac 1 2$};
  \node[thick, red, right=0 of z] {\tiny $0$};
  \node[blue,thick, above left=0 of xy] {$R$};
  \node[blue,thick, above left=0 of yz] {$S$};
  \node[blue,thick, above right=0 of zu] {$T$};
  \node[thick, Green, left=0 of xy] {\tiny $1$};
  \node[thick, red, right=0 of xy] {\tiny $0$};
  \node[thick, Green, left=0 of zu] {\tiny $1$};
  \node[thick, red, right=0 of zu] {\tiny $0$};
  \node[thick, Green, left=0 of xu] {\tiny $\frac 1 2$};
  \node[thick, red, right=0 of xu] {\tiny $0$};
  \node[thick, Green, left=0 of xyu] {\tiny $1$};
  \node[thick, red, right=0 of xyu] {\tiny $-\frac 1 2$};
  \node[thick, Green, left=0 of yz] {\tiny $1$};
  \node[thick, red, right=0 of yz] {\tiny $-\frac 1 2$};
  \node[thick, Green, left=0 of xzu] {\tiny $1$};
  \node[thick, red, right=0 of xzu] {\tiny $-\frac 1 2$};
  \node[thick, Green, left=0 of 1] {\tiny $\frac 3 2$};
  \node[thick, red, right=0 of 1] {\tiny $\frac 3 2$};

  \path[] (0) edge (x);
  \path[] (0) edge (y);
  \path[] (0) edge (z);
  \path[] (0) edge (u);
  \path[] (x) edge (xy);
  \path[] (x) edge (xu);
  \path[] (u) edge (xu);
  \path[] (u) edge (zu);
  \path[] (z) edge (yz);
  \path[] (z) edge (zu);
  \path[] (y) edge (yz);
  \path[] (y) edge (xy);
  \path[] (xy) edge (xyu);
  \path[] (xu) edge (xyu);
  \path[] (xu) edge (xzu);
  \path[] (zu) edge (xzu);
  \path[] (xyu) edge (1);
  \path[] (yz) edge (1);
  \path[] (xzu) edge (1);

  \end{tikzpicture}
   \caption{$Q \cd R(x,y),S(y,z),T(z,u),xz\rightarrow u,yu\rightarrow x$}
   \label{fig:q1}
\end{figure}

Note that the
size of the lattice may be exponential in that of the query, but this
does not affect our complexity results, because they are given in
terms of data complexity only.  If $Q$ has no functional dependencies,
then $\bL$ is the Boolean algebra $2^\bX$.
We will use Fig.\ref{fig:q1} as a running example, which
illustrates the lattice for query~\eqref{eqn:running}:
the lattice elements corresponding to input relations are framed.

We briefly review some notions of lattice theory needed later and
refer the reader to~\cite{MR2868112} for an extensive treatment.
Let $\wedge, \vee, \hat 0, \hat 1$ denote the greatest lower bound ({\em meet}),
least upper bound ({\em join}), minimum and maximum elements of the lattice.
The following hold in $\bL_{\FD}$: $X \wedge Y = X \cap Y$, $X \vee Y =
(X \cup Y)^+$, $\hat 0 = \emptyset$, and $\hat 1 = \bX$. 
We write $X \incomp Y$ to mean $X$ and $Y$ are incomparable.
An element $U$ is said to {\em cover} an element $V$ of $L$ if $U \succ V$ and 
$U \succeq W \succeq V$ 
implies $W=U$ or $W = V$; an {\em atom} is an element $X$ that covers $\hat 0$; 
a {\em co-atom} is an element covered by $\hat 1$; $X$ is called a {\em
join-irreducible} if $Y \vee Z = X$ implies $Y=X$ or $Z=X$; $X$ is
called {\em meet-irreducible} if $Y \wedge Z = X$ implies $Y=X$ or
$Z=X$.  For any $X\in L$, let $\Lambda_X \defeq \setof{Z}{Z \preceq X,
  Z \text{ is a join-irreducible}}$.  The mapping $X \mapsto \Lambda_X$
defines a $1$-$1$ mapping from $L$ to the order ideals of the poset of
join-irreducibles, with inverse $\Lambda_X \mapsto \bigvee \Lambda_X =
X$. 

Lattice presentations and queries with functional dependencies are in
1-to-1 correspondence, up to the addition/removal of variables
functionally equivalent to some other set of variables.  To see this
in one direction, consider any pair $(\bL,\bR)$ where $\bR =
\set{R_1,\ldots, R_m} \subseteq L$ and $\bigvee \bR = \hat 1$. Let $\bX$
be the join-irreducibles of $\bL$, define a query $(Q,\FD)$ 
as follows. Let $R_j$ be a relation with attributes $\Lambda_{R_j}$, and 
define $\FD$ such that the closed sets are 
$\setof{\Lambda_U}{U \in L}$ (in other
words $\FD \defeq \setof{X \rightarrow \Lambda_{\bigvee X}}{X
\subseteq \bX}$). One can check that the lattice presentation of
this query is isomorphic to $(\bL,\bR)$.  In the other direction,
consider a query with functional dependencies, $(Q,\FD)$.  Call a
variable $x$ redundant if $Y \leftrightarrow x$ for some set $Y$ that
does not contain $x$.  W.l.o.g. we can remove all redundant variables
from $Q$ (and decrease accordingly the arity of the relations
containing $x$), because the values of a redundant variable can be
recovered through expansion (Sec.\ref{sec:background}).  We claim $x
\mapsto x^+$ is a 1-to-1 mapping between the variables of $Q$ and the
join-irreducibles of $\bL_\FD$.  We first check that $x^+$ is
join-irreducible: if $x^+ = Y \vee Z = (Y \cup Z)^+$ with $x \not\in
Y, x \not\in Z$ then $x \leftrightarrow Y \cup Z$ contradicting the
fact that $x$ is not redundant.  For injectivity, if $x \neq y$, then
$x^+ = y^+$ implies $x \leftrightarrow y^+ - \set{x}$, again
contradicting non-redundancy.  Finally, surjectivity follows from the
fact that $Y = \bigvee_{x \in Y} x^+$ for any closed set $Y \in L$: if
$Y$ is join-irreducible then $Y = x^+$ for some $x \in Y$, proving
surjectivity.
%
%

Thus, in the rest of the paper we shall freely switch between queries
and lattices, using the following correspondence:

\begin{center}
\begin{tabular}{|l|l|} \hline
variable $x \in \bX$                    & join-irreducible $X \in L$ \\
input $R_j(X_j)$                        & input $R_j \in L$ with variables $\Lambda_{R_j}$ \\
set of variables $X \subseteq \bX$      & set of join-irreducibles $X$ \\
its closure $X^+$                       &   $X^+\defeq \Lambda_Z$, where $Z=\bigvee X$\\ 
FD$=\emptyset$                          & Boolean Algebra \\ \hline
\end{tabular}
\end{center}

The following simple observation illustrates the power of the lattice formalism.

\bprop If all functional dependencies are simple then $\bL$ is a
distributive lattice.
\label{prop:simple:FD:implies:distributive:lattice}
\eprop
\bp Construct a directed graph $D=(V,E)$ where $V$ is the set of all
variables and there is an edge $(a,b) \in D$ if $a \to b$ is a
functional dependency. Collapse each strongly connected component of
this graph into a single vertex. We are left with a directed acyclic
graph (DAG), which defines a poset $P$ where $y \prec_P x$ iff there
is a directed path from $x$ to $y$ in the DAG.  Then, $L$ is precisely
the order ideal lattice of the direct sum of these posets. The claim
follows because any order ideal lattice is distributive.  
\ep 

It is easy to construct examples when non-simple FDs still produce a
distributive lattice; for example, the lattice for
$Q \cd R(x),S(y),xy\to z$ is isomorphic to the Boolean algebra lattice
$2^{x,y}$ which is distributive.
Hence, the class of distributive lattices strictly covers the simple FD
case.

Consider the non-distributive
lattice $M_3$ (Fig.~\ref{fig:non:normal}), then the procedure we
discussed above associates it with the following query:
$R(x),S(y),T(z), xy \to z, xz \to y, yz \to x$.

\subsection{Database Instances for a Lattice}
\label{subsec:db-instances}

Let $\bX$ denote the set of join-irreducibles of a lattice $\bL$, and
let $\FD = \setof{X \rightarrow \Lambda_{\bigvee X}}{X \subseteq \bX}$
be all fd's implicitly defined by $\bL$.  A {\em database instance $D$
for $\bL$} is a relation with attributes $\bX$ that satisfies $\FD$.
We denote $h_D : L \rightarrow \R^+$ its entropy function, as
defined in Sec.~\ref{sec:background}. 
If $(\bL, \bR)$ is the lattice
presentation of a query $(Q,\FD)$, then any database instance for
$\bL$ defines a standard database instance for $Q$ by
$R^D_j \defeq \Pi_{R_j}(D)$.  For example, given the query
$$Q \cd R(x),S(y),T(z),xy\rightarrow z,xz\rightarrow
y,yz\rightarrow x,$$
An example of an instance for its lattice (which is $M_3$ in
Fig.~\ref{fig:non:normal}) is:
$$D = \setof{(i,j,k) \in [N]^3}{i+j+k \mod N=0}.$$
It defines the following standard instance: $R^D=S^D=T^D=[N]$.  Notice
that all fd's are unguarded and hence lost in the standard instance.
From Sec.~\ref{sec:ca} and up we will assume that during the query
evaluation we have access to the UDF's that defined the unguarded
fd's.

\bdefn[Materializability]
A function $h : \bL \rightarrow \R^+$ is called {\em $\bL$-entropic} if
$h = h_D$ for some instance $D$ for $\bL$ and it is called {\em
materializable} w.r.t. an input query $(\bL,\bR)$
if there exists an instance $D$ such that
$\log |R_j^{D}| \leq h(R_j)$ forall $j=1,m$ and
$\log |Q^D| \geq h(\hat 1)$.  Obviously, any entropic function is
materializable w.r.t. some input instance.
\label{defn:Materializability}
\edefn

\subsection{The Lattice Linear Program} 
\label{subsec:llp}

This section shows that a very simple linear program defined on the FD lattice 
$\bL$ is equivalent to GLVV bound.
Given a query $(\bL, \bR)$ and log cardinalities $(n_j)_{j=1}^m$, we
define the following {\em Lattice Linear Program}, or LLP, over the
{\em non-negative} variables $(h(X))_{X \in L}$:
%
\begin{equation}\label{eq:llp}
\begin{matrix*}[r]
\max & \multicolumn{1}{l}{h(\hat 1)}\\
     & h(X\wedge Y) + h(X \vee Y) - h(X) - h(Y) &\leq& 0, & \forall X,Y\in L, X \incomp Y.\\
     & h(R_j) &\leq& n_j, & \multicolumn{1}{l}{\forall j \in [m].}
\end{matrix*}
\end{equation}
%

A feasible solution $h$ to LLP is called a (non-negative) {\em
$\bL$-sub-modular} function. 
If $h$ is also {\em $\bL$-monotone}, i.e. $X \preceq Y$ implies $h(X) \leq h(Y)$,
then $h$ is called an {\em $\bL$-polymatroid}. 
When $\bL$ is clear from the context we drop the $\bL$-prefix from $\bL$-monotone,
$\bL$-submodular, and $\bL$-polymatroid, respectively.
We did not require $h$ to be monotone because, at optimality,
$h^*$ can always be taken to be a polymatroid thanks to {\em Lov\'asz's
monotonization}: if $h$ is non-negative $\bL$-submodular, then the function
$\bar h(\hat 0) \defeq 0$ and
$\bar h(X) \defeq \min_{Y: X \preceq Y}h(Y)$, $X \neq \zerohat$, is an
$\bL$-polymatroid (see e.g.~\cite{MR1956925}, pp. 774, and
Appendix~\ref{sec:appendix:lattice}) and satisfies:
$\bar h(\hat 1) = h(\hat 1)$ and $\forall X$, $\bar h(X) \leq h(X)$.


\begin{prop}
  \label{prop:upper:bound} Let $h^*$ be an optimal solution
   to the LLP of a query $(\bL,\bR)$ where $N_j=|R_j|$, $j \in [m]$.  Then,
  $2^{h^*(\hat 1)} = \GLVV(Q,\FD,(N_j)_j).$
\end{prop}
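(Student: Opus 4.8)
The plan is to show that the optimal value $h^*(\hat 1)$ of the LLP coincides with $\log_2 \GLVV(Q,\FD,(N_j)_j) = \max_H H(\bX)$, the optimum of the Gottlob et al.\ polymatroid program over the Boolean algebra $2^{\bX}$; exponentiating both sides then yields the claim. I would prove the two inequalities separately, in each case transporting a feasible solution of one program to a feasible solution of the other with the same top value, using the closure map $X \mapsto X^+$ as the dictionary between $2^{\bX}$ and the lattice $\bL$ of closed sets.

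For the inequality $\log_2\GLVV \ge h^*(\hat 1)$, I would start from an optimal LLP solution $h^*$, which by Lov\'asz's monotonization (cited above) may be taken to be an $\bL$-polymatroid, and define $H(X) \defeq h^*(X^+)$ for every $X \subseteq \bX$. Since $\emptyset^+=\hat 0$ and $\bX^+ = \hat 1$, we get $H(\emptyset)=0$ and $H(\bX)=h^*(\hat 1)$. Monotonicity of $H$ is immediate from $X\subseteq Y \Rightarrow X^+\subseteq Y^+$ together with $\bL$-monotonicity of $h^*$. For submodularity I would use $X^+ \vee Y^+ = (X\cup Y)^+$ and $X^+\wedge Y^+ = X^+\cap Y^+ \supseteq (X\cap Y)^+$, so that $\bL$-submodularity of $h^*$ at the pair $X^+,Y^+$, followed by monotonicity, gives $H(X)+H(Y)\ge H(X\cup Y)+H(X\cap Y)$. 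The fd-constraints hold because $U\to V\in\FD$ forces $V\subseteq U^+$, hence $(UV)^+=U^+$ and $H(UV)=H(U)$; the cardinality constraints hold because each input $R_j$ is already closed, so $H(R_j)=h^*(R_j)\le n_j$. Thus $H$ is GLVV-feasible and $\log_2\GLVV \ge H(\bX)=h^*(\hat 1)$.

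For the reverse inequality I would take any GLVV-feasible polymatroid $H$ on $2^{\bX}$ and restrict it to the closed sets, setting $h\defeq H|_{L}$. The one nontrivial point, which I expect to be the crux of the argument, is the closure-invariance lemma $H(W)=H(W^+)$ for every $W\subseteq \bX$: applying the fds one at a time, if $U\subseteq W$ and $U\to V\in\FD$, then the diminishing-returns form of submodularity gives $0\le H(W\cup V)-H(W)\le H(U\cup V)-H(U)=0$, so each step leaves $H$ unchanged and the whole closure construction does too. Granting this, $\bL$-submodularity of $h$ follows from Boolean submodularity of $H$: for incomparable closed $X,Y$ one has $h(X\wedge Y)+h(X\vee Y)=H(X\cap Y)+H((X\cup Y)^+)=H(X\cap Y)+H(X\cup Y)\le H(X)+H(Y)=h(X)+h(Y)$, where the closure hidden in $X\vee Y=(X\cup Y)^+$ was absorbed by the lemma. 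The cardinality constraints transfer directly since each $R_j$ is closed, and $h(\hat 1)=H(\bX)$, so $h$ is LLP-feasible and $h^*(\hat 1)\ge H(\bX)$.

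Combining the two inequalities gives $h^*(\hat 1)=\max_H H(\bX)=\log_2\GLVV(Q,\FD,(N_j)_j)$, and exponentiating yields $2^{h^*(\hat 1)}=\GLVV(Q,\FD,(N_j)_j)$. The main obstacle is the closure-invariance lemma together with the bookkeeping that the lattice join $\vee$ silently applies a closure: every place where a Boolean union appears in $H$ must be reconciled with a lattice join in $h$, and it is exactly the fd-constraints, through that lemma, that make these two agree.
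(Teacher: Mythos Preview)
Your proposal is correct and follows essentially the same approach as the paper: both directions transport solutions through the closure map $X \mapsto X^+$, and your forward verification of submodularity via $(X\cap Y)^+ \preceq X^+ \wedge Y^+$ together with $\bL$-monotonicity is exactly the paper's argument. Your reverse direction is in fact more detailed than the paper's, which simply asserts that the restriction of a GLVV-feasible polymatroid to $L$ is LLP-feasible; your closure-invariance lemma $H(W)=H(W^+)$ (via diminishing returns and the fd-constraints) makes explicit precisely the step the paper leaves to the reader when reconciling $X\vee Y=(X\cup Y)^+$ with Boolean submodularity.
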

\bp
  From the discussion above, we can assume $h^*$ is an $\bL$-polymatroid.
  Define a function $h : 2^{\bX} \to \R$ by $h(X) = h^*(X^+)$; then, clearly $h$
  is non-negative and monotone (on the original variable set, not on the lattice
  $\bL$), and satisfies the cardinality and fd constraints. 
  We verify that it is also sub-modular:
  for any (not necessarily closed) set $X$ of variables, 
  noting that $(X\cap Y)^+ \preceq X^+ \wedge Y^+$, we have
  \begin{eqnarray*}
     h(X\cup Y)+h(X\cap Y) 
     &=& h^*((X\cup Y)^+) + h^*((X \cap Y)^+)\\
     (\text{monotonicity}) &\leq& h^*((X\cup Y)^+) + h^*(X^+ \wedge Y^+)\\
     &=& h^*(X^+ \vee Y^+) + h^*(X^+ \wedge Y^+)\\
     (\text{$\bL$-submodularity}) &\leq & h^*(X^+) + h^*(Y^+)\\
     &= & h(X) + h(Y).
  \end{eqnarray*}
  Conversely, given any polymatroid $h$ that satisfies the cardinality and
  fd-constraints, the restriction of $h$ on $L$ is a feasible solution to LLP.
\ep

The main goal in this paper is to design algorithms that compute a
query $Q$ in time $\tilde O(2^{h^*(\hat 1)})$.  When the bound in the
proposition is tight (which is currently an open problem) then such an
algorithm is optimal, hence our second goal is to study cases when the
bound is tight, and this happens if and only if the LLP has some optimal
solution $h^*$ that is materializable.  A secondary goal in this paper
is to find sufficient conditions for $h$ to be materializable.

When $\FD=\emptyset$, $\bL$ is a Boolean algebra. In this case
it is easy to see that $\AGM(Q) = 2^{h^*(\hat 1)}$. To see this, in one
direction start from an optimal weighted fractional vertex packing
$(v^*_i)_{i=1}^k$
of the query hypergraph (Sec.~\ref{sec:background}), then the following is
a feasible solution to LLP with the same objective value:
\begin{align}
h(X) \defeq & \sum_{i: x_i \in X} v_i^* \label{eq:bottom:up}
\end{align}
Conversely, given a polymatroid $h^*$ that is optimal to LLP, define 
$v_i \defeq h^*(\{x_1,\dots,x_i\})-h^*(\{x_1,\dots,x_{i-1}\})$, then it can
be verified (using sub-modularity of $h^*$) that $(v_i)_{i=1}^k$ is a 
feasible weighted fractional vertex packing with the same objective value.

%
%

\subsection{Embeddings}

Embeddings allow us to construct an instance for one lattice $\bL$
from an instance for a different lattice $\bL'$.

\begin{defn} \label{def:embedding} An {\em embedding} between two
  lattices $\bL$ and $\bL'$ is a function $f : L \rightarrow L'$ such that
   $f(\bigvee X) = \bigvee f(X)$ forall $X \subseteq L$,\footnote{Equivalently:
   $f$ is the left adjoint of a Galois connection.} and
   $f(\hat 1_{\bL}) = \hat 1_{\bL'}$.  An embedding between two queries
  $f : (\bL, \bR) \rightarrow (\bL', \bR')$ is an embedding from
  $\bL$ to $\bL'$ that is a bijection from $\bR$ to $\bR'$.
\end{defn}

If $f$ is an embedding from $\bL$ to $\bL'$ and $h'$ is a
non-negative, sub-modular function on $\bL'$, then one can check that
$h \defeq h' \circ f$ is also sub-modular.  If $f$ is an embedding
between queries, then their relation symbols are in 1-1
correspondence: $R_1, \ldots, R_m$ and $R_1',\ldots, R_m'$
respectively.  Fix two queries $(Q,\FD), (Q',\FD')$ with variables
$\bX, \bX'$ respectively.  Call a function
$\coloring: 2^{\bX} \rightarrow 2^{\bX'}$ a {\em variable renaming} if
$\coloring(X)=\coloring(X^+)$,
$\coloring(X) = (\bigcup_{x \in X} \coloring(x))^+$ ($\coloring$ is
uniquely defined by its values on single variables), and
$\coloring(\bX) = \bX'$ (all variables in $Q'$ are used).  One can
check that every embedding $f$ defines the variable renaming
$\coloring(X) \defeq f(X^+)$ forall $X \subseteq \bX$, and vice versa.

Fix an embedding $f : \bL \rightarrow \bL'$, and an instance $D'$ for
$\bL'$.  Define the database instance $D \defeq f^{-1}(D')$ as
$D \defeq \setof{t\circ\coloring}{t \in D'}$.  In other words, every
tuple $t' \in D'$ has attributes $\bX'$, hence it can be viewed as a
function $t' : \bX' \rightarrow \text{Dom}$: for each such $t'$ we
include in $D$ a tuple
$t' \circ \coloring: \bX \rightarrow \text{Dom}$, obtained from $t'$
by renaming its attributes.  One can check that $|D| = |D'|$ (because
all variables in $\bX'$ are used), and the same holds for all
projections $|\Pi_X(D)| = |\Pi_{\coloring(X)}(D')|$ forall
$X \subseteq \bX$.  For example, if
$\coloring(x_1) \defeq \set{y_1,y_2}$,
$\coloring(x_2) \defeq \set{y_2,y_3}$ then
$D(x_1,x_2) \defeq \setof{((ab), (bc))}{(a,b,c) \in D'(y_1,y_2,y_3)}$.
This proves:

\begin{prop}
  If $f : \bL \rightarrow \bL'$ is a lattice embedding, then for any
  instance $D'$ for $\bL'$, $h_{D} = h_{D'} \circ f$, where
  $D \defeq f^{-1}(D')$.  If $f: (\bL,\bR) \rightarrow (\bL',\bR')$
  is a query embedding , and $h' : L'\rightarrow \R^+$ is
  materializable, then $h \defeq h' \circ f$ is also materializable.
  \label{prop:entropy-preserving-embedding}
\end{prop}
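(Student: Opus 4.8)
The plan is to prove the two claims in Proposition~\ref{prop:entropy-preserving-embedding} separately, building the first (the entropy-preservation identity) and then leveraging it for the second (preservation of materializability). The first claim is essentially already established by the construction preceding the proposition, so my job is mainly to assemble it cleanly.

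For the first claim, fix a lattice embedding $f : \bL \rightarrow \bL'$ and an instance $D'$ for $\bL'$. I would set $D \defeq f^{-1}(D')$ via the associated variable renaming $\coloring$, as defined in the excerpt. The key observation, already recorded above, is that for every $X \subseteq \bX$ we have $|\Pi_X(D)| = |\Pi_{\coloring(X)}(D')|$, since renaming attributes is a bijection between the tuples and hence preserves the number of distinct projections. Because $\coloring(X) = f(X^+)$ and $h_{D'}$ depends only on projection cardinalities (it is the entropy of the uniform distribution on the projection), I get $h_D(X) = \log |\Pi_X(D)| = \log |\Pi_{f(X^+)}(D')| = h_{D'}(f(X^+))$ for every closed $X \in L$. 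Identifying closed sets $X$ with lattice elements, this is exactly $h_D = h_{D'} \circ f$.

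For the second claim, suppose $f : (\bL,\bR) \rightarrow (\bL',\bR')$ is a query embedding and $h' : L' \rightarrow \R^+$ is materializable. By definition of materializability (Definition~\ref{defn:Materializability}), there is an instance $D'$ for $\bL'$ with $\log|(R_j')^{D'}| \leq h'(R_j')$ for all $j$ and $\log|Q'^{D'}| \geq h'(\hat 1_{\bL'})$. I would let $D \defeq f^{-1}(D')$ and verify the two materializability inequalities for $h = h' \circ f$. Since $f$ restricts to a bijection $\bR \to \bR'$, write $R_j' = f(R_j)$; then by the first claim $\log|R_j^D| = h_D(R_j) = h'(f(R_j)) = h'(R_j') \geq \log|R_j^D|$ need not hold directly from $D'$, so instead I bound $\log|R_j^D| = h_{D'}(f(R_j)) \leq h'(f(R_j)) = h(R_j)$, using that $D'$ witnesses $h'$'s cardinality constraint at $R_j'$. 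For the output, $\log|Q^D| = h_D(\hat 1_{\bL}) = h'(f(\hat 1_{\bL})) = h'(\hat 1_{\bL'}) \leq \log|Q'^{D'}|$, and since $D'$ materializes $h'$ we have $\log|Q'^{D'}| \geq h'(\hat 1_{\bL'}) = h(\hat 1_{\bL})$, giving $\log|Q^D| \geq h(\hat 1_{\bL})$. Thus $D$ materializes $h$.

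The step I expect to require the most care is matching the projection-cardinality identity to the output-size and relation-size bounds under the bijection $\bR \to \bR'$, since materializability compares $h$ against $\log$ of the \emph{query's} relation sizes $|R_j^D|$ rather than against $h_D$ directly. The identity $h_D = h_{D'}\circ f$ does the heavy lifting, but I must be careful that $f(\hat 1_{\bL}) = \hat 1_{\bL'}$ (guaranteed by the embedding definition) so that the top element is preserved and the output inequality goes through. The monotonicity/submodularity preservation noted right before the proposition ensures $h$ is a legitimate feasible object, but the genuinely new content is simply threading the cardinality identity through the definition of materializability.
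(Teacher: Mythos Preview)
Your overall plan matches the paper's (very terse) treatment, but there is a genuine error in how you compute $h_D$. You assert that ``$h_{D'}$ depends only on projection cardinalities (it is the entropy of the uniform distribution on the projection),'' and conclude $h_D(X) = \log|\Pi_X(D)|$. This is false: $h_D(X)$ is the Shannon entropy of the \emph{marginal} on $X$ induced by the uniform distribution on $D$, and that marginal is not uniform on its support in general (see the worked example in Sec.~\ref{sec:background}, where the marginal on $xy$ has a tuple of probability $2/5$). The correct justification of $h_D = h_{D'}\circ f$ is that the renaming $t' \mapsto t'\circ\coloring$ is a bijection $D' \to D$ which, for every $X\in L$, carries the marginal distribution on $f(X)$ bijectively to the marginal distribution on $X$, preserving \emph{probabilities}, not just supports; equality of entropies then follows. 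The cardinality identity $|\Pi_X(D)| = |\Pi_{\coloring(X)}(D')|$ recorded before the proposition is a consequence of this, not the full content.

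This confusion leaks into the second claim, where you write equalities like ``$\log|R_j^D| = h_{D'}(f(R_j))$'' and ``$h_D(\hat 1_{\bL}) = h'(f(\hat 1_{\bL}))$,'' conflating the entropy function $h_{D'}$ of the witnessing instance with the given function $h'$ being materialized. Neither equality holds in general. The relation-side bound needs only the projection-cardinality identity: $\log|R_j^D| = \log|\Pi_{f(R_j)}(D')| = \log|(R_j')^{D'}| \leq h'(R_j') = h(R_j)$. For the output-side bound $\log|Q^D| \geq h(\hat 1) = h'(\hat 1')$, you cannot go through $h_D$; instead use that the renaming injects $(Q')^{D'}$ into $Q^D$, so $|Q^D| \geq |(Q')^{D'}| \geq 2^{h'(\hat 1')}$.
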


\begin{defn}
A {\em quasi-product database instance} for $\bL$ is an instance of
the form $f^{-1}(D')$ where $f$ embeds $\bL$ into a Boolean algebra
$A$, and $D'$ is a product database instance for $A$.
\label{defn:quasi:product:instance}
\end{defn}


\begin{example} \label{ex:running:1} Continuing our running example,
  consider the following two queries:
  \begin{eqnarray*}
    Q&= & R(x,y),S(y,z),T(z,u), xz\rightarrow u, yu\rightarrow x\\
    Q'&= & R'(a,b),S'(b,c),T'(c,a)
  \end{eqnarray*}
  $Q$ is the query in Fig.~\ref{fig:q1}, while $Q'$ is the triangle
  query (without fd's).  Consider the following renaming
  $\coloring(x)=\coloring(u)=a$, $\coloring(y)=b$, $\coloring(z)=c$,
  which defines an embedding\footnote{We need to check
    $\coloring(X)=\coloring(X^+)$ for all $X$: e.g.
    $\coloring(xz)=\coloring(xz^+)$ holds because $\coloring(xz)=ac$,
    $\coloring(xz^+)=\coloring(xzu)=ac$, and similarly
    $\coloring(yu)=\coloring(yu^+)$.} from the lattice in
  Fig.~\ref{fig:q1} to the Boolean algebra $2^{\set{a,b,c}}$.  From
   the product instance $$D' \defeq \setof{(a,b,c)=(i,j,k)}{i,j,k \in [\sqrt
    N]}$$ we construct, through renaming, the quasi-product instance $$D
   = \setof{(x,y,z,u)=(i,j,k,i)}{i,j,k \in [\sqrt N]}.$$  Notice that the FD's
  $xz\rightarrow u, yu\rightarrow x$ hold in $D$ (as they should).
\end{example}

\subsection{Inequalities} 

In information theory, an {\em information inequality} is defined by a
vector of real numbers $(w_X)_{X \subseteq \bX}$ such that the
inequality $\sum_{X \subseteq \bX} w_X H(X) \geq 0$ holds for any
entropic function $H$.  We introduce here a related notion, which we
use to describe upper bounds on the query size.

Fix a query $(\bL, \bR)$.  An {\em output-inequality} is given by a
vector of non-negative real numbers $(w_j)_{j\in[m]}$ such that the
inequality
\begin{align}
  \sum_{j=1}^m w_j h(R_j) \geq h(\hat 1) \label{eq:ineq1}
\end{align}
holds for all polymatroids $h$ on the lattice $\bL$. 
For example, in a Boolean algebra output inequalities correspond to Shearer's 
lemma:
$\sum_j w_j h(R_j) \geq h(\bX)$ iff $(w_j)_{j=1}^m$ is a fractional edge
cover of the hypergraph with nodes $\setof{x}{x \in \bX}$ and
hyperedges $\setof{R_j}{j\in [m]}$.  {\em Any} output inequality gives us
immediately an upper bound on the output size of a query,
because \eqref{eq:ineq1} implies
$h(\hat 1) \leq \sum_j w_j n_j$ for any
feasible solution $h$ of the LLP with log-cardinalities $(n_j)_{j=1}^m$.

An ouput size upperbound is best if it is minimized, which is precisely
the objective of the dual-LLP, which is defined over
{\em non-negative} variables $(w_j)_{j=1}^m$ 
and $(s_{X,Y})_{X\incomp Y}$, corresponding to the input $R_j$
and incomparable pairs of lattice elements (thus $s_{X,Y}$ is the
same variable as $s_{Y,X}$):
\begin{equation}\label{eq:dual:llp}
\begin{matrix*}[r]
   \min & \multicolumn{3}{l}{\displaystyle \sum_j w_jn_j}\\[2ex]
        & {\displaystyle \sum_{X\incomp Y: X \vee Y=\hat 1} s_{X,Y}} &\geq& 1,& \\[2ex]
        & {\displaystyle w_j + \sum_{X\incomp Y: X \vee Y = R_j} s_{X,Y} +
        \sum_{X\incomp Y: X \wedge Y = R_j} s_{X,Y}
- \sum_{X \incomp R_j} s_{X,R_j}} &\geq& 0, & \multicolumn{1}{l}{\forall j \in
[m]} 
\end{matrix*}
\end{equation}




A simple way to prove an output inequality is to write it as a non-negative
linear combination of sub-modularity inequalities. (See example below.)
The following Lemma shows that every output inequality can be proven this way,
as follows from the theory of generalized
inequalities~\cite[Ch.2]{boyd:vandenberghe:2004}.  

%

\blmm 
Let $M$ denote the matrix of sub-modular inequalities (first inequality of 
the LLP \eqref{eq:llp}).  
Given a non-negative vector $(w_j)_{j\in[m]}$, define the vector
$(c_X)_{X\in L}$ by setting $c_{\hat 1} \defeq 1$, $c_{R_j} \defeq
-w_j$ and $c_X \defeq 0$ otherwise.  Then, the following
 statements are equivalent
\bi
 \item[(i)] \eqref{eq:ineq1} holds for all polymatroids $h$
 \item[(ii)] \eqref{eq:ineq1} holds for
all non-negative sub-modular functions $h$ 
 \item[(iii)] there exists $s \geq 0$ such that $c^T \leq s^T M$.  
    Equivalently, $w$ is part of a feasible solution $(s, w)$ of the
    dual-LLP~\eqref{eq:dual:llp}.
\ei
\label{lmm:generalized:inequality}
Furthermore, if $(s^*,w^*)$ and $h^*$ are dual- and primal-optimal solutions,
then $h^*(\onehat) = \sum_{j=1}^m w^*_jn_j$.
\elmm
\bp
Using the Lov\'asz monotonization map $h \to \bar h$ described in
Sec~\ref{subsec:llp}, it is
straightforward to show that $(i)$ and $(ii)$ are equivalent.
To see the equivalence between $(ii)$ and $(iii)$, note that
$K \defeq \setof{h}{Mh \leq 0, h\geq 0}$ is the
set of all non-negative, sub-modular functions, and the polar cone
$K^* \defeq \setof{c}{c^T h \leq 0, \forall h \in K}$ is the set of
all inequalities that hold for all non-negative sub-modular functions.
Now, $c \in K^*$ iff the LP $\max_h \{c^Th \suchthat Mh \leq 0, h
\geq 0\}$ has objective value $0$, which holds iff the dual LP $\min_s
\setof{0}{c^T \leq s^T M, s\geq 0}$ is feasible. 
The last statement is strong duality.
\ep




\begin{example} \label{ex:running:2}
  The lattice presentation of the triangle query
  $Q=R(x,y),S(y,z),T(z,x)$ (without FD's) is the Boolean algebra
  $2^{\set{x,y,z}}$.  The following output inequality:
\begin{eqnarray}
   h(xy)+h(yz)+h(zx) &\geq & 2h(\hat 1) \label{eq:triangle}
\end{eqnarray}
follows by adding two sub-modularity inequalities $h(xy)+h(yz)\geq
h(\hat 1)+h(y)$ and $h(y)+h(zx)\geq h(\hat 1)$.  It corresponds to the
dual solution $s_{xy,yz}=s_{y,zx}=1$ with the rest 0.  Some other
output inequalities are $h(xy)+h(xz) \geq h(\hat 1)$, $h(xy)+h(yz)
\geq h(\hat 1)$, and $h(xz)+h(yz) \geq h(\hat 1)$.  Together, these
four output inequalities prove Eq.(\refeq{eq:shearer}).
\end{example}


\section{Normal lattices}
\label{sec:normal}


We show here that an $\bL$-polymatroid $h$ can be materialized as a 
quasi-product database
instance iff it satisfies all output inequalities given by fractional
edge covers of a certain hypergraph; in that case we call $h$ {\em
  normal}.  The normal polymatroids are the largest class of
polymatroids that preserve the elegant properties of the AGM bound:
upper bound given in terms of a fractional edge cover, and lower bound
given a (quasi-) product database.  We then extend normality to a lattice,
which we call normal if its optimal polymatroid $h^*$ is normal.

Recall the M\"obius inversion formula in a lattice $\bL$:
\begin{equation}
  h(X) = \sum_{Y: X \preceq Y} g(Y) 
   \ \ \ \text{ iff } \ \ \ 
   g(X) = \sum_{Y: X \preceq Y} \mu(X,Y) h(Y) \label{eq:mobius}
\end{equation}
where $\mu(X,Y)$ is the M\"obius function on $\bL$ \cite{MR2868112}.  In
information theory, when $h$ is an entropy and $\bL$ a Boolean algebra,
the quantity $-g(X)$ is the (multivariate) {\em conditional mutual
information} $I(\hat 1-X \suchthat X)$, which we abbreviate CMI.
For example, in the Boolean Algebra $2^{\set{x,y,z}}$:
\begin{eqnarray*}
   g(xyz)&=& h(xyz) \\
   g(xy) &=& h(xy) - h(xyz)\\
   g(x) &=&  h(x) - h(xy) - h(xz) + h(xyz)\\
   g(\hat 0) &=& h(\hat 0) - h(x) - h(y) - h(z) 
             + h(xy) + h(xz) + h(yz) - h(xyz)
\end{eqnarray*}
%

We give below a simple sufficient condition on $g$ which implies that $h$,
defined by Eq.\eqref{eq:mobius} is a polymatroid.  We need:

\begin{lmm} \label{lemma:compaction} For $\calS\subseteq L$, let
  $c(Z,\calS) \defeq |\setof{U}{U \in \calS, U \preceq Z}|$.  Then, for all
  $X,Y,Z \in L$,
  $c(Z,\set{X,Y}) \leq c(Z,\set{X \wedge Y, X \vee Y})$; moreover, if
  $\bL$ is a distributive lattice and $Z$ is a meet-irreducible, then 
   equality holds.
\end{lmm}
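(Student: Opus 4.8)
The plan is to prove the inequality by a direct case analysis on how many of $X$ and $Y$ lie below $Z$, using only two elementary lattice facts: $X\wedge Y \preceq X$, and that $X\vee Y$ is the \emph{least} upper bound of $X$ and $Y$, so that $X,Y \preceq Z$ forces $X\vee Y \preceq Z$. First I would dispose of the degenerate situations: when $X$ and $Y$ are comparable (including $X=Y$) the two sets $\set{X,Y}$ and $\set{X\wedge Y,\,X\vee Y}$ literally coincide, so there is nothing to prove. Thus I would assume $X \incomp Y$, in which case $X\wedge Y,\,X,\,Y,\,X\vee Y$ are four distinct elements.

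I would then split into three cases. If both $X,Y \preceq Z$, then $X\wedge Y \preceq X \preceq Z$ and, since $Z$ is a common upper bound of $X$ and $Y$, also $X\vee Y \preceq Z$; both counts equal $2$. If exactly one holds, say $X \preceq Z$ and $Y \not\preceq Z$, then $X\wedge Y \preceq X \preceq Z$ while $X\vee Y \succeq Y \not\preceq Z$; both counts equal $1$. If neither holds, the left count is $0$ while the right count is $0$ or $1$ according to whether $X\wedge Y \preceq Z$, so $c(Z,\set{X,Y}) \le c(Z,\set{X\wedge Y,\,X\vee Y})$ in any event. This analysis already shows that strictness can occur \emph{only} in the last case, precisely when $X\wedge Y \preceq Z$ yet neither $X$ nor $Y$ is $\preceq Z$.

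For the equality statement it therefore suffices to rule out that final configuration when $\bL$ is distributive and $Z$ is meet-irreducible, and this is where the real content lies: the classical fact that in a distributive lattice every meet-irreducible element is \emph{meet-prime}, i.e. $X\wedge Y \preceq Z$ implies $X\preceq Z$ or $Y\preceq Z$. I would prove this in one line: by distributivity $(X\vee Z)\wedge(Y\vee Z) = (X\wedge Y)\vee Z = Z$, where the last equality uses $X\wedge Y \preceq Z$; meet-irreducibility of $Z$ then forces $X\vee Z = Z$ or $Y\vee Z = Z$, that is $X\preceq Z$ or $Y\preceq Z$. With this, the ``neither'' case is impossible, so in every case the two counts agree and equality holds. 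The main obstacle is simply recognizing that all the hypotheses funnel into this single meet-prime implication; once it is isolated and proved via the distributive identity, the remainder is a short exhaustive check.
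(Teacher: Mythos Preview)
Your proof is correct and follows essentially the same approach as the paper: a case analysis on how many of $X,Y$ lie below $Z$ for the inequality, and then the distributive identity $(X\vee Z)\wedge(Y\vee Z)=(X\wedge Y)\vee Z=Z$ together with meet-irreducibility of $Z$ to force $X\preceq Z$ or $Y\preceq Z$ for the equality. The paper's write-up is slightly terser (it organizes the cases by the value of $c(Z,\set{X,Y})$ rather than by which of $X,Y$ lie below $Z$, and does not separate out the comparable case), but the argument is the same.
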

\begin{proof}
  If $c(Z,\set{X,Y})=1$, then w.l.o.g. we assume $X \preceq Z$, hence
  $X \wedge Y\preceq Z$, and
  $c(Z, \set{X \wedge Y, X \vee Y}) \geq 1$.  If $c(Z,\set{X,Y})=2$,
  then both $X \preceq Z, Y \preceq Z$ hold, hence both
  $X \wedge Y \preceq Z$, $X \vee Y \preceq Z$ hold, hence
  $c(Z, \set{X \wedge Y, X \vee Y}) = 2$.  Next, suppose $\bL$ is distributive 
   and $Z$ is a meet-irreducible. If
  $c(Z, \set{X \wedge Y, X \vee Y}) = 2$ then $X \vee Y \preceq Z$
  hence both $X \preceq Z, Y \preceq Z$ hold.  If
  $c(Z, \set{X \wedge Y, X \vee Y}) = 1$, then $X \wedge Y \preceq Z$.
  By distributivity
  $Z \succeq (X \wedge Y) \vee Z = (X \vee Z) \wedge (Y \vee Z)$.
  On the other hand both $Z \preceq X \vee Z$ and $Z \preceq Y \vee Z$
  hence $Z = (X \vee Z) \wedge (Y \vee Z)$.  Since $Z$ is a
  meet-irreducible, it must be equal to one of the two terms. Assume
  w.l.o.g. $Z = X \vee Z$ then $X \preceq Z$ proving $c(Z, \set{X,Y})
  \geq 1$. 
\end{proof}

%

%
Lemma~\ref{lemma:compaction} immediately implies:

\begin{lmm} \label{lemma:normal:submodular} Let $g$ be any function
  s.t.  $g(Z) \leq 0$ for $Z \prec \hat 1$, and $g(\hat 1) = - \sum_{Z
    \prec \hat 1} g(Z)$. 
  Then the function $h$ defined by Eq.(\ref{eq:mobius}) is a polymatroid.  
  Furthermore,
   if $\bL$ is distributive, then $h$ is a modular polymatroid, i.e.
  $h(X)+h(Y)=h(X \vee Y)+h(X \wedge Y)$ for all $X,Y\in L$.
\end{lmm}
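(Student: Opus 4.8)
The plan is to work from the closed form that the hypotheses impose on $h$ and then read off the polymatroid axioms one by one, calling on Lemma~\ref{lemma:compaction} only for sub-modularity. First I would rewrite $h$: splitting the top term out of the M\"obius sum \eqref{eq:mobius} and substituting $g(\hat 1) = -\sum_{Z \prec \hat 1} g(Z)$ gives, for every $X \in L$,
\[
  h(X) \;=\; g(\hat 1) + \!\!\sum_{Z:\, X \preceq Z \prec \hat 1}\!\! g(Z)
  \;=\; \sum_{Z \prec \hat 1:\, X \not\preceq Z} \bigl(-g(Z)\bigr).
\]
Setting $a_Z \defeq -g(Z) \geq 0$ for $Z \prec \hat 1$ and noting $c(Z,\set{X}) = [X \preceq Z]$, this reads $h(X) = \sum_{Z \prec \hat 1} a_Z\,(1 - c(Z,\set{X}))$. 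Three of the four axioms now drop out of this one identity: all $a_Z \geq 0$ gives $h \geq 0$; since $\hat 0 \preceq Z$ for every $Z$ the index set is empty at $X=\hat 0$, so $h(\hat 0)=0$; and if $X \preceq X'$ then $X \not\preceq Z$ forces $X' \not\preceq Z$, so $\set{Z : X\not\preceq Z}\subseteq\set{Z : X'\not\preceq Z}$ and hence $h(X)\leq h(X')$, i.e. monotonicity.

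For sub-modularity I would compute the defect directly. It suffices to treat incomparable $X,Y$, since a comparable pair makes the four terms cancel trivially. Using $c(Z,\set{X})+c(Z,\set{Y}) = c(Z,\set{X,Y})$ (and the analogous identity for the meet/join pair), the constant $2a_Z$ cancels and
\[
  h(X)+h(Y)-h(X\vee Y)-h(X\wedge Y)
  \;=\; \sum_{Z \prec \hat 1} a_Z\Bigl(c\bigl(Z,\set{X\wedge Y,\,X\vee Y}\bigr)-c\bigl(Z,\set{X,Y}\bigr)\Bigr).
\]
Lemma~\ref{lemma:compaction} asserts $c(Z,\set{X,Y}) \leq c(Z,\set{X\wedge Y,X\vee Y})$, so each summand is the non-negative $a_Z$ times a non-negative bracket; the defect is $\geq 0$ and $h$ is sub-modular. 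This finishes the polymatroid claim.

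For the distributive case I would try to drive the same defect to $0$. The clean half is that the equality statement of Lemma~\ref{lemma:compaction} gives $c(Z,\set{X,Y}) = c(Z,\set{X\wedge Y,X\vee Y})$ whenever $Z$ is meet-irreducible, so every meet-irreducible $Z$ contributes nothing. The main obstacle is the remaining terms. A coefficient-by-coefficient check shows that a given $Z\prec\hat 1$ contributes to the defect exactly when $X\wedge Y \preceq Z$ while $X\not\preceq Z$ and $Y\not\preceq Z$, and such a $Z$ need not be meet-irreducible even in a distributive lattice (for instance an atom of a Boolean algebra on three or more generators), so distributivity by itself does not annihilate these terms. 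I therefore expect the modularity conclusion to hinge on the $g$ produced in the normal-lattice construction being supported on meet-irreducibles (the co-atoms of the relevant hypergraph): under that support hypothesis only meet-irreducible $Z$ carry $a_Z>0$, the equality case of Lemma~\ref{lemma:compaction} kills every surviving term, and the defect is exactly $0$, yielding $h(X)+h(Y)=h(X\vee Y)+h(X\wedge Y)$. Pinning down (or suitably weakening) that support condition is the step I would scrutinize most carefully.
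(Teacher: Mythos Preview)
Your argument for the polymatroid axioms is correct and is essentially a rewriting of the paper's proof: the paper sums $c(Z,\set{X,Y})\,g(Z)$ directly over all $Z$ (including $\hat 1$), uses Lemma~\ref{lemma:compaction} together with $g(Z)\le 0$ for $Z\neq\hat 1$ to flip the inequality, and then asserts in one line that ``when $\bL$ is distributive, the inequality becomes an equality.'' You chose to isolate the $\hat 1$ term first, which makes the signs cleaner but is otherwise the same computation.

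On the modularity clause you have not missed anything --- you have found a genuine problem with the statement. Your observation that a term $Z$ contributes to the defect exactly when $X\wedge Y \preceq Z$, $X\not\preceq Z$, $Y\not\preceq Z$ is correct, and such $Z$ need not be meet-irreducible in a distributive lattice. Concretely, in the Boolean algebra $2^{\{a,b,c\}}$ take $g(\{c\})=-1$, $g(\hat 1)=1$, and $g=0$ elsewhere; then $h(\{a\})=h(\{b\})=1$ while $h(\{a,b\})=1$ and $h(\emptyset)=0$, so $h$ is strictly sub-modular, not modular. Thus the modularity claim fails under the lemma's stated hypotheses; the paper's one-line justification leans on the equality case of Lemma~\ref{lemma:compaction} as if it applied to every $Z$, but the lemma only asserts it for meet-irreducibles. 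Your diagnosis --- that modularity goes through precisely when $g$ is supported on meet-irreducibles --- is exactly right, and this is the ``strictly normal'' case defined immediately after the lemma and is the only case the paper actually uses downstream (e.g.\ in the canonical embedding and in Theorem~\ref{th:normal}). So the fix is to add ``strictly'' to the distributive-lattice hypothesis, after which your proof via the equality case of Lemma~\ref{lemma:compaction} is complete.
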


\begin{proof}
  Non-negativity and monotonicity are easy to verify, submodularity follows
  from:
   \begin{eqnarray*}
      h(X) + h(Y) 
      &=& \sum_{Z: X \preceq Z} g(Z) + \sum_{Z: Y \preceq Z} g(Z) \\
      &=& \sum_Z c(Z,\set{X,Y}) \cdot g(Z)\\
      &\geq& \sum_Z c(Z, \set{X \vee Y, X \wedge Y}) \cdot g(Z)\\
      &=& h(X \vee Y) + h(X \wedge Y),
   \end{eqnarray*}
  where the inequality holds due to 
  Lemma~\ref{lemma:compaction} and the fact that $g(Z) \leq 0$ for $Z
  \neq \hat 1$. When $\bL$ is distributive, the inequality becomes an equality.
\end{proof}

Any function $h$ satisfying the property stated in
Lemma~\ref{lemma:normal:submodular} is called a 
{\em normal submodular function}.  If, furthermore, $g(Z)=0$ for
all $Z \prec \hat 1$ other than the co-atoms, then we say that $h$ is
{\em strictly normal}.  

For a simple example, consider the function $h$ in our running example
Fig.\ref{fig:q1}: it is strictly normal because it is defined by the
CMI $g$ shown in the figure.  A negative example is given by function
$h$ on the left of Fig. \ref{fig:non:normal}, which is not normal,
because its CMI satisfies $g(\hat 0) > 0$.  If the lattice $\bL$ is a
Boolean algebra, then the optimal polymatroid $h^*$ given 
given by Eq.\eqref{eq:bottom:up} is strictly normal:
its CMI is $g(\bX) = \sum_i v_i^*$, $g(\bX-\set{x_i}) = - v_i^*$, and $g=0$
everywhere else.

Normal polymatroids are precisely non-negative linear combinations of
``step functions''. For every $Z \in L$, {\em the
step function $h_Z$ at $Z$} is defined by $h_Z(X) = 1$ if $X \not\preceq Z$,
and $0$ otherwise.
Every step function is normal, 
because its M\"obius inverse is
$g_Z(\hat 1) = 1$, $g_Z(Z) = -1$, and $g_Z(X)=0$
otherwise. Any non-negative linear combination of step functions
is normal.  Conversely, if $h$ is a normal polymatroid, then denoting
$a_Z = -g(Z)$ for all $Z \neq \hat 1$ we have $a_Z \geq 0$ (since $h$
is normal) and $g = \sum_Z a_Z g_Z$, implying $h = \sum_Z a_Z h_Z$.

%
%

\subsection{Connection to Quasi-Product Instance}

We have seen that the worst-case instance of the AGM-bound is a
product database instance (Theorem~\ref{th:agm}), and 
its entropy function given in Eq.\eqref{eq:bottom:up} is
strictly normal.  We generalize this observation by proving that
normal polymatroids are precisely entropy functions of quasi-product
instances (Definition~\ref{defn:quasi:product:instance}). 
For one direction we need:

\begin{lmm} \label{lemma:normal:inverse} Let
  $f : L \rightarrow L'$ be an embedding between two lattices, and
  let $h'$ be a normal polymatroid on $\bL'$.  Then
  $h \defeq h' \circ f$ is a normal polymatroid on $\bL$.
\end{lmm}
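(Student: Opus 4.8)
The plan is to reduce everything to the step-function characterization of normal polymatroids established just before the lemma: a function is a normal polymatroid iff it is a non-negative linear combination of step functions. Since $h'$ is a normal polymatroid on $\bL'$, I would first write $h' = \sum_{Z' \in L'} a_{Z'}\, h'_{Z'}$, where $h'_{Z'}$ denotes the step function at $Z'$ on $\bL'$ and every coefficient satisfies $a_{Z'} \geq 0$. Composing with $f$ gives $h = h' \circ f = \sum_{Z' \in L'} a_{Z'}\, (h'_{Z'} \circ f)$, so it suffices to prove that each $h'_{Z'} \circ f$ is a step function on $\bL$; a non-negative combination of step functions is then automatically a normal polymatroid, which simultaneously yields normality, monotonicity, and the polymatroid property.

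The key step is to identify $h'_{Z'} \circ f$ explicitly using the adjoint of $f$. Because $f$ preserves all joins it is monotone (if $X \preceq Y$ then $f(Y) = f(X \vee Y) = f(X) \vee f(Y) \succeq f(X)$) and, as noted in the definition of embedding, is the left adjoint of a Galois connection. Concretely, define $g : L' \rightarrow L$ by $g(Z') \defeq \bigvee \setof{X \in L}{f(X) \preceq Z'}$. Join-preservation gives $f(g(Z')) = \bigvee \setof{f(X)}{f(X) \preceq Z'} \preceq Z'$, and together with monotonicity this yields the adjunction $f(X) \preceq Z' \iff X \preceq g(Z')$ for all $X \in L$. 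Consequently $(h'_{Z'} \circ f)(X) = 1$ iff $f(X) \not\preceq Z'$ iff $X \not\preceq g(Z')$, i.e. $h'_{Z'} \circ f = h_{g(Z')}$, the step function at $g(Z') \in L$. Plugging back in, $h = \sum_{Z' \in L'} a_{Z'}\, h_{g(Z')}$ is a non-negative linear combination of step functions on $\bL$, hence a normal polymatroid, as desired.

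The part that needs the most care is the adjoint argument: I must verify both that $g(Z')$ is well-defined as an element of $L$ (it is, as a join in $\bL$) and that both directions of the equivalence $f(X) \preceq Z' \iff X \preceq g(Z')$ hold. The forward direction $f(X) \preceq Z' \Rightarrow X \preceq g(Z')$ is immediate from the definition of $g$, while the reverse uses monotonicity of $f$ together with $f(g(Z')) \preceq Z'$, the latter relying essentially on the join-preservation hypothesis. Once this adjunction is in hand the rest is a routine pointwise computation, and no separate appeal to submodularity of $h' \circ f$ is needed, since the step-function decomposition delivers all the polymatroid axioms at once.
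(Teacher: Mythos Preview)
Your proof is correct and rests on the same key ingredient as the paper's: the right adjoint of $f$ (you call it $g$, the paper calls it $r$). The difference is in packaging. The paper works directly with the CMI/M\"obius inverse: it defines $g(X)\defeq\sum_{Y:r(Y)=X}g'(Y)$, verifies via the adjunction that this is the CMI of $h$, and observes that $r(\hat 1)=\hat 1$ forces $g(X)\leq 0$ for $X\neq\hat 1$. You instead invoke the step-function characterization established just before the lemma, reduce to checking that each $h'_{Z'}\circ f$ is a step function, and identify it as $h_{g(Z')}$ via the same adjunction. These are the same computation under the dictionary $a_{Z'}=-g'(Z')$: grouping your sum $\sum_{Z'}a_{Z'}h_{g(Z')}$ by the value of $g(Z')$ recovers exactly the paper's formula for the CMI of $h$. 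Your route is arguably a bit cleaner since it leverages a characterization the paper had just proved and avoids re-verifying the M\"obius inversion by hand; the paper's route is slightly more self-contained.
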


\begin{proof} Recall that an embedding is the left adjoint of a Galois
  connection.  Let $r: L' \rightarrow L$ be its right
  adjoint\footnote{The standard notation is $g$, but we use $g$ for
  the CMI.}, in other words $f(X) \preceq Y$ iff $X \preceq r(Y)$.  Let
  $g'$ be the CMI for $h'$.  The function
  $g(X) \defeq \sum_{Y: r(Y)=X} g'(Y)$ is the CMI of $h$, because,
  forall $X\in L$:
   \[
    \sum_{Z: X \preceq Z} g(Z) = 
    \sum_{Z: X \preceq Z} \sum_{Z': r(Z') = Z} g'(Z') =
    \sum_{Z': X \preceq r(Z')} g'(Z') = \sum_{Z': f(X) \preceq Z'}
      g'(Z') = h'(f(X)) = h(X)
   \]
  In any Galois connection, $r(\hat 1) = \hat 1$ (since
  $f(\hat 1) \preceq \hat 1$ iff $\hat 1 \preceq r(\hat 1)$).  By normality
  of $h'$ we have $Y \neq \hat 1$ implies $g'(Y) \leq 0$.  Therefore,
  $X \neq \hat 1$ implies $g(X)=\sum_{Y: r(Y)=X} g'(Y) \leq 0$, proving
  that $h$ is normal.
\end{proof}

In the opposite direction, we need:

\begin{defn} \label{def:canonical:embedding} Let $A = 2^\bX$ be a
  Boolean algebra.  The canonical instance is the product database
  $D = [2]^\bX$; note that $|\Pi_X(D)| = 2^{|X|}$, and thus
  $h_D(X) = |X|$, forall $X \subseteq \bX$.

  Let $\bL$ be a lattice and $h$ an integer-valued, normal
  polymatroid, and $g$ its CMI. For all $X \in L$, $X\neq \hat 1$,
  let $C(X)$ be a set of $-g(X)$ arbitrary elements, such that the sets
  $(C(X))_{X\neq \onehat}$ are disjoint.  Define $C(\onehat)=\emptyset$,
  and $\bC \defeq \bigcup_{X} C(X)$.  The
  {\em canonical Boolean Algebra} is $A = (2^\bC, \supseteq)$, and the
  {\em canonical embedding} of $\bL$ is $f : L \rightarrow A$,
  $f(X) \defeq \bigcup_{Z: X \preceq Z} C(Z)$.
\end{defn}

$A$ is an ``upside-down'' Boolean algebra, where $Y \preceq Y'$ iff
$Y \supseteq Y'$, and $\vee$ is set intersection; one can check that
$f$ commutes with $\vee$.  We can now prove our result.
(As in the AGM bound, there is a bit of loss when $h$ are not integral,
but this does affect asymptotics)

\begin{lmm} \label{lemma:normal:materialize} Let $h$ be an integral,
  non-negative, submodular function $h$ on a lattice $\bL$.  Then $h$
  is normal iff it is the entropy function of a quasi-product
  instance.
\end{lmm}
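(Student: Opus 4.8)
The plan is to route both directions through Proposition~\ref{prop:entropy-preserving-embedding}, which says that whenever $f:\bL\to\bL'$ is an embedding and $D=f^{-1}(D')$, the entropy functions satisfy $h_D=h_{D'}\circ f$. Thus ``$h$ is the entropy of a quasi-product instance'' means exactly ``$h=h_{D'}\circ f$ for some embedding $f$ into a Boolean algebra $A$ and some product instance $D'$ on $A$,'' and both directions reduce to how $h'\mapsto h'\circ f$ interacts with normality. For the easy direction, suppose $h=h_{D'}\circ f$ comes from a quasi-product instance. The entropy $h_{D'}$ of a product instance on a Boolean algebra is precisely the function of Eq.~\eqref{eq:bottom:up}, which we already observed is strictly normal, in particular a normal polymatroid on $A$. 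Lemma~\ref{lemma:normal:inverse} states exactly that the pullback of a normal polymatroid along an embedding is again normal, so $h=h_{D'}\circ f$ is normal; no new computation is needed.

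For the main direction, assume $h$ is an integral, non-negative, normal submodular function. By Lemma~\ref{lemma:normal:submodular} it is a normal polymatroid, so its CMI $g$ satisfies $g(X)\le 0$ for all $X\neq\hat 1$. This nonnegativity of $-g(X)$ is precisely what makes the canonical embedding of Definition~\ref{def:canonical:embedding} well-defined: we pick disjoint sets $C(X)$ with $|C(X)|=-g(X)$, form $\bC=\bigcup_X C(X)$, the upside-down canonical Boolean algebra $A=(2^{\bC},\supseteq)$, and the canonical embedding $f(X)=\bigcup_{Z:X\preceq Z}C(Z)$. I would first record that $f$ is a genuine embedding: $f(\hat 1)=C(\hat 1)=\emptyset=\hat 1_A$, and since the $C(Z)$ are disjoint, an element of $C(Z)$ lies in $f(X)\cap f(Y)$ iff $Z\succeq X$ and $Z\succeq Y$ iff $Z\succeq X\vee Y$, so $f(X)\cap f(Y)=f(X\vee Y)$, i.e.\ $f$ commutes with the join $\vee=\cap$ of $A$.

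The crux is to verify $h_{D'}\circ f=h$ for $D'$ the canonical product instance on $A$. In the upside-down algebra the top is $\emptyset$ and the bottom is $\bC$, so the product-instance entropy reads $h_{D'}(Y)=|\bC|-|Y|=|\bC\setminus Y|$. Using $|f(X)|=\sum_{Z:X\preceq Z,\,Z\neq\hat 1}(-g(Z))$ together with the identity $|\bC|=\sum_{Z\neq\hat 1}(-g(Z))=g(\hat 1)=h(\hat 1)$, I compute
\[
   h_{D'}(f(X)) \;=\; |\bC|-|f(X)| \;=\; g(\hat 1) + \sum_{Z:X\preceq Z,\,Z\neq\hat 1} g(Z) \;=\; \sum_{Z:X\preceq Z} g(Z) \;=\; h(X),
\]
the last equality being the M\"obius inversion~\eqref{eq:mobius}. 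Hence $D\defeq f^{-1}(D')$ is a quasi-product instance (Definition~\ref{defn:quasi:product:instance}) with $h_D=h$, completing the direction. I expect the main obstacle to be the bookkeeping in the ``upside-down'' algebra $A=(2^{\bC},\supseteq)$: getting the orientation right so that joins become intersections, the top becomes $\emptyset$, and the product entropy becomes $|\bC\setminus Y|$, and then checking that the M\"obius sum telescopes correctly once the term $Z=\hat 1$ (whose set $C(\hat 1)$ is empty) is folded back in. Everything else --- well-definedness of $C(X)$, the embedding property of $f$, and the invocation of Lemma~\ref{lemma:normal:inverse} --- is routine once normality ($g\le 0$ off the top) is in hand.
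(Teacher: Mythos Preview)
Your proposal is correct and follows essentially the same approach as the paper: both directions go through Lemma~\ref{lemma:normal:inverse} and the canonical embedding of Definition~\ref{def:canonical:embedding}, and your computation $h_{D'}(f(X))=|\bC|-|f(X)|=\sum_{Z:X\preceq Z}g(Z)=h(X)$ is exactly the ``one can check'' that the paper leaves to the reader. One minor phrasing slip: you do not need Lemma~\ref{lemma:normal:submodular} to conclude that $g(X)\le 0$ off the top---that is the \emph{definition} of normal; the lemma is only needed if you want to assert that $h$ is a polymatroid, which is not actually used in the construction.
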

\begin{proof}
  In one direction, let $D$ is a quasi-product instance and $h_D$ be
  its entropy function.  By definition, $D = f^{-1}(D')$, where
  $f : L \rightarrow A$ is an embedding into a Boolean algebra and
  $D'$ is a product instance for $A$.  Since $h_{D'}$ is normal (even
  strictly normal), $h_D$ is also normal by
  Lemma~\ref{lemma:normal:inverse}.  In the opposite direction, assume
  $h$ is normal.  Let $f : L \rightarrow A$ be the canonical
  embedding, and $D'$ be the canonical instance for $A$.  Then $D
  \defeq f^{-1}(D')$ is the quasi-product instance, and one can check
  that $h = h_D = h_{D'} \circ f$.
\end{proof}

\begin{example}
  We illustrate the construction above on the lattice $\bL$ in
  Fig.~\ref{fig:q1}, showing that it leads to the quasi-product instance
  in Example~\ref{ex:running:1}.  The canonical embedding is into the
  Boolean algebra $A= 2^{\set{a,b,c}}$ because there are three
  co-atoms in $\bL$: if we call the {\em atoms} of this Boolean
  algebra $a,b,c$, then the embedding $f : L \rightarrow
  2^{\set{a,b,c}}$ is precisely the renaming in
  Example~\ref{ex:running:1}.
  The polymatroid $h$ is equal to $h' \circ f$, where $h'$ is defined
  on $A$ by $h'(a)=h'(b)=h'(c)=1/2$, $h'(ab)=h'(ac)=h'(bc)=1$,
  $h'(abc)=3/2$, whose materialization is that in
  Example~\ref{ex:running:1} for $N=2$.
\end{example}

\subsection{Connection to Fractional Edge Covering}

\begin{defn} \label{def:coatomic:hypergraph} Let $(\bL, \bR)$ be a
  query given in lattice presentation.  The co-atomic hypergraph 
  $H_\co = (V_\co, E_\co)$ is defined as follows. The nodes $V_\co$ are the 
  co-atoms of $\bL$, and $E_\co = \set{e_1, \ldots, e_m}$ with $e_j =
  \setof{Z}{Z \in V_\co, R_j \not\preceq Z, R_j \in \bR}$.
  In other words, each relation $R_j$ defines a hyperedge $e_j$ consisting of 
  those nodes that do {\em not} contain the variables of $R_j$.  
  A simple illustration of a co-atomic hypergraph is in Fig.\ref{fig:hypergraph1}.
\end{defn}
%
%

\begin{lmm} \label{lemma:ineq:step} Inequality (\ref{eq:ineq1}) holds
  for all normal polymatroids $h$ iff it holds for all strictly normal
  polymatroids iff $(w_j)_{j=1}^m$ is a fractional edge cover of
  $H_\co$.
\end{lmm}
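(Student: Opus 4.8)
The plan is to establish the three conditions as a cycle of implications. Write (A) for ``\eqref{eq:ineq1} holds for all normal polymatroids $h$'', (B) for ``\eqref{eq:ineq1} holds for all strictly normal polymatroids $h$'', and (C) for ``$(w_j)_{j=1}^m$ is a fractional edge cover of $H_\co$''. I would prove (A)$\Rightarrow$(B)$\Rightarrow$(C)$\Rightarrow$(A). The implication (A)$\Rightarrow$(B) is immediate, since every strictly normal polymatroid is in particular normal, so the hypothesis of (A) applies to a superset of the functions quantified over in (B).

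For (B)$\Rightarrow$(C), the idea is to test \eqref{eq:ineq1} against the step functions at the co-atoms. For each co-atom $Z \in V_\co$, the step function $h_Z$ is strictly normal, since its M\"obius inverse $g_Z$ is supported only on $\hat 1$ and on the single co-atom $Z$. Evaluating gives $h_Z(\hat 1)=1$ (as $\hat 1 \not\preceq Z$) and $h_Z(R_j)=1$ exactly when $R_j \not\preceq Z$, so \eqref{eq:ineq1} instantiated at $h_Z$ reads $\sum_{j: R_j \not\preceq Z} w_j \geq 1$. This is precisely the covering constraint of $H_\co$ at the node $Z$; ranging over all co-atoms shows $(w_j)_j$ covers $H_\co$.

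For (C)$\Rightarrow$(A), I would use the decomposition of a normal polymatroid as a non-negative combination of step functions, $h = \sum_{Z \prec \hat 1} a_Z h_Z$ with $a_Z = -g(Z) \geq 0$, established just before the lemma. Since $h_Z(\hat 1)=1$ and $h_Z(R_j)=1$ iff $R_j \not\preceq Z$, linearity rewrites the difference of the two sides of \eqref{eq:ineq1} as $\sum_{Z \prec \hat 1} a_Z \big( \sum_{j: R_j \not\preceq Z} w_j - 1 \big)$. As the coefficients $a_Z$ are non-negative, it suffices to show $\sum_{j: R_j \not\preceq Z} w_j \geq 1$ for \emph{every} $Z \prec \hat 1$, not merely for co-atoms.

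The crux of the argument, and the one point where the definition of the co-atomic hypergraph does real work, is this reduction from arbitrary $Z \prec \hat 1$ to co-atoms. Here I would observe that whenever $Z \preceq Z'$ one has the containment $\{j : R_j \not\preceq Z'\} \subseteq \{j : R_j \not\preceq Z\}$, because a relation contained in $Z'$ is a fortiori contained in the smaller element $Z$; consequently $\sum_{j: R_j \not\preceq Z} w_j \geq \sum_{j: R_j \not\preceq Z'} w_j$. Since $\bL$ is finite, every $Z \prec \hat 1$ lies below some co-atom $Z'$, and the covering constraint (C) at $Z'$ then forces $\sum_{j: R_j \not\preceq Z} w_j \geq 1$. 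This yields (A) and closes the cycle, establishing all three equivalences.
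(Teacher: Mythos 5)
Your proof is correct and takes essentially the same approach as the paper's: both decompose (strictly) normal polymatroids into non-negative combinations of step functions, identify inequality \eqref{eq:ineq1} evaluated at co-atomic step functions with the covering constraints of $H_\co$, and reduce arbitrary step functions to co-atomic ones by passing to a co-atom above the given element. The only differences are organizational (your cycle of implications versus the paper's direct reduction to step functions) and the phrasing of the final step (containment of index sets $\{j : R_j \not\preceq Z'\} \subseteq \{j : R_j \not\preceq Z\}$ versus the paper's pointwise domination $h_{Z'} \leq h_Z$), which amount to the same observation.
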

\begin{proof}
  Since each (stricly) normal polymatroid is a non-negative linear
  combination of (co-atomic) step functions, it suffices to assume
  that $h$ is a step function.  First consider a co-atomic step
  functions $h_Z$, i.e.  $Z$ is a co-atom: the left hand size of
  Eq. (\ref{eq:ineq1}) is $\sum_{j: Z \in e_j} w_j$, and the right
  hand side is $1$, meaning that the inequality holds iff
  $(w_j)_{j=1}^m$ covers node $Z \in V_\co$.  It remains to show that
  if \eqref{eq:ineq1} holds for all co-atomic step functions then it
  holds for all step functions $h_X$: let $Z$ be any a co-atom s.t. $Z
  \succeq X$, then $h_Z \leq h_X$ and $h_Z(\hat 1)=h_X(\hat 1)=1$
  implying that (\ref{eq:ineq1}) holds for $h_X$.
\end{proof}

The co-atomic hypergraph is the natural concept to capture Shearer's
lemma in a general lattice.  In fact, every vector $(w_j)_{j=1}^m$
for which the output inequality (\ref{eq:ineq1}) holds 
is a fractional edge cover of the co-atomic
hypergraph.  One may wonder whether the output inequalities could also
be described by the {\em atomic hypergraph}, defined in a similar way.
In a Boolean algebra $2^\bX$ the atomic and co-atomic hypergraphs are
isomorphic via $x \mapsto \bX - \set{x}$, since $x \in R_j$ iff $R_j
\not\subseteq \bX - \set{x}$, but in a general lattice the atomic
hypegraph does not seem to lead to any interesting properties.

\subsection{Normal Lattices}

A normal lattice is a lattice where, at optimality, the polymatroid is
normal.  This is captured by the following, which is the main result
of this section.

\begin{thm} \label{th:normal} Let $(\bL, \bR)$ be a query in lattice
  presentation. The following are equivalent:
  \begin{enumerate}
  \item \label{item:normal:1} For every non-negative, submodular
    function $h$ there exists a normal polymatroid $h'$ s.t. $h'(\hat
    1) = h(\hat 1)$ and $h'(R_j) \leq h(R_j)$ for all $R_j \in \bR$.
  \item \label{item:normal:2} The previous property holds and $h'$ is
    strictly normal.
  \item \label{item:normal:3} Inequality (\ref{eq:ineq1}) holds for
    all non-negative, submodular functions $h$ iff $w_1, \ldots, w_m$
    is a fractional edge cover of the co-atomic hypergraph.
  \item \label{item:normal:4} Every non-negative submodular function
    $h$ on $\bL$ has a materialization that is a quasi-product
    database instance.
  \end{enumerate}
  If $L$ satisfies any of these conditions, then we call it a {\em
    normal lattice w.r.t. $\bR$}.  If $L$ is normal w.r.t. any inputs
  $\bR$, then we call it shortly a {\em normal lattice}.
\end{thm}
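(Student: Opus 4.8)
The plan is to prove the four statements equivalent via the cycle $(2)\Rightarrow(1)\Rightarrow(3)\Rightarrow(2)$ for the first three, and then close with $(1)\Leftrightarrow(4)$. The implication $(2)\Rightarrow(1)$ is immediate, since every strictly normal polymatroid is normal. For $(1)\Rightarrow(3)$, one direction of the biconditional in $(3)$ needs nothing: normal polymatroids are in particular non-negative and submodular, so if \eqref{eq:ineq1} holds for all submodular $h$ then it holds for all normal $h$, and Lemma~\ref{lemma:ineq:step} forces $(w_j)_j$ to be a fractional edge cover of $H_\co$. For the converse, given a fractional edge cover $w$ and an arbitrary submodular $h$, I would use $(1)$ to produce a normal polymatroid $h'$ with $h'(\hat 1)=h(\hat 1)$ and $h'(R_j)\le h(R_j)$; Lemma~\ref{lemma:ineq:step} applied to $h'$ together with $w_j\ge 0$ gives $\sum_j w_j h(R_j)\ge \sum_j w_j h'(R_j)\ge h'(\hat 1)=h(\hat 1)$, which is exactly \eqref{eq:ineq1} for $h$.

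The crux, and the step I expect to be the main obstacle, is $(3)\Rightarrow(2)$, where I would invoke linear programming duality twice. Fix a submodular $h$ and set cardinalities $n_j\defeq h(R_j)$. By Lemma~\ref{lmm:generalized:inequality} (strong duality for the LLP), the primal optimum $\max\{h''(\hat 1): h'' \text{ submodular}, h''(R_j)\le n_j\}$ equals $\min\{\sum_j w_j n_j: w \text{ an output inequality}\}$. Separately, a strictly normal polymatroid is exactly a combination $h'=\sum_{Z} a_Z h_Z$ over co-atoms $Z$ with $a_Z\ge 0$, for which $h'(\hat 1)=\sum_Z a_Z$ and $h'(R_j)=\sum_{Z\in e_j} a_Z$; hence maximizing $h'(\hat 1)$ subject to $h'(R_j)\le n_j$ is precisely the fractional vertex-packing LP on $H_\co$, whose LP dual is the fractional edge-cover LP with the same value $\min\{\sum_j w_j n_j: w \text{ a fractional edge cover of } H_\co\}$. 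Condition $(3)$ states that the output inequalities coincide with the fractional edge covers of $H_\co$, so these two minima agree, and therefore the submodular optimum equals the strictly-normal optimum. Since $h$ is itself feasible for the LLP at $n_j=h(R_j)$, we get $h(\hat 1)\le$ (this common optimum), so the optimal strictly normal $h'_0$ satisfies $h'_0(\hat 1)\ge h(\hat 1)$ and $h'_0(R_j)\le h(R_j)$. The delicate point is that $(2)$ demands pointwise domination at each $h$, not merely equality of optima; I would recover this by scaling $h'_0$ by $\lambda\defeq h(\hat 1)/h'_0(\hat 1)\in[0,1]$ (taking $h'=0$ when $h(\hat 1)=0$), which preserves strict normality and yields $h'(\hat 1)=h(\hat 1)$ and $h'(R_j)\le h(R_j)$.

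Finally I would settle $(1)\Leftrightarrow(4)$ using Lemma~\ref{lemma:normal:materialize}. For $(1)\Rightarrow(4)$, given submodular $h$, take the normal polymatroid $h'$ from $(1)$ and materialize it as a quasi-product instance $D$ via Lemma~\ref{lemma:normal:materialize}; then $\log|R_j^D|=h'(R_j)\le h(R_j)$ and $\log|Q^D|=h'(\hat 1)=h(\hat 1)$, so $D$ is a quasi-product materialization of $h$ in the sense of Definition~\ref{defn:Materializability}. For $(4)\Rightarrow(1)$, a quasi-product materialization $D$ of $h$ has entropy function $h_D$ that is normal (the easy direction of Lemma~\ref{lemma:normal:materialize}) with $h_D(\hat 1)\ge h(\hat 1)$ and $h_D(R_j)\le h(R_j)$; scaling $h_D$ down to match at $\hat 1$ gives the required normal polymatroid. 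The one technical caveat is integrality: Lemma~\ref{lemma:normal:materialize} is stated for integral $h$, so in the materialization steps I would first reduce to the integral (equivalently rational, then scaled) case, incurring only the same asymptotically negligible rounding loss already present in the AGM bound.
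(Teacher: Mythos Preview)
Your proposal is correct and follows essentially the same route as the paper: the cycle $(2)\Rightarrow(1)\Rightarrow(3)\Rightarrow(2)$ with the LP-duality argument for $(3)\Rightarrow(2)$ (your vertex-packing LP on $H_\co$ is exactly the paper's LP for the coefficients $a_i$, and both dualize to the edge-cover LP), followed by $(1)\Leftrightarrow(4)$ via Lemma~\ref{lemma:normal:materialize}. Your framing via two dualities (through Lemma~\ref{lmm:generalized:inequality}) is equivalent to the paper's single duality, and you are in fact slightly more careful than the paper in making the scaling step $h'=\lambda h'_0$ explicit (the paper only obtains $h'(\hat 1)\ge h(\hat 1)$ and leaves the adjustment implicit) and in flagging the integrality caveat for Lemma~\ref{lemma:normal:materialize}.
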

\begin{proof}
We first prove that items \ref{item:normal:1}, \ref{item:normal:2},
and \ref{item:normal:3} are equivalent, then show that item
\ref{item:normal:2} is equivalent to \ref{item:normal:4}.

Item \ref{item:normal:2}$\Rightarrow$ item \ref{item:normal:1} is
obvious.

For Item \ref{item:normal:1}$\Rightarrow$ item \ref{item:normal:3},
let $h$ be any submodular function on $L$.  Let $h'$ be a normal
function as defined in item \ref{item:normal:1}.  Since $h'$ is
normal, it satisfies the inequality (\ref{eq:ineq1}); it suffices to
note that $\sum_j w_j h(R_j) \geq \sum_j w_j h'(R_j) \geq h'(\hat 1)
\geq h(\hat 1)$.

For Item \ref{item:normal:3}$\Rightarrow$ item \ref{item:normal:2},
let $h$ be any submodular function on $L$, and let $Z_1, \ldots, Z_k$
be all co-atoms in $L$. To define a strictly normal function $h'$, we
need to find $k$ numbers $a_i \geq 0$, $i\in[k]$ and define:
\begin{eqnarray*}
   h'(\hat 1) &= & \sum_i a_i \\
   h'(X) &= & h'(\hat 1) - \sum \setof{a_i}{i\in [k]: X \preceq Z_i} =  \sum \setof{a_i}{i\in [k]: X \not\preceq Z_i}
\end{eqnarray*}
We need to find these numbers such that $h'(\hat 1) \geq h(\hat 1)$
and $h'(R_j) \leq h(R_j)$ for all $R_j \in \bR$.  To do that, consider the
following linear program:

\begin{eqnarray*}
   \text{maximize } & a_1 + \ldots + a_k \\
\forall j \in [m]: & \sum \setof{a_i}{i \in [k], R_j \not\preceq  Z_i} \leq h(R_j)
\end{eqnarray*}

We claim that its optimal value is $\geq h(\hat 1)$: this implies that
$h'$ defined above is a strictly normal function satisfying the
requirement in the theorem.  To prove the claim, consider the dual LP.
Its variables are $b_j$, for $j\in[m]$:
\begin{eqnarray*}
   \text{minimize } & \sum_j b_j h(R_j) \\
\forall i\in [k]: & \sum \setof{b_j}{j \in [m], R_j \not\preceq Z_i} \geq 1
\end{eqnarray*}
Thus, the feasible solutions of the dual LP are precisely the
fractional edge covers of the dual hypergraph.  Hence, by assumption
in Item~\ref{item:normal:3}, the output inequality (\ref{eq:ineq1})
holds:
\begin{align*}
  \sum_j b_j h(R_j) \geq h(\hat 1)
\end{align*}
which completes the proof.

Finally, we show that item \ref{item:normal:2} holds iff
\ref{item:normal:4} holds.  For the only if direction, consider a
non-negative, submodular function $h$; by \ref{item:normal:1} there
exists a normal polymatroid $h'$ s.t. $h'(\hat 1)=h(\hat 1)$ and
$h'(R_j) \leq h(R_j)$ forall $j$.  By
Lemma~\ref{lemma:normal:materialize}, $h'$ is the entropy of some
quasi-product instance $D$.  Then $D$ is a materialization of $h$
because $\log |R^D_j| = h'(R_j) \leq h(R_j)$ and
$\log |D| = h'(\hat 1) = h(\hat 1)$.  Conversely, assume item
\ref{item:normal:4} holds: $h$ has some materialization $D$ that is a
quasi-product.  Then by Lemma~\ref{lemma:normal:materialize}, $h_D$ is
normal, and $h_D(R_j) = \log |R^D_j| \leq h(R_j)$,
$h_D(\hat 1) = \log |D| = h(\hat 1)$, proving that item
\ref{item:normal:1} holds, with $h' \defeq h_D$.
\end{proof}


Normal lattices appear to be a new concept.  It is decidable whether a
lattice $\bL$ is normal w.r.t. $\bR$, using the following naive
procedure.  Enumerate all vertices $(w_j)_j$ of the fractional edge
packing polytope of the co-atomic hypergraph, and check that the
output inequality \eqref{eq:ineq1} holds, by using the criterion in
Lemma~\ref{lmm:generalized:inequality}.

We prove in Sec.~\ref{sec:sma} that every distributive lattice is
normal; furthermore,
Proposition~\ref{prop:simple:FD:implies:distributive:lattice} says
that any set of unary FDs generate a distributive lattice, which is
therefore normal.  The lattice on the left of Fig.\ref{fig:non:normal}
is a Boolean algebra, hence it is normal: note that $h$ in the figure
is not normal\footnote{The function XOR on three variables, $R(x,y,z)
  = \setof{(a,b,c)}{a,b,c \in \set{0,1}, a \text{ xor } b \text{ xor }
    c = 0}$, is the canonical example of a distribution whose entropy
  has a negative mutual information.}, but we can simply increase
$h(\hat 1)$ to 3 and now it is normal.

The lattice $\bL$ in Fig.\ref{fig:q1} is normal w.r.t. inputs
$xy,yz,zu$, which follows by exhaustively proving all inequalities
defined by the fractional edge covers of the co-atomic hypergraph,
shown in Fig.~\ref{fig:hypergraph1}.  For example, the edge cover
$(1/2,1/2,1/2)$ corresponds to the inequality $h(xy)+h(yz)+h(zu) \geq
2h(\hat 1)$, which can be proven by $h(xy)+h(yz)\geq h(\hat 1) + h(y)$
and $h(y) + h(zu) \geq h(\hat 1) + h(0)$.  In fact, $\bL$ is normal
w.r.t. any inputs.  Notice that $\bL$ is not distributive.

The lattice $M_3$ on the right of Fig.\ref{fig:non:normal} is not normal.
Its co-atomic hypergraph has edges $e_x = \set{y,z}$, $e_y =
\set{x,z}$, $e_z = \set{x,y}$ and the inequality corresponding to the
fractional edge cover $(1/2,1/2,1/2)$, $h(x) + h(y)+h(z) \geq 2h(\hat
1)$, fails for the submodular function shown in the figure.  In
particular, this polymatroid $h$ is not materializable as a
quasi-product database instance, but can be materialized as
$\setof{(i,j,k)}{i,j,k \in \set{0,1}, i+j+k=0 \mod 2}$.

We give a necessary condition for normality (and we conjecture it is also
sufficient):

\begin{prop}
  Let $L$ be a lattice that contains a sublattice $\set{U,X,Y,Z,\hat
    1}$ isomorphic to $M_3$, and let $\bR = \set{X,Y,Z}$.  Then $L$
  is not normal.
\end{prop}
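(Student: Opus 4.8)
The plan is to refute item~\ref{item:normal:3} of Theorem~\ref{th:normal}: I will exhibit a vector $(w_j)$ that \emph{is} a fractional edge cover of the co-atomic hypergraph $H_\co$ of $(\bL,\bR)$, together with a non-negative submodular function $h$ for which the output inequality \eqref{eq:ineq1} \emph{fails}. This mirrors the analysis of the standalone $M_3$ given just before the statement. The candidate cover is $w=(\tfrac12,\tfrac12,\tfrac12)$ on $\bR=\set{X,Y,Z}$, and the inequality I aim to violate is $h(X)+h(Y)+h(Z)\geq 2h(\hat 1)$.

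First I would check that $(\tfrac12,\tfrac12,\tfrac12)$ is a fractional edge cover of $H_\co$. The key observation is that any two of $X,Y,Z$ join to $\hat 1$ in the $M_3$ sublattice, so no co-atom $W\prec\hat 1$ can lie above two of them: if $X,Y\preceq W$ then $\hat 1=X\vee Y\preceq W$, forcing $W=\hat 1$. Hence every co-atom lies in at least two of the three hyperedges $e_X,e_Y,e_Z$, so its covering weight is $\geq 2\cdot\tfrac12=1$.

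Next I would construct the violating $h$ by pulling back the ``parity'' polymatroid from $M_3$ along an embedding. Let $h'$ be the polymatroid on $M_3$ with $h'(U)=0$, $h'(X)=h'(Y)=h'(Z)=1$, $h'(\hat 1)=2$ (the entropy of the uniform distribution on $\setof{(i,j,k)\in\set{0,1}^3}{i+j+k\equiv 0\bmod 2}$, hence monotone and submodular). Since $M_3$ is a sublattice of $\bL$ sharing the top $\hat 1$, the inclusion $M_3\hookrightarrow\bL$ preserves meets and so has a left adjoint $f:\bL\to M_3$, given explicitly by $f(V)=\bigwedge\setof{W\in M_3}{V\preceq W}$ (the least element of $M_3$ above $V$). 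This $f$ preserves joins and maps $\hat 1_\bL\mapsto\hat 1_{M_3}$, hence is an embedding per Definition~\ref{def:embedding}, and a direct check gives $f(X)=X$, $f(Y)=Y$, $f(Z)=Z$, $f(\hat 1)=\hat 1$. Because $h'$ is a polymatroid and $f$ is an embedding, $h\defeq h'\circ f$ is a non-negative submodular function on $\bL$ (cf.\ the discussion following Definition~\ref{def:embedding}) with $h(X)=h(Y)=h(Z)=1$ and $h(\hat 1)=2$. Then $\sum_j w_j h(R_j)=\tfrac12\cdot 3=\tfrac32<2=h(\hat 1)$, so \eqref{eq:ineq1} fails while $w$ is a fractional edge cover; this contradicts item~\ref{item:normal:3}, proving $L$ is not normal w.r.t.\ $\bR$.

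The main obstacle is the construction of $f$. A naive coordinatewise rule (send $V$ to $X$ whenever $V\preceq X$, and so on) need not preserve joins, since two elements below $X$ could themselves join to $\hat 1$. Realizing $f$ as the left adjoint of the sublattice inclusion --- equivalently, as ``the least $M_3$-element above $V$'' --- is exactly what repairs join-preservation (note $f(V_1)\vee f(V_2)$ is again in $M_3$ and dominates $V_1\vee V_2$, forcing $f(V_1\vee V_2)\preceq f(V_1)\vee f(V_2)$, with the reverse inequality from monotonicity) while pinning $f$ to the identity on the five sublattice elements. Everything else --- submodularity of $h'$, the covering count, and the final arithmetic --- is routine.
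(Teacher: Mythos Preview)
Your proof is correct and follows the same overall strategy as the paper: exhibit $(\tfrac12,\tfrac12,\tfrac12)$ as a fractional edge cover of $H_\co$ (via the observation that no co-atom can dominate two of $X,Y,Z$), then produce a polymatroid with $h(X)=h(Y)=h(Z)=1$ and $h(\hat 1)=2$ to violate item~\ref{item:normal:3} of Theorem~\ref{th:normal}.

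The difference lies in how the polymatroid is built. The paper writes down $h$ explicitly on $L$ --- value $0$ below $U$, value $1$ below $X$, $Y$, or $Z$ but not $U$, value $2$ otherwise --- and then verifies submodularity by a three-case analysis on $h(A\vee B)$. You instead realize the same function as $h=h'\circ f$, where $h'$ is the (modular) $M_3$ polymatroid and $f:L\to M_3$ is the left adjoint of the sublattice inclusion, and then invoke the paper's own observation after Definition~\ref{def:embedding} that embeddings pull back submodularity. Unwinding your $f(V)=\bigwedge\{W\in M_3: V\preceq W\}$ shows the two functions coincide. Your route is a bit cleaner: it replaces the case analysis with a one-line appeal to the embedding machinery already developed, and it makes transparent \emph{why} the construction works --- the counterexample on $M_3$ transports to any $L$ containing it as a top-sharing sublattice. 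The paper's direct verification is more self-contained but less illuminating.
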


\begin{proof}
  The co-atomic hypergraph has three hyper-edges, $X,Y,Z$.  Then
  $(1/2,1/2,1/2)$ is a fractional edge cover of the co-atomic
  hypergraph, because every co-atom is above at most one of $X,Y,Z$,
  because $X \vee Y = X \vee Z = Y \vee Z = \hat 1$, hence it belongs
  to at least two hyper-edges of the hypergraph.  This defines the
  inequality:
  \[
    h(X) + h(Y) + h(Z) \geq 2 h(\hat 1).
 \]
  We construct a polymatroid $h$ that violates this inequality:
   \[
      h(W) = \begin{cases}
         0 & \mbox{ when } W \preceq U \\
		1 & \mbox{ when } W \preceq X \mbox{ or } W \preceq Y \mbox{ or } W
         \preceq z \\
		2 & \mbox{ when } W \mbox{ is not below $X$ or $Y$ or $Z$}
      \end{cases}
   \]
  Then $h(X)=h(Y)=h(Z)=1$, $h(\hat 1)=2$ and therefore $h$ violates
  the inequality above.  We prove that $h$ is a polymatroid.
  Monotonicity is easy to check.  We prove that it satsifies the
  submodularity laws:

  \begin{align*}
	h(A) + h(B) \geq h(A \vee B) + h(A \wedge B)
  \end{align*}

Case 1: $h(A \vee B) = 0$.  Then all elements are below $U$, and all $h$ are 0.

Case 2: $h(A \vee B) = 1$.  If $h(A) = 1$ then the inequality follows
from $h(B) \geq h(A \wedge B)$, and similarly for $h(B) = 1$.  If both
$h(A)=h(B)=0$ then both $A,B$ are below $U$, hence $A \vee B$ is below
$U$, contradicting $h(A \vee B) = 1$

Case 3: $h(A \vee B) = 2$.  We can rule out the cases when $h(A),h(B)$
are 0,0 or 0,1 because that implies both $A,B$ are below $X$ (or $Y$
or $Z$), implying $h(A \vee B) = 1$.  Also, the case when $h(A)=2$ or
$h(B)=2$ follows immediately from monotonicity.  Thus, assume
$h(A)=h(B)=h(A \wedge B)=1$.  Here we use the structure of $M_3$.  If
$A$ and $B$ are below two distinct elements from $X,Y,Z$, e.g. $A \leq
X$ and $B \leq Y$, then $A\wedge B \leq U$ contradicting $h(A\wedge
B)=1$.  Hence both $A,B$ are below $X$.  But that implies $A \vee B$
is also below $X$, contradicting $h(A \vee B)=2$.
\end{proof}



Gottlob et al.~\cite{GLVV} define a {\em coloring} of a query $Q$ with
variables $\bX$ to be a function
$\coloring : \bX \rightarrow 2^{\bX'}$ such that
$\coloring(\bX)\neq \emptyset$, and for any FD $X\rightarrow Y$,
$\coloring(Y) \subseteq \coloring(X)$, where, for each set $X$,
$\coloring(X) \defeq \bigcup_{x \in X} \coloring(X)$.  Then the {\em
  color number of $\coloring$} is defined as
$C(\coloring) = |\coloring(\bX)|/\max_j |\coloring(R_j)|$ and the {\em
  color number of $Q$} is $\max_\coloring C(\coloring)$.  They prove
two results: if the functional dependencies are restricted to simple
keys then $|Q^D| \leq \left(\max_j |R_j^D|\right)^{C(Q)}$, and
moreover this bound is essentially tight, even for general functional
dependencies.

Colorings correspond one-to-one to integral, normal polymatroids, via
$h(x) = |\bigcup_{x \in X} \coloring(x)|$.  To see this, in one
direction let $\coloring : \bX \rightarrow 2^{\bX'}$ be a coloring and
define $h(X) \defeq |\bigcup_{x \in X} \coloring(x)|$.  The function
$f(X) \defeq \bigcup_{x \in X} \coloring(x)$ is an embedding
$f : \bL \rightarrow 2^{\bX'}$ into a Boolean algebra (we assume
w.l.o.g. that $f(\bX) = \bX'$, otherwise we redefine $\bX'$), and we
have $h = h' \circ f$ where $h'(Y) \defeq |Y|$: since $h'$ is a
strictly normal polymatroid Lemma~\ref{lemma:normal:inverse} says that $h$
is also normal.  In the other direction, if $h$ is an integral, normal
polymatroid, then its canonical embedding
(Definition~\ref{def:canonical:embedding}) defines a coloring
$\coloring$ s.t.  $h(X) \defeq |\bigcup_{x \in X} \coloring(x)|$.
Thus, colorings are essentially normal polymatroids.  This implies the
two results in~\cite{GLVV} as follows.  If $\FD$ consists only of
simple keys, then the lattice $\bL$ is distributive
(Prop.~\ref{prop:simple:FD:implies:distributive:lattice}), hence it is
normal (Corollary~\ref{cor:every:dist:lattice:is:normal}), hence at optimality 
$h^*$ can be assumed to be a normal polymatroid (Theorem~\ref{th:normal}
item~\ref{item:normal:1}), equivalently a coloring.  The second
result, tightness, follows from our
Lemma~\ref{lemma:normal:materialize}, since any normal polymatroid
(coloring) is the entropy of a quasi-product database instance.

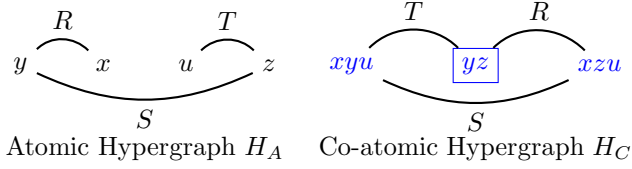
\begin{figure}
   \centering
\begin{tikzpicture}[domain=0:20, scale=0.55]
  \node[] at (2,2) (y) {$y$};
  \node[] at (4,2) (x) {$x$};
  \node[] at (6,2) (u) {$u$};
  \node[] at (8,2) (z) {$z$};
  \path[thick] (y) edge [out=45, in=135] node [above] {$R$} (x);
  \path[thick] (u) edge [out=45, in=135] node [above] {$T$} (z);
  \path[thick] (y) edge [out=-25, in=-155] node [below] {$S$} (z);

  \node[blue] at (10,2) (xyu) {$xyu$};
  \node[draw,rectangle,blue] at (13,2) (yz) {$yz$};
  \node[blue] at (16,2) (xzu) {$xzu$};
  \path[thick] (xyu) edge [out=45, in=135] node [above] {$T$} (yz);
  \path[thick] (yz) edge [out=45, in=135] node [above] {$R$} (xzu);
  \path[thick] (xyu) edge [out=-25, in=-155] node [below] {$S$} (xzu);

  \node[] at (5,0) {Atomic Hypergraph $H_A$};
  \node[] at (13,0) {Co-atomic Hypergraph $H_C$};
  \end{tikzpicture}
\caption{Atomic and Co-atomic hypergraphs for Fig.\ref{fig:q1}} 
\label{fig:hypergraph1}
\end{figure}

\begin{figure}
   \centering
\begin{tikzpicture}[domain=0:20, scale=0.5]
  \node[draw,rectangle,blue] at (7,3) (x1) {$x$};
  \node[draw,rectangle,blue] at (10,3) (y1) {$y$};
  \node[draw,rectangle,blue] at (13,3) (z1) {$z$};
  \node[] at (10,5) (11) {$\hat 1$};
  \node[] at (10,1) (01) {$\hat 0$};
  \path[] (01) edge (x1);
  \path[] (01) edge (y1);
  \path[] (01) edge (z1);
  \path[] (x1) edge (11);
  \path[] (y1) edge (11);
  \path[] (z1) edge (11);
  \node[thick, Green, left=-0.1 of x1] {\tiny $1$};
  \node[thick, red, right=-0.1 of x1] {\tiny $-1$};
  \node[thick, Green, left=-0.1 of y1] {\tiny $1$};
  \node[thick, red, right=-0.1 of y1] {\tiny $-1$};
  \node[thick, Green, left=-0.1 of z1] {\tiny $1$};
  \node[thick, red, right=-0.1 of z1] {\tiny $-1$};
  \node[thick, Green, left=-0.1 of 01] {\tiny $0$};
  \node[thick, red, right=-0.1 of 01] {\tiny $+1$};
  \node[thick, Green, left=-0.1 of 11] {\tiny $2$};

  \node[] at (1,0) (0) {$\hat 0$};
  \node[] at (-2,2) (x) {$x$};
  \node[] at (1,2) (y) {$y$};
  \node[] at (4,2) (z) {$z$};
  \node[blue] at (-2,4) (xy) {$xy$};
  \node[blue] at (1,4) (xz) {$xz$};
  \node[blue] at (4,4) (yz) {$yz$};
  \node[] at (1,6) (1) {$\hat 1$};

  \node[thick, Green, left=-0.1 of x] {\tiny $1$};
  \node[thick, red, right=-0.1 of x] {\tiny $-1$};
  \node[thick, Green, left=-0.1 of y] {\tiny $1$};
  \node[thick, red, right=-0.1 of y] {\tiny $-1$};
  \node[thick, Green, left=-0.1 of z] {\tiny $1$};
  \node[thick, red, right=-0.1 of z] {\tiny $-1$};
  \node[thick, Green, left=-0.1 of xy] {\tiny $2$};
  \node[thick, red, right=-0.1 of xy] {\tiny $0$};
  \node[thick, Green, left=-0.1 of xz] {\tiny $2$};
  \node[thick, red, right=-0.1 of xz] {\tiny $0$};
  \node[thick, Green, left=-0.1 of yz] {\tiny $2$};
  \node[thick, red, right=-0.1 of yz] {\tiny $0$};
  \node[thick, Green, left=-0.1 of 0] {\tiny $0$};
  \node[thick, red, right=-0.1 of 0] {\tiny $+1$};
  \node[thick, Green, left=-0.1 of 1] {\tiny $2$};

  \path[] (0) edge (x);
  \path[] (0) edge (y);
  \path[] (0) edge (z);
  \path[] (x) edge (xy);
  \path[] (x) edge (xz);
  \path[] (y) edge (xy);
  \path[] (y) edge (yz);
  \path[] (z) edge (xz);
  \path[] (z) edge (yz);
  \path[] (xy) edge (1);
  \path[] (yz) edge (1);
  \path[] (xz) edge (1);

  \end{tikzpicture}
\caption{A non-normal function $h$ (left) and the non-normal lattice $M_3$ (right)}
\label{fig:non:normal}
\end{figure}
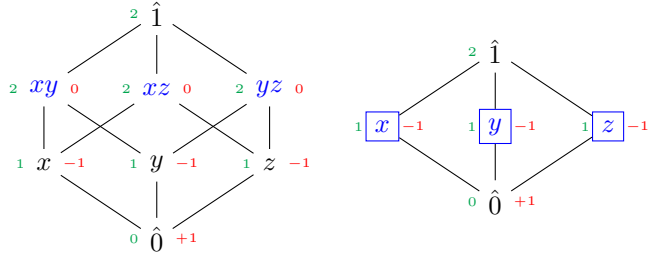

To summarize, we have introduced a framework for studying the query
upper bound under FDs that uses the lattice of its closed sets of
attributes, which extends and generalizes the framework introduced by
Gottlob et al.~\cite{GLVV}.


\section{Proof Sequence to Algorithm}

\label{sec:proof:algorithms}

We aim to design an algorithm that, given an input query $Q$ and
database instance $D$, computes $Q^D$ in time $\tilde O(2^{h^*(\hat
1)})$, where $h^*$ is the optimal solution to the LLP~\eqref{eq:llp}.  
From Lemma~\ref{lmm:generalized:inequality}, 
if $(s^*,w^*)$ is dual-optimal then $h^*(\hat 1) = \sum_{j=1}^mw^*_jn_j$
and the output inequality \eqref{eq:ineq1} holds with $w_j=w^*_j$, for 
{\em all} polymatroids.
Our main algorithmic theme is to turn this process ``inside-out'':
find a ``proof'' of inequality \eqref{eq:ineq1} consisting of a sequence of 
steps transforming the left-hand-side into the right-hand-side, then
interpret these steps as algorithmic steps whose runtime does not
exceed what the symbolic step allows.

We show that three different proof techniques for (\ref{eq:ineq1}) lead to 
three different bounds and algorithms. 
The first two are known techniques for proving Shearer's lemma
(i.e. the Boolean algebra case) that we adapt to lattices.
The bounds are tight and corresponding algorithms are worst-case optimal for some 
classes of lattices (such as distributive lattices, which include the Boolean
algebra and the lattice of queries with simpld fd's, 
subsuming results from~\cite{DBLP:conf/pods/NgoPRR12,LFTJ,GLVV}).
However, for a generic lattice the bounds these techniques can prove are 
not $h(\hat 1) \leq \sum_j w^*_jh(R_j)$, and thus the corresponding algorithms' 
runtimes in general can be worse than $\tilde O(2^{h^*(\hat 1)})$. 
An advantage of these two algorithms is that there is only one $\log$-factor 
hidden in the $\tilde O$. 

The last proof technique is our own, designed specifically to achieve the
optimal LLP bound in an arbitrary lattice; this leads to an algorithm called 
CSMA runing in the stated
time of $\tilde O(2^{h^*(\hat 1)})$, where $\tilde O$ hides a polylogarithmic 
factor. This algorithms
needs to regularize the data, somewhat similar in spirit to the
uniformization step in~\cite{MR3144912}. 


\subsection{Chain Bound and Chain Algorithm (CA)}
\label{sec:ca}

Our first proof sequence adapts Jaikumar Radhakrishnan's proof of
Shearer's lemma \cite{radhakrishnan} to lattices,
which is based on an arbitrary but fixed order
of the random variables.  We observe that a variable ordering
corresponds to a maximal chain in the Boolean algebra. This idea
allows us to generalize Radhakrishnan's proof to general lattices.
Fix a chain $\bC: \hat 0 = C_0 \prec C_1 \prec \cdots \prec C_k = \hat
1$ in $L$.  For $X \in L$, we say that {\em $X$ covers $i$} if $X
\wedge C_i \neq X \wedge C_{i-1}$.  Intuitively, $X$ covers $i$ if it
contains some variable in $C_i$ that does not appear in $C_{i-1}$.
Fix a query represented by $(\bL, \bR)$, where $\bR = (R_1, \ldots, R_m)$.

\bdefn The {\em chain hypergraph} associated with a
chain $\bC$ is $H_{\bC} = ([k], \set{e_j \suchthat j \in [m]})$, where the
hyperedge $e_j$ contains all nodes $i$ such that $R_j$ covers $i$.
\edefn

In a Boolean algebra all maximal chains have the same hypergraph,
which is the same as the query hypergraph, and the co-atomic
hypergraph.  But in a general lattice all these hypergraphs may be different, and
some output inequalities (\ref{eq:ineq1}) can be derived only from
non-maximal chains.  For that reason, we relax the maximality
requirement on the chain, as follows.  We say that the chain $\bC$ is
{\em good} for $R_j$ if:
\begin{align}
   \text{ for all } i \in [k]: \quad &
   i \in e_j  \Rightarrow C_{i-1} \vee (R_j \wedge C_i) = C_i.
   \label{eq:good}
\end{align}
The key property of ``goodness'' is that submodularity applies in
the following way:

\begin{prop}
  If $\bC$ is a maximal chain, then it is good for any $R_j$.
  Furthermore, if $\bC$ is good for $R_j$, then 
  \begin{align}
    h(R_j \wedge C_i) - h(R_j \wedge C_{i-1}) \geq h(C_i) -  h(C_{i-1}),
    \ \ \ \forall i \in e_j \label{eq:chain:cond1}
  \end{align}
  for every $\bL$-submodular function $h$.
\end{prop}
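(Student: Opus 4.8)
The plan is to treat the two assertions separately, since they are logically independent: the first is a purely order-theoretic fact about maximal chains, while the second is a single application of the submodularity law once the first has supplied the algebraic identity we need.

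For the first part, I would fix $R_j$ and an index $i \in e_j$, so that by definition $R_j \wedge C_i \neq R_j \wedge C_{i-1}$. Introduce the auxiliary element $W \defeq C_{i-1} \vee (R_j \wedge C_i)$ and observe that it is sandwiched between consecutive chain elements: $C_{i-1} \preceq W$ is immediate from the join, and $W \preceq C_i$ follows because both $C_{i-1} \preceq C_i$ and $R_j \wedge C_i \preceq C_i$. Since $\bC$ is maximal, $C_i$ covers $C_{i-1}$, so $W$ must coincide with one of the two endpoints. I would then rule out $W = C_{i-1}$: that would force $R_j \wedge C_i \preceq C_{i-1}$, hence $R_j \wedge C_i \preceq R_j \wedge C_{i-1}$, and combined with the reverse inequality coming from $C_{i-1} \preceq C_i$ this gives $R_j \wedge C_i = R_j \wedge C_{i-1}$, contradicting $i \in e_j$. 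Therefore $W = C_i$, which is exactly the goodness condition.

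For the second part, assume $\bC$ is good for $R_j$ and fix $i \in e_j$. The key is to feed the right pair into submodularity: set $X \defeq R_j \wedge C_i$ and $Y \defeq C_{i-1}$. Their join is $X \vee Y = C_{i-1} \vee (R_j \wedge C_i) = C_i$, which is precisely the goodness hypothesis (this is the one place where goodness is used). Their meet is $X \wedge Y = (R_j \wedge C_i) \wedge C_{i-1} = R_j \wedge C_{i-1}$, using $C_{i-1} \wedge C_i = C_{i-1}$ from the chain order together with commutativity and associativity of $\wedge$. Submodularity then yields $h(R_j \wedge C_i) + h(C_{i-1}) \geq h(R_j \wedge C_{i-1}) + h(C_i)$, and rearranging gives \eqref{eq:chain:cond1}.

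I do not expect a serious obstacle; the content is in choosing the substitution rather than in any estimate. The two points I would be careful about are: (a) in the first part, that the hypothesis $i \in e_j$ is genuinely what excludes the degenerate endpoint $W = C_{i-1}$, since without it goodness can fail; and (b) that the submodular inequality invoked in the second part remains valid even though $X = R_j \wedge C_i$ and $Y = C_{i-1}$ need not be incomparable, because for comparable pairs the inequality in \eqref{eq:llp} holds automatically with equality and hence applies to every pair in $L$.
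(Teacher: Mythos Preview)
Your proof is correct and follows essentially the same approach as the paper: both introduce the element $C_{i-1} \vee (R_j \wedge C_i)$, observe it lies in $[C_{i-1}, C_i]$ and cannot be $C_{i-1}$ when $i \in e_j$, then derive \eqref{eq:chain:cond1} by applying submodularity to the pair $(R_j \wedge C_i,\, C_{i-1})$. Your write-up is more explicit about the meet computation and the comparable-pair edge case, but the substance is identical.
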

\bp
Notice that $Z \defeq C_{i-1} \vee (R_j \wedge C_i)$ is always in the interval
$[C_{i-1}, C_i]$.  In any chain, $i\not\in e_j$ iff $R_j \wedge C_i =
R_j \wedge C_{i-1}$ iff $Z = C_{i-1}$.  
Therefore, any maximal chain is good for $R_j$.
To show~\eqref{eq:chain:cond1}, replace 
$C_i$ with $(R_j \wedge C_i) \vee C_{i-1}$ and the inequality 
becomes the (lattice) submodularity constraint for the elements $(R_j \wedge C_i)$ 
and $C_{i-1}$.  
\ep
We say a chain $\bC$ is {\em good} for $\bR$ if it is good
for all $R_j \in \bR$.
Radhakrishnan's proof is adapted to a lattice as follows.

\bthm[The Chain Bound] 
Let $\bC$ be any chain that is good for $\bR$.
If $(w_j)_{j=1}^m$ is any fractional edge cover of the chain hypergraph 
$H_{\bC}$,
then inequality (\ref{eq:ineq1}) holds for
any polymatroid $h$
\ethm
\bp By writing $h(R_j)$ as a telescoping sum, we obtain
\begin{eqnarray*}
   \sum_{j=1}^m w_j h(R_j) 
   & = & \sum_{j=1}^m w_j \cdot \left(\sum_{i \in e_j} (h(R_j\wedge C_i) - h(R_j\wedge
C_{i-1})) \right)\\
(\text{from Eq.\eqref{eq:chain:cond1}}) & \geq & \sum_{j=1}^m w_j \cdot \left(
    \sum_{i \in e_j} (h(C_i) - h(C_{i-1})) \right)\\
& = & \sum_{i=1}^k  \left(\sum_{j:  i\in e_j} w_j\right) \cdot
    (h(C_i) - h(C_{i-1}))  
 \\
 & \geq & \sum_{i=1}^k (h(C_i) - h(C_{i-1})) \\
 &=& h(\hat 1),
\end{eqnarray*}
where the last inequality holds because $(w_j)_{j=1}^m$ is a
fractional edge cover of $H_{\bC}$ and $h$ is monotone.
\ep
\brmk
Note that, if $\bC$ is only good for a subset $\bR'$ of $\bR$, then we can apply 
the bound to $\bR'$ with its corresponding chain hypergraph.
\ermk

\begin{example} \label{ex:cb:q1}
  Consider our running example from Figure~\ref{fig:q1}; assume $|R|=|S|=|T|=N$.
  Consider the chain $\hat 0 \prec y \prec yz \prec xyzu = \hat 1$,
  whose hypergraph has three vertices and hyperedges $e_R = \{y, xyzu\},\ e_S = \{y,
   yz\},\ e_T = \{yz, xyzu\}$
   (isomorphic to the co-atomic hypergraph in Fig.\ref{fig:hypergraph1}). 
   Thus, the chain bound on this chain is $N^{3/2}$, which is tight.
  (Consider the input $R=S=T=[\sqrt N] \times [\sqrt N]$, where the fd
   $xz\to u$ is defined by the UDF $f(x,z)=x$ and $yu \to x$ by $g(y,u)=u$.)
\end{example}

If there were no FD's, then the chain bound is exactly Shearer's lemma
\cite{MR859293} (or, equivalently, AGM bound).  
\bcor[AGM bound and
Shearer's lemma] Consider a join query on $n$ variables with no fd's.  
The chain bound on the chain ${\bC}$: $C_0 = \emptyset \prec
C_1 = [1] \prec C_2 = [2] \prec \cdots \prec C_n = [n]$ is exactly
Shearer's lemma.  
\ecor

\begin{algorithm}[th]
  \caption{The Chain Algorithm.}
  \label{algo:chain}
  \begin{algorithmic}[1]
     \Require{A query $(\bL,\bR)$, over $\bR=\{R_1, \ldots, R_m\}$}
    \Require{A good chain $\bC : \zerohat = C_0 \prec C_1 \prec \cdots \prec C_k = \onehat$}
    \Require{Chain hypergraph $H_{\bC} = ([k],\{e_1,\dots,e_m\})$}
    \State Expand $Q$ to $Q^+$
    \Comment{Thus $\text{vars}(R_j)=R_j$, $j\in [m]$}
    \State $Q_0 = \set{()}$
    \For {$i=1$ {\bf to} $k$}
       \State $Q_i = \emptyset$
       \For {each $t \in Q_{i-1}$}
          \State $T = \bigcap_{j: i \in e_j} (t \Join \Pi_{R_j \wedge
          C_i}(R_j))^+$ \Comment{Takes time $\displaystyle{\tilde O\left(\min_{j: i \in e_j}
                |t \Join \Pi_{R_j \wedge C_i}(R_j)|\right)}$. See text.}
          \label{ln:computeT}
          \State $Q_i = Q_i \cup T$
       \EndFor
    \EndFor
    \State \Return{$Q_k$}
  \end{algorithmic}
\end{algorithm}

{\bf The Chain Algorithm.} 
In the proof above of the chain bound, the main idea is to take mixtures
of conditional entropies, climbing up the chain. 
This strategy corresponds combinatorially to conditional search. From this proof, 
we derive Algorithm~\ref{algo:chain} that
computes a query $Q$ in time bounded by {\em any} fractional
edge cover of the chain hypergraph.  The algorithm assumes a fixed, good chain
${\bC}$, where every node $i$ is covered\footnote{If $i$ is not covered,
   in other words $H_{\bC}$ has an isolated vertex, then $\rho^*(H_{\bC}) = \infty$, 
and the algorithm will not work.}. It preprocesses input relations by
indexing them in an attribute order consistent with the chain.  Then,
it starts by expanding the query, as explained at the
end of Sec.~\ref{sec:background}.
Thus far it takes $\tilde O(N)$-time. Next,
the algorithm computes inductively
\[ Q_i \defeq \left( \Join_{j : R_j \wedge C_i \neq \hat 0} \Pi_{R_j\wedge
   C_i}(R_j) \right)^+,
\]
for $i=0,1,\ldots,k$, where $Q_i$ is an intermediate relation with
attributes $C_i$. 
Initially $Q_0$ consists of just the empty tuple.  
Evidently, when $i=k$, $Q_k$ is the output $Q$.

To compute $Q_i$, let $R_j$ be some relation such that $i \in e_j$, in other 
words $R_j$
has some new variable that occurs in $C_i$ but not in $C_{i-1}$. Note
that there exists at least one such $R_j$, because $i$ is covered.
Define 
$$T_{ij} = Q_{i-1} \Join \Pi_{C_i \wedge R_j}(R_j),$$ 
whose attributes are $X_{ij} \defeq C_{i-1} \cup (C_i \wedge R_j)$; and, by
Eq.\eqref{eq:good}, its closure is $X_{ij}^+ = C_{i-1} \vee (C_i
\wedge R_j) = C_i$.  Consider $T_{ij}$'s expansion $T_{ij}^+$ (see
Sec.~\ref{sec:background}): it has the same size as $T_{ij}$ and has
attributes $X_{ij}^+ = C_i$.  Therefore, 
\[ Q_i = \bigcap_{j: i \in e_j} T_{ij}^+ =
   \bigcap_{j: i \in e_j} \left( Q_{i-1} \Join \Pi_{C_i \wedge R_j}(R_j)
   \right)^+.
\]
However, we do not want to compute all the 
$T_{ij}^+$ and then compute the intersection to obtain $Q_i$, because this 
na\"ive strategy will push the runtime over the budget.
In order to stay within the time
budget, the algorithm computes this intersection differently: it
iterates over all tuples $t \in Q_{i-1}$, and for each tuple computes
the intersection $T$ in line~\ref{ln:computeT} in time $\tilde
O(\min_{j: i \in e_j}(|t \Join \Pi_{R_j \wedge C_i}(R_j)|)$.  This can
be accomplished by first computing $j_*= \argmin_{j: i \in e_j}(|t
\Join \Pi_{R_j\wedge C_i}(R_j)|)$ and tentatively setting $T = (t \Join
\Pi_{R_{j_*}\wedge C_i}(R_{j_*}))^+$.  Then, the algorithm removes from $T$ any
tuple $t'$ that is not in the intersection defined in
line~\ref{ln:computeT}.  A tuple $t' \in T$ is not removed from $T$
only if the following holds: for {\bf every} $j \neq j_*$ s.t. $R_j$
covers $i$, we have $\Pi_{C_i \wedge R_j}(t') \in \Pi_{C_i \wedge
  R_j}(R_j)$ {\bf and} $(t \Join \Pi_{C_i \wedge R_j}(t'))^+ = t'$.  
Note the crucial fact that the relation $R_{j_*}$ that is used to iterate over
may depend on the tuple $t \in Q_{i-1}$.  
Due to the pre-processing step where every input relation is indexed with
an attribute order consistent with the chain, 
$j_*$ can easily be computed in logarithmic time in data complexity.

\begin{thm} \label{th:chain} Assume the chain ${\bC}$ is good for
   $\bR$, and every node $i$ is covered (i.e. no isolated vertices).
  Then, for any fractional edge cover of the chain hypergraph,
  $(w_j)_{j=1}^m$, the time taken by the Chain Algorithm to compute $Q$
  is $\tilde O(N+\prod_{j=1}^mN_j^{w_j})$, where 
  $\tilde O$ hides a logarithmic factor needed for index lookup or 
  binary search, and a small polynomial factor in query complexity. 
\end{thm}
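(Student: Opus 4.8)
The plan is to recognize the Chain Algorithm as an instance of a worst-case optimal join over a derived query whose hypergraph is exactly $H_{\bC}$, and then to reuse the counting argument behind NPRR/Generic-join~\cite{DBLP:conf/pods/NgoPRR12,skew}. The conceptual step that makes this possible is the goodness property: for every covering $j$ (i.e.\ $i \in e_j$) we have $C_{i-1} \vee (R_j \wedge C_i) = C_i$, so a tuple $t \in Q_{i-1}$ together with a $C_i$-extension drawn from $\Pi_{R_j \wedge C_i}(R_j)$ already determines, through closure, the full $C_i$-tuple. Thus each chain level $i$ behaves like a single \emph{super-attribute}: the new part of $C_i$ is not a genuine multi-variable join but one block of variables pinned down by any single covering relation. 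I would make this precise by defining, for each $j$, a relation $\bar R_j$ over the super-attributes $e_j$ obtained by reading off, for each tuple of $R_j^+$, its value on each covered block; one checks that $|\bar R_j| \le N_j$ and that $Q$ is in size-preserving bijection with $\bigwedge_j \bar R_j$, whose query hypergraph is precisely $H_{\bC}$.

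Under this identification the Chain Algorithm is exactly a worst-case optimal join on $\set{\bar R_j}$ processed in the attribute order $1 \prec 2 \prec \cdots \prec k$: at level $i$ it extends each partial tuple $t$ by intersecting the candidate blocks contributed by the covering relations and charges the work to the smallest such set, which is the quantity $\min_{j : i \in e_j} |t \Join \Pi_{R_j \wedge C_i}(R_j)|$ of line~\ref{ln:computeT}. The heart of the bound is therefore the per-level inequality
\[
  \text{cost}_i \;=\; \sum_{t \in Q_{i-1}} \min_{j : i \in e_j} |t \Join \Pi_{R_j \wedge C_i}(R_j)| \;\le\; \prod_{j=1}^m N_j^{w_j},
\]
which I would prove in two moves. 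First, since node $i$ is covered, $\sum_{j : i \in e_j} w_j \ge 1$, and every surviving $t$ has nonempty join with each covering relation, so the elementary bound $\min_\ell a_\ell \le \prod_\ell a_\ell^{w_\ell}$ (valid for $a_\ell \ge 1$, $\sum_\ell w_\ell \ge 1$) yields $\text{cost}_i \le \sum_{t \in Q_{i-1}} \prod_{j : i \in e_j} d_j(t)^{w_j}$, writing $d_j(t)$ for the per-tuple degree. Second, I would bound this weighted sum by a generalized H\"older (Friedgut-type) inequality over the super-attributes $\set{1,\ldots,i-1}$ underlying $Q_{i-1}$: extending the product to all $j$ by the (harmless, since $Q_{i-1}$ is dangling-free) indicator of membership in the level-$(\le i-1)$ projection of $\bar R_j$, and using that $(w_j)$ covers every super-attribute of $H_{\bC}$—this is the chain-hypergraph cover condition applied at matching granularity—gives $\sum_t \prod_j f_j^{w_j} \le \prod_j (\sum_s f_j(s))^{w_j} \le \prod_j N_j^{w_j}$, since $\sum_s d_j(s) = |\Pi_{R_j \wedge C_i}(R_j)| \le N_j$. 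Summing over the $k$ levels contributes only a factor $k$, absorbed into the $\poly$(query) term.

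It remains to reconcile the idealized per-probe cost with the real work. Preprocessing—expanding $Q$ to $Q^+$ and indexing every $R_j$ in an attribute order consistent with $\bC$—takes $\tilde O(N)$ by the expansion procedure of Sec.~\ref{sec:background}. For each $t$ the algorithm locates $j_* = \argmin_{j : i \in e_j} d_j(t)$ in logarithmic time from those indices, materializes the $d_{j_*}(t)$ candidate extensions, and for every other covering $R_j$ filters each candidate by a membership test and a closure check; each such test is $\tilde O(1)$ in data complexity (a binary search or hash probe plus a constant-size closure evaluation), so the work per surviving candidate is $\tilde O(1)$ and the level-$i$ work is $\tilde O(\text{cost}_i)$. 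Combining with the per-level bound gives the claimed $\tilde O(N + \prod_j N_j^{w_j})$, with $\tilde O$ absorbing the logarithmic index/search factors and the $\poly$(query) factor from the number of levels and relations.

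The main obstacle is the second move of the per-level bound: turning the symbolic telescoping of the Chain Bound proof into a genuine count. The delicate point is that a fractional edge cover of $H_{\bC}$ only guarantees coverage \emph{per chain level}, which is strictly weaker than coverage per variable; the argument survives only because goodness collapses each level into one functionally-determined super-attribute, so that the hypergraph governing the H\"older step really is $H_{\bC}$ rather than the finer variable hypergraph. I expect the cleanest route is to make the super-attribute reduction of the first paragraph fully rigorous and then invoke the existing worst-case optimal join analysis; the alternative of re-deriving the inequality directly requires checking that the dangling-free property of $Q_{i-1}$ supplies exactly the indicator factors needed to restore a full cover at the right granularity. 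A secondary, more mechanical obstacle is the $\tilde O(1)$ per-probe bookkeeping, since the closure evaluations and the intersect-against-the-smallest filtering must be shown not to inflate the charged cost beyond $\text{cost}_i$.
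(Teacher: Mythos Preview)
Your proposal is correct and follows essentially the same approach as the paper: bound $\min_j d_j(t)$ by $\prod_{j:i\in e_j} d_j(t)^{w_j}$ using coverage of node $i$, then control $\sum_{t\in Q_{i-1}}\prod_j d_j(t)^{w_j}$ by $\prod_j N_j^{w_j}$ via a generalized H\"older/Friedgut argument over the earlier chain levels. The paper carries out your ``alternative'' route directly---it defines $g(\ell)=\sum_{t\in \Pi_{C_\ell}(Q_{i-1})}\prod_j n_{ijt}^{w_j}$ and shows $g(\ell)\le g(\ell-1)$ by applying H\"older one level at a time (exactly the standard induction that proves Friedgut), rather than first packaging the chain levels into super-attribute relations $\bar R_j$ and invoking the existing worst-case optimal join analysis as a black box.
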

\begin{proof}
  We define some notation.  For any tuple $t$ and each relation $R_j$
  such that $R_j \wedge C_i \neq \hat 0$, denote $n_{ijt} = |t \Join
  \Pi_{R_j \wedge C_i}(R_j)|$.  Note that if $t=()$ is the empty
  tuple, then $n_{ijt} = n_{ij()} = |\Pi_{R_j \wedge C_i}(R_j)|$.

Fix an iteration $i$.
Denote by $\text{Time}_i$ the time taken by
iteration $i$ (to compute $Q_i$). We will show that, for any $i \in [k]$,
\begin{equation}
   \text{Time}_i = \tilde O\left(\prod_{j : R_j \wedge C_i \neq \hat 0}
   n_{ij()}^{w_j}\right). 
                   \label{eqn:timei}
\end{equation}
Since $n_{ij()} \leq |R_j|$, the sum over all $\text{Time}_i$ will be at most
$\tilde O(\prod_{j=1}^m |R_j|^{w_j})$, as desired.
To bound $\text{Time}_i$, note that the number of steps taken to compute $Q_i$ is 
\[
   \sum_{t \in Q_{i-1}} \min_{j : i \in e_j} n_{ijt}
   \leq
   \sum_{t \in Q_{i-1}} \prod_{j : i \in e_j} n_{ijt}^{w_j}
   =
   \sum_{t \in Q_{i-1}} \prod_{j : R_j\wedge C_i \neq \hat 0} n_{ijt}^{w_j}.
\]
The equality follows from the fact that, if $t \in Q_{i-1}$ and 
$R_j \wedge C_i \neq \hat 0$ but $i \notin e_j$, then $n_{ijt} = 1$.
Hence, to show~\eqref{eqn:timei}, it is sufficient to show the following:
\begin{equation}
   \sum_{t \in Q_{i-1}} \prod_{j : R_j\wedge C_i \neq \hat 0} n_{ijt}^{w_j}
   \leq \prod_{j : R_j \wedge C_i \neq \hat 0} |n_{ij()}|^{w_j}.
  \label{eqn:timei2}
\end{equation}
For $0 \leq \ell \leq i-1$, define 
\begin{eqnarray*}
   T_\ell &\defeq& \Pi_{C_\ell}(Q_{i-1}), \\
   g(\ell) &\defeq&
 \sum_{t \in T_{\ell}} \prod_{j : R_j\wedge C_i \neq \hat 0}
 n_{ijt}^{w_j}.
 \end{eqnarray*}
By convention, $T_0$ has a single empty tuple $t=()$.  
Then, \eqref{eqn:timei2} is equivalent to $g(i-1) \leq g(0)$. 
Thus, to show~\eqref{eqn:timei2} it is sufficient to show that $g(\ell)$ is 
non-increasing in $\ell$:
\begin{eqnarray*}
g(\ell) &=&\sum_{t \in T_\ell} \prod_{j : R_j\wedge C_i \neq \hat 0} n_{ijt}^{w_j}\\
&=&
\sum_{u \in T_{\ell-1}}
\sum_{\substack{v :\\(u,v) \in T_{\ell}}}
\prod_{j : R_j\wedge C_i \neq \hat 0} n_{ij(u,v)}^{w_j}\\
&=&
\sum_{u \in T_{\ell-1}}
\prod_{\substack{j : R_j\wedge C_i \neq \hat 0\\ i-1 \notin e_j}} n_{iju}^{w_j}
\sum_{\substack{v :\\(u,v) \in T_\ell}}
\prod_{\substack{j : R_j\wedge C_i \neq \hat 0\\ i-1\in e_j}} n_{ij(u,v)}^{w_j}\\
&\leq&
\sum_{u \in T_{\ell-1}}
\prod_{\substack{j : R_j\wedge C_i \neq \hat 0\\ i-1 \notin e_j}} n_{iju}^{w_j}
\prod_{\substack{j : R_j\wedge C_i \neq \hat 0\\ i-1\in e_j}} 
\left(\sum_{\substack{v : \\(u,v) \in T_{\ell}}}
n_{ij(u,v)}\right)^{w_j}\\
&\leq &
\sum_{u \in T_{\ell-1}}
\prod_{\substack{j : R_j\wedge C_i \neq \hat 0\\ i-1 \notin e_j}} n_{iju}^{w_j}
\prod_{\substack{j : R_j\wedge C_i \neq \hat 0\\ i-1\in e_j}} 
n_{iju}^{w_j}\\
&=&
\sum_{u \in T_{\ell-1}}
\prod_{j : R_j\wedge C_i \neq \hat 0} n_{iju}^{w_j}\\
&=& g(\ell-1).
\end{eqnarray*}
The first inequality is generalized H\"older inequality, which applies because
$(w_j)_{j=1}^m$ fractionally cover vertex $i-1$ of the chain hypergraph $H_{\bC}$.
The second inequality holds because, for every $u \in T_{\ell-1}$,
$u \Join \Pi_{R_j\cap C_i}(R_j) \supseteq T_\ell$.
\end{proof}
We present some examples.

\begin{example} \label{ex:ca:q1}
  Continue with Example~\ref{ex:cb:q1}.
  The variable order corresponding to
  the chain $\zerohat \prec y \prec yz \prec \hat 1$
  is $y, z, (xu)$, where $x, u$ can be arranged in any order.  
  The Chain Algorithm computes three intermediate relations:
\begin{eqnarray*}
   Q_1(y)  &=& \Pi_y(R(xy)) \cap \Pi_y(S(yz)) \\
   Q_2(yz) &=& Q_1(y) \Join S(yz) \\
   Q_3(yzxu) &=& (Q_2(yz) \Join R(xy))^+ \cap (Q_2(yz) \Join T(zu))^+
\end{eqnarray*}
The first two steps are straightforward.  In the third, the join
$Q_2(yz) \Join R(xy)$ results in a relation with attributes $xyz$,
which needs to be expanded with $u$ (e.g. by computing the
user-defined function $u=f(x,z)$), and similarly for the second join.
However, the algorithm does not compute the joins first then intersect,
instead it iterates over tuples $t \in Q_2(yz)$ and computes an
intersection on a per-tuple basis using the less expensive option. 
In particular, for each $t = (y,z) \in Q_2$, it compares 
$|t \Join R(x,y)|$ and $|t \Join T(z,u)|$ (which can be done in logarithmic
time given that $R$ was indexed with attribute order $(y,x)$
and $T$ with order $(z,u)$).
Suppose $|t \Join R(x,y)| \leq |t \Join T(z,u)|$, then for 
each $(y,z,x) \in t \Join R(x,y)$ the algorithm uses the $xz \to u$ FD to
obtain the tuple $(y,z,x,u)$. The next task is to verify that this tuple
is indeed in the intersection as defined in line~\ref{ln:computeT}. This is done with
{\em two} verifications: we make sure that $(z,u) \in T$, {\bf and} that $yu \to
x$ is indeed satisfied.\footnote{This is a subtle step in the algorithm that is 
easy to miss at the first read.}

On this chain the algorithm runs in
optimal time $O(N^{3/2})$.  We note that all previously proposed known
worst-case optimal join algorithms for queries without
FD's~\cite{skew,DBLP:conf/pods/NgoPRR12,LFTJ} require $\Omega(N^2)$ to
compute the previous query on this instance:
$R = S = T = \{(1, i) \suchthat i \in [N/2]\} \cup \{(i, 1)
\suchthat i \in [N/2]\}.$ For example, LFTJ with variable order $y,
z, x, u$ computes queries $Q_1(y),Q_2(yz),Q_3(xyz),Q_4(xyzu)$, where
$|Q_3| = N^2$.
Note, however, that not every maximal chain gives an optimal bound: for 
example the chain
$\hat 0 \prec x \prec xu \prec xyu \prec xyzu=\hat 1$ has hyperedges:
$e_R = \{x, xyu\}$, $e_S = \{xyu, xyzu\}$, $e_T = \{xu, xyzu\}$
(isomorphic to the atomic hypergraph in Fig.\ref{fig:hypergraph1}),
and the optimal fractional edge covering number $\rho^*=2$, hence the
chain bound is $|Q^D| \leq N^2$, which is sub-optimal.
\end{example}

\begin{figure}[t]
\centering
\begin{tikzpicture}[domain=0:20, scale=0.6]
\node[] at (7.5,0) (0) {$\hat 0$};
\node[] at (0,2) (12) {$a$};
\node[] at (3,2) (13) {$b$};
\node[] at (6,2) (14) {$c$};
\node[] at (9,2) (23) {$d$};
\node[] at (12,2) (24) {$e$};
\node[] at (15,2) (34) {$f$};

\node[draw,rectangle] at (1.5,4) (1) {$abc$};
\node[thick, blue, above=0 of 1] {$R$};
\node[draw,rectangle] at (5.5,4) (2) {$ade$};
\node[thick, blue, above=0 of 2] {$S$};
\node[draw,rectangle] at (9.5,4) (3) {$bdf$};
\node[thick, blue, above=0 of 3] {$T$};
\node[draw,rectangle] at (13.5,4) (4) {$cef$};
\node[thick, blue, above=0 of 4] {$U$};

\node[] at (7.5,6) (top) {$\hat 1$};

\path[] (0) edge (12);
\path[] (0) edge (13);
\path[] (0) edge (14);
\path[] (0) edge (23);
\path[] (0) edge (24);
\path[] (0) edge (34);
\path[] (12) edge (1);
\path[] (12) edge (2);
\path[] (13) edge (1);
\path[] (13) edge (3);
\path[] (14) edge (1);
\path[] (14) edge (4);
\path[] (23) edge (2);
\path[] (23) edge (3);
\path[] (24) edge (2);
\path[] (24) edge (4);
\path[] (34) edge (3);
\path[] (34) edge (4);

\path[] (1) edge (top);
\path[] (2) edge (top);
\path[] (3) edge (top);
\path[] (4) edge (top);
\end{tikzpicture}
\caption{A query where the chain bound is not optimal}
\label{fig:bad:for:chain}
\end{figure}
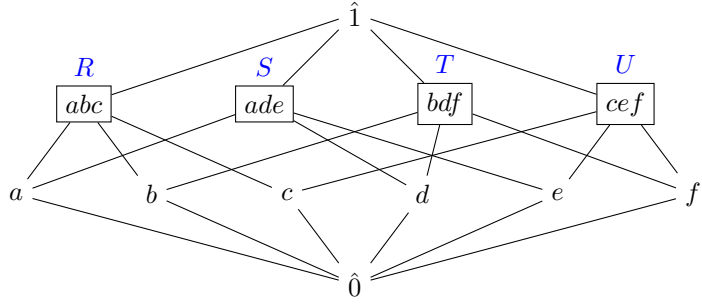

\paragraph*{Choosing a good chain.}

When there are FDs, it is not clear if there even exists a chain with a finite
chain bound.
We show in this section how to select a chain so that the chain hypergraph does
not have an isolated vertex. This means the fractional edge cover number of the
chain hypergraph is finite. The first method to select the chain corresponds to
a generalization of Shearer's lemma.
The second method interestingly corresponds to a {\em dual} version of 
Shearer's lemma.

\bcor[Shearer's Lemma for FDs]
\label{cor:generalizing Shearer}
Consider a query $(\bL, \bR)$.
Let $\bJ$ be the set of join-irreducibles that are below
the order ideal generated by elements in $\bR$.
There exists a set $\{X_1,\dots,X_n\}$ of members of $\bJ$ satisfying the
following.
Let ${\bC}$ be the chain $C_0 = \hat 0 \prec C_1 \prec \cdots
\prec C_n = \hat 1$, where $C_i = \bigvee_{j=1}^i X_j$.
Then the chain ${\bC}$ is good and the hypergraph $H_{\bC}$ has no isolated vertex.
\ecor
\bp
Note that $\bigvee_{Z\in \bJ} Z = \onehat$ because $\bigvee_{j=1}^m R_j = \onehat$.
We construct the sequence $X_1,\dots,X_n$ inductively.
For $i \in [n]$, suppose we have constructed the sequence
$X_1,\dots,X_{i-1}$. Let $Y = \bigvee_{j=1}^{i-1}X_j$.
(If $i=1$ then $Y = \zerohat$.)
Let $X_i$ be an element in $\bJ - \{X_1,\dots,X_{i-1}\}$ such that
the $Y \prec Y \vee X_i$ and $Y\vee X_i$
is {\em minimal} among all such $X_i$.
We stop at $X_n$ when the join is $\onehat$.

For $i \in [n]$, let $C_i = \bigvee_{j=1}^i X_j$. 
We first show that the chain $\bC = (C_i)_{i=1}^n$ is a good chain for $\bR$.
Note that all elements $C_i$ are distinct. 
Consider any $R_j \in \bR$, and suppose $R_j \wedge C_{i-1} \neq R_j \wedge C_i$
but $(R_j \wedge C_i) \vee C_{i-1} \prec C_i$.
Then, $S = (R_j \wedge C_i) \vee C_{i-1}$ is strictly in between $C_{i-1}$ and
$C_i$. Because, if $S = C_{i-1}$ then $R_j \wedge C_{i-1} = R_j \wedge C_i$.
Let $Y = \bigvee_{j=1}^{i-1} X_j$.
We show that there is an element $X \in \bJ - \{X_1,\dots,X_{i-1}\}$ such
that $C_{i-1} \prec Y \vee X \prec C_i$. 
This will violate the choice of $X_i$ and the proof of the
claim would be complete.
Note that $S$ is the join of all join-irreducibles below $C_{i-1}$ and 
below $R_j \wedge C_i$. The join-irreducibles below $R_j \wedge C_i$ are below
$R_j $,
and thus they are in the set $\bJ$. 
Pick $X$ to be a join-irreducible below $R_j \wedge C_i$ but not below $C_{i-1}$ 
and we are done.

To see that every vertex $i\in [n]$ is covered in the chain-cover hypergraph, 
note that if $X_i \preceq R_j $ then $R_j \wedge C_i \neq R_j \wedge C_{i-1}$.
\ep

\begin{figure}[th]
\begin{center}
\begin{tikzpicture}[scale=.8]
\node[] at (0,0) (0) {$\hat 0$};
\node[draw, rectangle] at (-3,1.5) (x) {$x$};
\node[] at (0,1.5) (z) {$z$};
\node[draw, rectangle] at (3,1.5) (y) {$y$};
\node[] at (-2,3) (xz) {$xz$};
\node[] at (2,3) (yz) {$yz$};
\node[] at (0,4.5) (xyz) {$xyz=\onehat$};

\path[] (0) edge (x);
\path[] (0) edge (y);
\path[] (0) edge (z);
\path[] (x) edge (xz);
\path[] (y) edge (yz);
\path[] (z) edge (xz);
\path[] (z) edge (yz);
\path[] (xz) edge (xyz);
\path[] (yz) edge (xyz);
\end{tikzpicture}
\end{center}
\caption{Lattice for $Q \cd R(x),S(y),xy\to z$}
\label{fig:maximal:no:good}
\end{figure}

\begin{example}
Recall an example query with UDFs: $Q \cd R(x), S(y), z = f(x,y)$ where
$f$ is a UDF. The lattice is shown in Fig.~\ref{fig:maximal:no:good}.
If we selected any maximal (and thus good) chain, such
as $\hat 0 \prec z \prec xz \prec xyz$, 
or $\hat 0 \prec x \prec xz \prec xyz$,
then $z$ or $xz$ would be an isolated vertex in the chain hypergraph.
Corollary~\ref{cor:generalizing Shearer} tells us to construct a chain by
joining the join irreducibles below $R$ and $S$, which are $x,y$.
Hence, we would select a chain such as $\hat 0 \prec x \prec xyz$ which has no
isolated vertices. 
The algorithm runs in time $O(N^2)$, which is worst-case optimal.
This chain is not maximal.
\end{example}

We can flip the above proof, working from the meet-irreducibles instead of the
join-irreducibles, to obtain the dual version of Shearer's lemma.

\bcor[Dual Shearer's Lemma for FDs]
\label{cor:dual Shearer}
Let $(\bL, \bR)$ represent an input query.
There exists a sequence $X_1,\dots,X_n$ of meet-irreducibles of $\bL$
satisfying the following.
For $i = 0, 1, \dots n$, define $C_i = \bigwedge_{j=1}^{n-i}X_j$. The chain
$\bC = (C_i)_{i=0}^n$ is good for $\bR$ and the chain hypergraph $H_{\bC}$ has no
isolated vertex.
\ecor

\paragraph*{A condition for the chain bound to be tight.}
Now that we know how to select a chain so that the bound is finite, the next 
question is whether the chain bound is tight for some class of queries.

\begin{example} The chain bound is optimal on some non-normal
  lattices.  Consider the query $$R(x),S(y),T(z),xy\rightarrow
  z,xz\rightarrow y, yz\rightarrow x,$$ whose lattice is $M_3$ in
  Fig.~\ref{fig:non:normal} (a non-normal lattice), with
  $|R|=|S|=|T|=N$.  The chain $\hat 0 \prec x \prec xyz= \hat 1$ gives
  a the tight upper bound $N^2$, because its chain hypergraph is
  $e_R = \{x\}, e_S = \{\hat 1\}, e_T =\{\hat 1\}$, has optimal edge
  covers $(w_x,w_y,w_z) = (1,1,0)$ and $=(1,0,1)$.
\end{example}

Before presenting a sufficient condition for when the chain bound is tight,
we need a simple lemma.

\blmm
Let $(\bL, \bR)$ represent a query.
Let $\bC: \zerohat = C_0 \prec C_1 \prec \cdots \prec C_k = \onehat$ be a chain 
that is good for every $X \in L$. 
For every $S \in L$, define 
$e(S) = \{ i \in [k] \suchthat S \wedge C_i \neq S \wedge C_{i-1}\}$.
Then, $X \preceq Y$ implies $e(X) \subseteq e(Y)$.
\label{lmm:monotone}
\elmm
\bp
Consider any $i \notin e(Y)$; then,
$X \wedge C_i \preceq Y \wedge C_i = Y \wedge C_{i-1} \preceq C_{i-1}.$
Thus, $X \wedge C_i = X \wedge C_{i-1}$ which means $i \notin e(X)$.
\ep

\bthm
Let $\zerohat = C_0 \prec C_1 \cdots \prec C_k = \onehat$ be a chain that is
good for every $X\in L$. Also, suppose for every $X, Y \in L$, the following
holds
\begin{equation}
e(X \vee Y) \subseteq e(X) \cup e(Y).
\label{eqn:tightchain}
\end{equation}
Then the chain bound is tight on this lattice.
\ethm
\bp
Let $h^*$ be an optimal (polymatroid) solution to LLP on $L$. We define a new
$\bL$-function $u$ as follows.
\begin{eqnarray*}
   u(\zerohat) &=& 0\\
   u(X) &=& \sum_{i \in e(X)} (h^*(C_i) - h^*(C_{i-1})), \ X \in L.
\end{eqnarray*}

{\bf Claim 1.} $u$ is also an optimal solution to LLP on $\bL$.

We first show that $u$ is a polymatroid. Non-negativity of $u$ follows from
monotonicity of $h^*$. Monotonicity of $u$ follows from
Lemma~\ref{lmm:monotone}. Submodularity follows from the assumption
that $e(X \vee Y) \subseteq e(X) \cup e(Y)$. In fact, from
Lemma~\ref{lmm:monotone} $e(X) \subseteq e(X\vee Y)$ and $e(Y) \subseteq e(X\vee
Y)$, which means $e(X \vee Y) = e(X) \cup e(Y).$ And thus $u$ is modular.

Next, we show that $u(X) \leq h^*(X)$ for all $X \in L$. In particular, $u$ is a
feasible solution to LLP on $L$. This is proved by induction on
$|e(X)|$. The base case when $|e(X)| = 0$ is trivial. For the inductive step,
let $j$ be the maximum number in $e(X)$. Then,
\begin{eqnarray*}
   u(X) &=& \sum_{i \in e(X)} (h^*(C_i) - h^*(C_{i-1}))\\
    &=& \sum_{i \in e(X\wedge C_{j-1})} (h^*(C_i) - h^*(C_{i-1}))
   + h^*(C_j) - h^*(C_{j-1})\\
   &=& u(X \wedge C_{j-1}) + h^*(C_j) - h^*(C_{j-1})\\
   (\text{induction hypothesis})  
   &\leq& h^*(X \wedge C_{j-1}) + h^*(C_j) - h^*(C_{j-1})\\
   (\text{submodularity of $h^*$})
   &\leq& h^*(X)
\end{eqnarray*}

Since $u(\onehat) = h^*(\onehat)$, $u$ is an optimal solution to $\lelp$ on
$L$ as claimed.

{\bf Claim 2.} $u$ is materializable.

Let $g$ be a polymatroid on the Boolean algebra $B_k$, where
\begin{eqnarray*}
   g(i) &=& h^*(C_i) - h^*(C_{i-1})\\
   g(X) &=& \sum_{i \in X} g(i). \\
\end{eqnarray*}
Then $g$ is a modular polymatroid and it can be materialized with a product
instance: $D = \prod_{i=1}^k [2^{g(i)}]$.
We exhibit an embedding from $u$ to $f$. 
The map $f: L \to B_k$ defined by $f(X) = e(X)$ is an embedding because
$X\vee Y \to e(X\vee Y) = e(X) \cup e(Y)$.
One can verify that $u = g \circ f$.
Hence, $u$ is materializable from
Proposition~\ref{prop:entropy-preserving-embedding}.
\ep

\bcor\label{cor:cb tight on distributive lattices}
The chain bound is tight on distributive lattices.
\ecor
\bp
Consider any maximal chain on the distributive lattice $L$.
We only have to verify that
\[ e(X\vee Y) \subseteq e(X) \cup e(Y). \]
We prove this by showing that if $i \notin e(X) \cup e(Y)$ then $i \notin
e(X\vee Y)$. Suppose
\begin{eqnarray*}
   X\wedge C_i &=& X\wedge C_{i-1}\\
   Y\wedge C_i &=& Y\wedge C_{i-1}.
\end{eqnarray*}
Then,
\begin{eqnarray*}
   (X\vee Y) \wedge C_i &=& (X\wedge C_i) \vee (Y \wedge C_i)\\
                        &=& (X\wedge C_{i-1}) \vee (Y \wedge C_{i-1})\\
                        &=& (X\vee Y) \wedge C_{i-1}.
\end{eqnarray*}
\ep

\begin{example}[Tightness on non-distributive lattice]
   The chain bound is tight on the lattice (and the chain) shown in
   Figure~\ref{fig:ef-ex2}. The red sets are the sets $e(X)$, $X\in L$.
   In particular, the characterization condition~\eqref{eqn:tightchain} goes
   beyond distributive lattices.
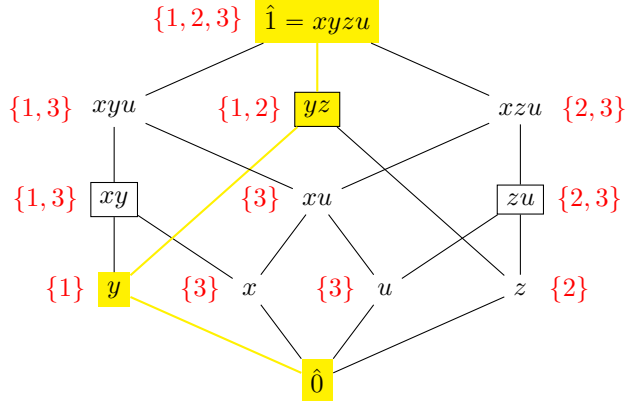
\begin{figure}[t]
\begin{center}
\begin{tikzpicture}[domain=0:20, scale=0.6]
  \node[fill=yellow] at (4.5,0) (0) {$\hat 0$};
  \node[fill=yellow] at (0,2) (y) {$y$};
  \node[] at (3,2) (x) {$x$};
  \node[] at (6,2) (u) {$u$};
  \node[] at (9,2) (z) {$z$};
  \node[draw,rectangle] at (0,4) (xy) {$xy$};
  \node[] at (4.5,4) (xu) {$xu$};
  \node[draw,rectangle] at (9,4) (zu) {$zu$};
  \node[] at (0,6) (xyu) {$xyu$};
  \node[fill=yellow,draw,rectangle] at (4.5,6) (yz) {$yz$};
  \node[] at (9,6) (xzu) {$xzu$};
  \node[fill=yellow] at (4.5,8) (1) {$\hat 1 = xyzu$};

  \node[thick, red, left=0.04 of y] {$\{1\}$};
  \node[thick, red, left=0.04 of x] {$\{3\}$};
  \node[thick, red, left=0.04 of u] {$\{3\}$};
  \node[thick, red, right=0.04 of z] {$\{2\}$};
  \node[thick, red, left=0.04 of xy] {$\{1,3\}$};
  \node[thick, red, left=0.04 of xu] {$\{3\}$};
  \node[thick, red, right=0.04 of zu] {$\{2,3\}$};
  \node[thick, red, left=0.00 of xyu] {$\{1,3\}$};
  \node[thick, red, left=0.04 of yz] {$\{1,2\}$};
  \node[thick, red, right=0.00 of xzu] {$\{2,3\}$};
  \node[thick, red, left=0.04 of 1] {$\{1,2,3\}$};

  \path[] (0) edge (x);
  \path[thick,yellow] (0) edge (y);
  \path[] (0) edge (z);
  \path[] (0) edge (u);
  \path[] (x) edge (xy);
  \path[] (x) edge (xu);
  \path[] (u) edge (xu);
  \path[] (u) edge (zu);
  \path[] (z) edge (yz);
  \path[] (z) edge (zu);
  \path[thick,yellow] (y) edge (yz);
  \path[] (y) edge (xy);
  \path[] (xy) edge (xyu);
  \path[] (xu) edge (xyu);
  \path[] (xu) edge (xzu);
  \path[] (zu) edge (xzu);
  \path[] (xyu) edge (1);
  \path[thick,yellow] (yz) edge (1);
  \path[] (xzu) edge (1);
  \end{tikzpicture}
\end{center}
   \caption{Condition \eqref{eqn:tightchain} holds for this lattice}
   \label{fig:ef-ex2}
\end{figure}
\end{example}

From Proposition~\ref{prop:simple:FD:implies:distributive:lattice} we obtain
the following which subsumes results from \cite{GLVV} in the simple FD case.
\bcor
If all FDs are simple, then the chain bound is tight and
the chain algorithm is worst-case optimal.
\ecor

\begin{example}[Chain bound is not always tight] \label{ex:bad:for:chain} 
For some queries, even with
  normal lattices, no chains give a tight upper bound.  This is
  illustrated by $Q$ in Fig.~\ref{fig:bad:for:chain}.  Consider the
  chain $\hat 0 \prec a \prec abc \prec \hat 1$.  Denoting
  $\set{1,2,3}$ the vertices of the chain hypergraph, its edges are
  $e_R = \{1,2\}$, $e_S = \{1,3\}$, $e_T = e_U = \{2,3\}$.  The
  optimal fractional edge cover is $(w_R, w_S, w_T, w_U) =
  (1/2,1/2,1/2,0)$, $\rho^*=3/2$, therefore the chain bound is $|Q^D|
  \leq N^{3/2}$ and the Chain Algorithm will run in this time.  All
  other maximal chains give the same bound (non-maximal chains are not
  good).  However, we show in the next section that $|Q^D| \leq
  N^{4/3}$, which means that no chain bound is tight.
\end{example}

\paragraph*{Discussion} 
We have shown that the chain bound implies Shearer's lemma and AGM bound from
the Boolean algebra; more generally it is tight on all distributive 
lattices. The lattice corresponding to simple FDs is distributive, thus the chain 
algorithm is worst-case optimal for simple FDs. We also presented examples and 
proved results on how to choose a good
chain so that the chain hypergraph does not have an isolated vertex.

For brevity, we have described the Chain Algorithm using
breadth-first search (or bottom-up). It can also be
adapted to a depth-first implementation (or top-down), as used for
example in LogicBlox' LFTJ~\cite{LFTJ}, which does not
materialize intermediate relations.


\subsection{Sub-Modularity Bound and Sub-Modularity Algorithm (SMA)}
\label{sec:sma}

For some lattices, like Example~\ref{ex:bad:for:chain}, the Chain
Algorithm is sub-optimal no matter what chain we choose.  We describe
here a second proof technique for Shearer's lemma and its adaptation
to lattices, and derive a new algorithm.  We assume w.l.o.g.  that all
coefficients $w_j$ in (\ref{eq:ineq1}) are rational numbers, $w_j =
q_j/d$, and rewrite the set $\set{R_1, \ldots, R_m}$ of lattice 
elements as a multiset $\calB = \set{B_1, B_2, \ldots, B_{\sum_j q_j}}
\subseteq L$ where each lattice element $R_j$ is copied $q_j$ times.  Then, 
inequality (\ref{eq:ineq1}) becomes:
\begin{equation}
  \sum_i h(B_i) \geq d \cdot h(\hat 1) \label{eq:ineq2}
\end{equation}

\paragraph*{The Sub-modularity Proof Sequence.} Balister and
Bollob{\'{a}}s~\cite{DBLP:journals/combinatorica/BalisterB12} give a
simple proof of (\ref{eq:ineq2}) that uses only the sub-modularity
inequality, which we adapt here for arbitrary lattices. We refer to this proof
strategy a {\em sub-modularity proof} or SM-proof.
An SM-proof starts with a mutiset 
$\calB = \set{B_1, B_2, \ldots}$, and applies repeatedly
{\em sub-modularity steps} (SM-steps)\footnote{Called
  ``elementary compression''
  in~\cite{DBLP:journals/combinatorica/BalisterB12}.}. 
An SM-step, $(X,Y)
\rightarrow (X \wedge Y, X \vee Y)$, consists of removing two
incomparable elements $X,Y$ from $\calB$, and replacing them with $X\wedge
Y, X \vee Y$.  A {\em sub-modularity proof sequence}, or simply an
SM-proof, repeatedly applies SM-steps to a multiset $\calB$ until all elements 
in $\calB$ are comparable; at that point $\calB$ is
a chain $\hat 1 \succeq C_1 \succeq C_2 \succeq\ldots$, and we denote $d,
d_1, d_2, \ldots$ the multiplicities of the elements in $\calB$.  The {\em
sub-modularity bound} corresponding to this proof sequence is:
\[
\sum_i h(B_i) \geq d \cdot h(\hat 1) + d_1\cdot h(C_1) + \ldots \geq d \cdot h(\hat 1)
\]

\begin{example}
The following is a valid instance of inequality~\eqref{eq:ineq1}
on the Boolean algebra lattice $2^{\{a,b,c,d\}}$:
   \begin{equation}
      h(abc)+\frac 1 3 h(abd)+\frac 1 3 h(acd)+\frac 1 3 h(bcd) \geq h(\hat 1). 
      \label{eqn:redundant:SM:ex}
   \end{equation}
This corresponds to $d=3$, $\calB=\{ abc,abc,abc,abd, acd, bcd\}$, and the rewritten
inequality
\[ h(abc)+h(abc)+h(abc)+ h(abd)+h(acd)+ h(bcd) \geq 3 \cdot h(\hat 1). \]
We can prove this inequality by applying the following SM-steps:
   \begin{eqnarray*}
      h(abc)+h(abd) &\geq& h(\hat 1)+h(ab)\\
      h(abc)+h(acd) &\geq& h(\hat 1)+h(ac)\\
      h(abc)+h(bcd) &\geq& h(\hat 1)+h(bc)\\
      h(ab)+h(ac) &\geq& h(abc)+h(a)\\
      h(a)+h(bc)&\geq& h(abc) +h(\hat 0).
   \end{eqnarray*}
   Altogether the SM-steps proved that 
   \[ h(abc)+h(abc)+h(abc)+ h(abd)+h(acd)+ h(bcd) \geq 3 \cdot h(\hat 1)+
   2h(abc)+h(\hat 0), \]
   which implies~\eqref{eqn:redundant:SM:ex}. This example also shows that there
   is extra information we could not make use of: the $2h(abc)$ term that is left
   ``dangling''. As we shall see later, in some cases this is a manifestation 
   of the limitation of SM-proofs.
\end{example}

\begin{example} Continuing Example~\ref{ex:bad:for:chain} where the Chain bound
   is not tight, the SM-proof is very simple:
  \begin{eqnarray*}
    h(abc)+h(ade) &\geq& h(\hat 1) + h(a)\\
    h(bdf)+h(cef) &\geq& h(\hat 1) + h(f)\\
    h(a)+h(f) & \geq & h(\hat 1) + h(\hat 0),
  \end{eqnarray*}
  resulting in the SM-bound:
  \[ h(\hat 1) \leq \frac 1 3 h(abc) + \frac 1 3 h(ade) + \frac 1 3
  h(bdf)+\frac 1 3 h(cef).
  \]
  In particular, when all input relations have size $N$, the output size bound 
  is $N^{4/3}$.  This coincides with the co-atomic hypergraph cover,
  hence it is tight.  
  \label{ex:ex:bad:for:chain:good:for:smp}
\end{example}

Obviously, any SM-bound is a correct inequality of the form
(\ref{eq:ineq2}), but the converse does not always hold, as we shall
see.  The converse holds, however, in distributive lattices.
We next show that given
any fractional edge cover $(w_j)_{j=1}^m$ of the
co-atomic hypergraph of a distributive lattice, 
inequality (\ref{eq:ineq1}) is provable through
an SM-proof sequence and, moreover, one can apply the SM-steps in any
order. It follows that every distributive lattice is normal!

For each element
$X \in L$, define $e_X = \setof{Z}{Z \in V_\co, X \not\preceq Z}$;
notice that, if $X$ is an input, $X = R_j$, then $e_{R_j}$ is a
hyperedge of $H_\co$.  A $d$-cover of the co-atoms is a multiset $\calB$
s.t. for each vertex $Z \in V$ there are at least $d$ elements $X \in
B$ s.t. $Z \in e_X$.  Then:

\begin{lmm}
  Suppose $L$ is a distributive lattice, and $B$ is a $d$-cover of the
  co-atoms.  If $B'$ is obtained from $B$ by applying one SM-step
  $(X,Y)\rightarrow (X\wedge Y, X \vee Y)$, then $B'$ is also a
  $d$-cover of the co-atoms.  
\end{lmm}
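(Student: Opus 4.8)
The plan is to prove something slightly stronger than stated: that a single SM-step leaves the coverage of \emph{every} co-atom unchanged, where by the coverage of a co-atom $Z$ I mean the number of elements of the multiset that are not below $Z$. Since $B$ being a $d$-cover means every co-atom has coverage at least $d$, exact preservation of coverage immediately gives that $B'$ is a $d$-cover as well.

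First I would set up the bookkeeping using the counting function $c(Z,\calS)=|\setof{U}{U\in\calS,\ U\preceq Z}|$ from Lemma~\ref{lemma:compaction}. Write $B$ as the untouched sub-multiset $\mathcal{R}$ together with the pair $\set{X,Y}$, so that $B'$ is $\mathcal{R}$ together with $\set{X\wedge Y, X\vee Y}$. For a fixed co-atom $Z$, the coverage under $B$ equals the coverage contributed by $\mathcal{R}$ plus $2-c(Z,\set{X,Y})$, while the coverage under $B'$ equals the same $\mathcal{R}$-contribution plus $2-c(Z,\set{X\wedge Y, X\vee Y})$. Hence the coverage of $Z$ is preserved precisely when $c(Z,\set{X,Y})=c(Z,\set{X\wedge Y, X\vee Y})$.

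The crucial ingredient is that this equality holds at every co-atom. I would first record the general lattice fact that every co-atom $Z$ is meet-irreducible: the only element strictly above a co-atom is $\hat 1$ (otherwise there would be an element strictly between $Z$ and $\hat 1$, contradicting that $\hat 1$ covers $Z$), so if $Z=P\wedge Q$ then $P,Q\succeq Z$ forces $P,Q\in\set{Z,\hat 1}$, and $P\wedge Q=Z\neq\hat 1$ then forces $P=Z$ or $Q=Z$. This needs no distributivity. Now, since $L$ is distributive and $Z$ is meet-irreducible, the equality clause of Lemma~\ref{lemma:compaction} yields exactly $c(Z,\set{X,Y})=c(Z,\set{X\wedge Y, X\vee Y})$, as needed.

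I expect the only subtle point --- and the reason the lemma requires distributivity --- to be the direction of the inequality in Lemma~\ref{lemma:compaction}. For a general lattice that lemma gives only $c(Z,\set{X,Y})\leq c(Z,\set{X\wedge Y, X\vee Y})$, which would let the coverage of $Z$ \emph{decrease} under an SM-step and thereby destroy the $d$-cover property; it is exactly the equality available in the distributive case (through meet-irreducibility of the co-atoms) that guarantees coverage is conserved. Once the per-co-atom equality is in hand the conclusion is immediate, and it also explains why the statement cannot be expected to hold on non-distributive lattices.
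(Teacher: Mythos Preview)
Your proposal is correct and follows essentially the same approach as the paper: both reduce to the equality clause of Lemma~\ref{lemma:compaction} to show that $c(Z,\set{X,Y})=c(Z,\set{X\wedge Y,X\vee Y})$ for every co-atom $Z$, and then observe that the coverage of each co-atom is unchanged by the SM-step. Your version is in fact more careful than the paper's, since you make explicit the (necessary) observation that every co-atom is meet-irreducible, which the paper leaves implicit when invoking Lemma~\ref{lemma:compaction}.
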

\begin{proof} 
 We prove this using a similar argument as
 in~\cite{DBLP:journals/combinatorica/BalisterB12}.  
 By Lemma~\ref{lemma:compaction}, for any co-atom $Z$,
  $c(Z,\set{X,Y})= c(Z,\set{X\wedge Y, X \vee Y})$, hence the number
  of elements in $B$ that cover $Z$ is the same as the number of
  elements in $B'$ that cover $Z$.  
\end{proof}

\bcor
Given any fractional edge cover $(w_j)_{j=1}^m$ of the
co-atomic hypergraph, inequality (\ref{eq:ineq1}) is provable through
an SM-proof sequence and, moreover, one can apply the SM-steps in any
order. 
\ecor
\bp
Progress is ensured by the fact that $\sum_i |e_{B_i}|^2$
strictly increases after each sub-modularity step:
\[ |e_X|^2 +
|e_Y|^2 < |e_X \cap e_Y|^2 + |e_X \cup e_Y|^2 = |e_{X\vee Y}|^2 +
|e_{X \wedge Y}|^2.
\]
Finally, when the process ends, the multiset $\calB$
is a chain and $h(\hat 1)$ must occur $d$ times, because any co-atom
$Z$ that is $\succeq C_1$ (the next largest element in the chain) is
covered only by $\hat 1$.  
\ep

\begin{cor}
  Every distributive lattice is normal.
  \label{cor:every:dist:lattice:is:normal}
\end{cor}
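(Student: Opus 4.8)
The plan is to derive this corollary directly from Theorem~\ref{th:normal}, using the characterization in item~\ref{item:normal:3}. Recall that to call a lattice $\bL$ normal I must establish one of the equivalent conditions of that theorem for \emph{every} choice of inputs $\bR$; item~\ref{item:normal:3} is the most convenient here, because its right-hand side --- a fractional edge cover of the co-atomic hypergraph $H_\co$ --- is precisely the object for which the immediately preceding corollary produces SM-proofs. So the whole task reduces to verifying the biconditional ``inequality \eqref{eq:ineq1} holds for all non-negative submodular $h$ iff $(w_j)_{j=1}^m$ is a fractional edge cover of $H_\co$'' on an arbitrary distributive $\bL$ with an arbitrary input set $\bR$.

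For the ``if'' direction I would invoke the corollary immediately above: when $\bL$ is distributive, any fractional edge cover of $H_\co$ admits an SM-proof of \eqref{eq:ineq1}, obtainable by applying SM-steps in any order until $\calB$ becomes a chain in which $\hat 1$ occurs exactly $d$ times. The point I must check carefully is that such a proof certifies the inequality for \emph{all} non-negative submodular functions, not merely for polymatroids: every SM-step is a pure submodularity inequality, and the leftover chain terms $d_1 h(C_1) + d_2 h(C_2) + \cdots$ in the resulting SM-bound are discarded using only non-negativity of $h$. Hence \eqref{eq:ineq2}, equivalently \eqref{eq:ineq1}, holds for every non-negative submodular $h$, exactly matching the quantifier in item~\ref{item:normal:3}.

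For the ``only if'' direction I would appeal to Lemma~\ref{lemma:ineq:step}: if \eqref{eq:ineq1} holds for all non-negative submodular functions, then a fortiori it holds for all (strictly) normal polymatroids, and that lemma then forces $(w_j)_{j=1}^m$ to be a fractional edge cover of $H_\co$. Combining the two directions establishes item~\ref{item:normal:3} for this $\bR$; since $\bR$ was arbitrary, Theorem~\ref{th:normal} lets me conclude that $\bL$ is a normal lattice.

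The only genuinely delicate point is the first one: ensuring that the SM-proof guaranteed by the preceding corollary really certifies the inequality over the full cone of non-negative submodular functions rather than just the polymatroids, so that it lines up with the precise hypothesis ``for all non-negative, submodular functions $h$'' demanded by item~\ref{item:normal:3}. This hinges on observing that SM-steps and the final non-negativity argument never use monotonicity. Everything else is a routine assembly of results already in hand.
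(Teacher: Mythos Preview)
Your proposal is correct and matches the paper's intended argument: the corollary is stated without proof immediately after the SM-proof corollary, and the implicit reasoning is exactly what you spell out --- the preceding corollary supplies the ``if'' direction of item~\ref{item:normal:3} in Theorem~\ref{th:normal}, while Lemma~\ref{lemma:ineq:step} handles the ``only if'' direction (which in fact holds for any lattice). Your care in checking that SM-proofs use only submodularity and non-negativity, so that the inequality is certified for all non-negative submodular $h$ rather than just polymatroids, is a genuine subtlety that the paper glosses over but that you handle correctly.
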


\paragraph*{The Submodularity Algorithm.} The SM-Algorithm 
(Algorithm \ref{algo:sma})
starts by using the cardinalities $(N_j)_{j=1}^m$ of the input
relations to obtain an optimal solution $h^*$ of the LLP
(Eq.(\ref{eq:llp})), and an optimal dual solution $s^*, w^*$: the
coefficients $(w^*_j)_{j=1}^m$ form a valid inequality
(\ref{eq:ineq1}) (see the discussion at the beginning of
Sec~\ref{sec:proof:algorithms}), which is tight for $h^*$; as before,
write $s^*_{X,Y}, w_j^*$ as rational numbers, $w_j^* = q_j/d,
s^*_{X,Y}=p_{X,Y}/d$.\footnote{Extreme points of the dual polytope are
data-independent!}
The SM-algorithm requires as an input an SM-proof sequence of this
inequality, then computes the query $Q^D$ as follows.

The algorithm performs the SM-steps in the proof sequence, maintaining
the multiset $\calB \subseteq L$.\footnote{I.e. every member of the multiset is
a member of the set $L$}
It also maintains a cache of temporary
relations, in one-to-one correspondence with $\calB$: for each $B \in \calB$,
there is one temporary relation $T(B)$, with set of attributes $B$; if
$B$ occurs multiple times in $\calB$, then there are multiple temporary
relations $T(B)$.  Initially the temporary tables are the input
relations, s.t.  each relation $R_j$ is copied $q_j$ times.  Next, the algorithm
applies the SM-steps in the proof sequence, and for each step
$(X,Y)\rightarrow (X\wedge Y, X\vee Y)$ performs a {\em
  sub-modularity join}:

\begin{quote}
{\bf Sub-modularity join.} Let $Z = X\wedge Y$ be the set of common
   variables in $T(X), T(Y)$.  The {\em degree} of some value $v \in
   \prod_{z\in Z}\Dom(Z)$ is the number of tuples in $T(Y)$ with $Z = v$:
  \begin{equation}
     \Deg_{T(Y)}(v) \defeq |\sigma_{Z=v}(T(Y))|
  \label{eqn:degree}
  \end{equation}
   The SM-join partitions $\Pi_Z(T(Y))$ into {\em light hitters} and
  {\em heavy hitters}, consisting of values with degree $\leq
   2^{h^*(Y)-h^*(Z)}$ or $> 2^{h^*(Y)-h^*(Z)}$ respectively; denote
  them $\Light$ and $\Heavy$.  Define $T(X\vee Y)$ the subset of the
   join $T(X) \Join T(Y)$ restricted to light hitters, and define $T(X \wedge Y)$ to be the set of heavy hitters.  The SM-join removes
   $T(X),T(Y)$ from the cache, and adds $T(X \wedge Y), T(X\vee Y)$ to
  the cache of temporary tables.
\end{quote}

After processing the entire proof sequence, the algorithm returns the
union of all $d$ temporary tables $T(\hat 1)$, then semi-join reduces
them with all input relations. 

\begin{algorithm}[th]
  \caption{The Sub-modularity Algorithm}
  \label{algo:sma}
  \begin{algorithmic}[1]
    \Require{A query $Q$, over relations $R_1, \ldots, R_m$}
    \Require{A SM-proof sequence of $\sum_{B \in \calB} h^*(B) \geq d \cdot h^*(\hat 1)$}
    \State Initialize $|\calB|$ temporary tables $T(B)$, $B\in \calB$:
    \State \ \ \ where each $T(B)$ is initially  some $R_j$ \Comment{See text}
    \For {each SM-step $(X,Y) \rightarrow (X \wedge Y,X\vee Y)$}
       \State Let $Z = X \wedge Y$
       \State $\Light(Z) \defeq \setof{v}{\log \Deg_{T(Y)}(v) \leq h^*(Y) - h^*(Z)}$
       \State $\Heavy(Z) \defeq \setof{v}{\log \Deg_{T(Y)}(v) > h^*(Y) - h^*(Z)}$
       \State Add the following tables to the cache
       \State $T(X \wedge Y) = \Pi_{Z}(T(X)) \cap \Pi_{Z}(T(Y)) \cap \Heavy(Z)$
       \State $T(X\vee Y) =  (T(X) \Join (T(Y) \ltimes \Light(Z)))^+$
    \EndFor
    \State \Return{$\bigcup_{B \in \calB, B=\hat 1}T(B)$ semi-join reduced with all
    inputs}
  \end{algorithmic}
\end{algorithm}

The following invariant is maintained by the algorithm:
\begin{lmm}
  At each step in the algorithm, for any relation $T(B)$ in cache,
  $\log |T(B)| \leq h^*(B)$.
  \label{lmm:cache-bound}
\end{lmm}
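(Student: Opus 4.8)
The plan is to prove the invariant by induction on the operations performed by the algorithm: establish it for the initial tables, then show that a single SM-join preserves it.

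For the base case, I would observe that every $B\in\calB$ is some input $R_j$ with multiplicity $q_j>0$, i.e. $w_j^*=q_j/d>0$. By complementary slackness between the LLP~\eqref{eq:llp} and its dual~\eqref{eq:dual:llp}, whenever $w_j^*>0$ the cardinality constraint must be tight, so $h^*(R_j)=n_j$. Since the initial table $T(R_j)$ is just $R_j$, we get $\log|T(R_j)|=n_j=h^*(R_j)$, so the invariant holds (with equality) at initialization.

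For the inductive step I would fix an SM-step $(X,Y)\to(X\wedge Y,X\vee Y)$, assume $\log|T(X)|\le h^*(X)$ and $\log|T(Y)|\le h^*(Y)$, and write $Z=X\wedge Y$, which (viewing closed sets as attribute sets) is exactly the set of common attributes of $T(X)$ and $T(Y)$. For the heavy side, $T(X\wedge Y)\subseteq\Heavy(Z)$, and since the fibers $\sigma_{Z=v}(T(Y))$ are disjoint over distinct $v$ while each heavy value has $\Deg_{T(Y)}(v)>2^{h^*(Y)-h^*(Z)}$, a counting argument gives $|\Heavy(Z)|\cdot 2^{h^*(Y)-h^*(Z)}\le|T(Y)|\le 2^{h^*(Y)}$, hence $|T(X\wedge Y)|\le|\Heavy(Z)|\le 2^{h^*(Z)}=2^{h^*(X\wedge Y)}$. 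For the join side, expansion does not change cardinality, so it suffices to bound $T(X)\Join(T(Y)\ltimes\Light(Z))$; each surviving $Z$-value has degree $\le 2^{h^*(Y)-h^*(Z)}$ in $T(Y)$, so summing over the tuples of $T(X)$ yields $|T(X\vee Y)|\le|T(X)|\cdot 2^{h^*(Y)-h^*(Z)}\le 2^{h^*(X)+h^*(Y)-h^*(Z)}$. Submodularity of $h^*$ gives $h^*(X)+h^*(Y)\ge h^*(X\wedge Y)+h^*(X\vee Y)=h^*(Z)+h^*(X\vee Y)$, so the exponent is at most $h^*(X\vee Y)$, as needed.

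The two bounds together close the induction. I expect the only genuinely delicate point to be the base case: it rests on the non-obvious fact that the constraints $h^*(R_j)\le n_j$ are tight precisely for the relations appearing in $\calB$, which is exactly what complementary slackness (together with $w_j^*=q_j/d$) supplies. The heavy/light split is deliberately engineered so that both the heavy-hitter count and the light-hitter degree are governed by the single threshold $2^{h^*(Y)-h^*(Z)}$, and the submodular inequality is what converts the product $2^{h^*(X)+h^*(Y)-h^*(Z)}$ into the target $2^{h^*(X\vee Y)}$; once the threshold is fixed, these steps are routine.
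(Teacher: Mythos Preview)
Your base case and the heavy-hitter bound are fine, and they match the paper's argument. The gap is in the join-side bound: you reach
\[
  \log|T(X\vee Y)| \;\le\; h^*(X)+h^*(Y)-h^*(Z),
\]
and then invoke submodularity of $h^*$ to conclude this is $\le h^*(X\vee Y)$. But submodularity says $h^*(X)+h^*(Y)\ge h^*(X\wedge Y)+h^*(X\vee Y)$, hence $h^*(X)+h^*(Y)-h^*(Z)\ge h^*(X\vee Y)$, which is the wrong direction. As written, your inequality does not close.

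What the paper uses instead is that the submodularity constraint is \emph{tight} at $h^*$ for every pair $(X,Y)$ appearing in the SM-proof sequence, i.e.\ $h^*(Y)-h^*(Z)=h^*(X\vee Y)-h^*(X)$. The reason is that $(w^*,s^*)$ is dual-optimal and $h^*$ is primal-optimal, so strong duality gives $\sum_j w_j^* h^*(R_j)=h^*(\hat 1)$; writing the SM-proof as a chain of inequalities $\sum_{B\in\calB}h^*(B)=S_0\ge S_1\ge\cdots\ge S_{\mathrm{final}}\ge d\,h^*(\hat 1)=S_0$, every intermediate inequality (i.e.\ every SM-step applied to $h^*$) must be an equality. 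With that equality in hand, your bound becomes $\log|T(X\vee Y)|\le h^*(X)+\bigl(h^*(X\vee Y)-h^*(X)\bigr)=h^*(X\vee Y)$, as desired. So the missing ingredient is precisely the tightness argument, not submodularity alone.
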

\begin{proof}
   We induct on the number of SM-steps. If $T(B)$ was a copy of an input
   relation $R_j$ that participates in the proof, then 
   $\log |T(B)| = n_j = h^*(B)$, because due to complementary slackness
   $w^*_j>0$ implies the primal constraint has to be tight.
   For every SM-step $(X,Y) \to (X \wedge Y, X\vee Y)$, it must hold that
   $h^*(Y) - h^*(X \wedge Y) = h^*(X\vee Y) - h^*(X)$ because inequality
   \eqref{eq:ineq1} is an {\em equality} for $h^*$: we started from
   an LLP-dual optimal solution.
   Hence,
  \begin{eqnarray*}
     |T(X\vee Y)| & \leq & |T(X)| \cdot 2^{h^*(Y) - h^*(Z)} \\
                   & \leq & 2^{h^*(X)} \cdot 2^{h^*(X\vee Y) - h^*(X)} \\
                   & \leq & 2^{h^*(X\vee Y)} \\
     |T(X \wedge Y)| & \leq & |T(Y)| / 2^{h^*(Y) - h^*(Z)} \leq 2^{h^*(Z)}
  \end{eqnarray*}
\end{proof}

\begin{example} 
  Consider the SM proof sequence from 
  Example~\ref{ex:ex:bad:for:chain:good:for:smp}, we explain how the SM 
  algorithm works for this proof sequence.
  The optimal solution LLP solution is
  \begin{eqnarray*}
     h^*(\hat 1) &=& (4/3)\cdot \log N,\\
     h^*(X) &=& \log N, \ \text{ for } X \in \{abc,ade,bdf,cef\}, \\
     h^*(X) &=& (2/3)\cdot \log N, \ \text{ for } X \in \{a,b,c,d,e,f\}.
  \end{eqnarray*}

  The SM algorithm works as follows.
  \bi \item It first SM-joins $R(abc)$ with $S(ade)$, producing
  relations $Q_1(\hat 1)$ and $\Heavy_1(a)$, where $\Heavy_1(a)$ is the set of 
  values $a$ whose $S$-degree is at least $N^{1-2/3} = N^{1/3}$.
  Thus, $|\Heavy_1(a)| \leq N^{2/3}$.
  Since the light part has degree at  most $N^{1/3}$, 
  $|Q_1(\hat 1)| \leq N^{4/3}$.

  \item Similarly it SM-joins $T(bdf)$ with
  $U(cef)$ producing relations $Q_2(\hat 1)$ and $\Heavy_2(f)$, and
  finally it computes the cross product $\Heavy_1(a) \times
  \Heavy_2(f)$, then expands the result (since $\{a,f\}^+=\hat 1$) to
  obtain a relation $Q_3(\hat 1)$.  
  Since both
  $|\Heavy_1(a)| \leq N^{2/3}$ and
  $|\Heavy_2(f)| \leq N^{2/3}$, their cross-product has size at most
  $N^{4/3}$ and can be computed within that time budget.
  
  \item Finally, it returns $Q_1 \cup Q_2 \cup Q_3$ semi-joined with input relations.
  \ei
  \label{ex:bad:for:chain:good:sma}
\end{example}


There are two reasons why the SM-algorithm may fail: some branches of
heavy or light elements may never join into a $T(\hat 1)$, and the
algorithm may attempt to join some light with heavy values from the
same relation. 
We give a sufficient condition for the SMA algorithm to be
correct.  The criterion consists of iterating over the SM-proof
sequence, and assigning a set $\querylabels(B)$ of labels to each copy $B \in \calB$. (Copies
of the same lattice element receive their own label sets.) Initially,
every $B \in \calB$ receives a single label $1$, namely $\querylabels(B) =
\{1\}$, the same for all $B \in \calB$.  Consider an SM-step $(X,Y) \to
(X \wedge Y, X\vee Y)$.  Let $\calA(X,Y) = \querylabels(X) \cap
\querylabels(Y)$.  Assign $\querylabels(X\vee Y) = \calA(X,Y)$ and,
if $X \wedge Y \neq \zerohat$, assign
to $\querylabels(X \wedge Y)$ a {\em fresh} set of $a$ labels:
$\querylabels(X \wedge Y) = \{f(j) \suchthat j \in \calA(X,Y)\}$ for some
label assignment $f$.
For each $Z \notin \{X,Y\}$, set
$\querylabels(Z) = \querylabels(Z) \cup \setof{f(j)}{\suchthat j \in
  \querylabels(Z) \cap \calA(X,Y)}$.  Note the crucial fact that in this
description we did not remove $X,Y$ from $\calB$. We keep accumulating
elements to $\calB$, unlike in the proof sequence where each step replaces
an old pair with a new pair.  
Note the important fact that that each copy of lattice element in the multiset 
$\calB$ gets its own label set.

\bdefn[Good SM-proof sequence]
Call the SM-proof sequence {\em good} if
$\calA(X,Y) \neq \emptyset$ for all SM-steps $(X,Y)\to(X\vee Y, X\wedge Y)$,
{\em and} if in the end every
label is present in $\bigcup_{\hat 1 \in B} \querylabels(\hat 1)$.
\edefn

\begin{example}
Continue with Example~\ref{ex:bad:for:chain:good:sma},
we check that the SM-proof is good.  Initially all elements in
  $\{abc,ade,bdf,cef\}$ have $\querylabels=\{1\}$. After the first SM-step:
  $\querylabels(\hat 1) = \set{1}$ and $\querylabels(a) = \{2\}$, where $2$ is a
  fresh label; $2$ is also added to $\querylabels$ of
  $abc,ade,bdf,cef$, so they are all equal to $\{1,2\}$.  
  After the second SM-step there are two copies of
  $\onehat$, where $\bigcup_{\onehat \in \calB} \querylabels(\hat 1) =
  \set{1,2}$ and $\querylabels(f) = \{3,4\}$, two fresh labels, which are
  added to $\querylabels(a)$ so that $\querylabels(a)=\{2,3,4\}$.  
  After the third SM-step,
  $\bigcup_{\onehat \in \calB} \querylabels(\hat 1) = \set{1,2,3,4}$; 
  it follows that the proof sequence is good.
\end{example}

We next show that if there is a good proof sequence, then SMA runs in time that 
matches the bound.

\begin{thm}
  If the SM-proof sequence for $\sum_j w_j^* h(R_j) \geq h(\hat 1)$ is
  good, then the SM algorithm correctly computes $Q^D$, and runs in
  time $\tilde O(N+\prod_j N_j^{w^*_j})$, where $\tilde O$ hides a
  $\log N$ factor, a polynomial in query size and SM-proof length.
  \label{thm:sma-main}
\end{thm}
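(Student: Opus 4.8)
The plan is to split the argument into a runtime bound and a correctness proof, with correctness being the genuine difficulty. For the runtime I would invoke Lemma~\ref{lmm:cache-bound}: every table $T(B)$ ever placed in the cache satisfies $\log|T(B)| \leq h^*(B) \leq h^*(\onehat)$ (using monotonicity of the optimal polymatroid $h^*$ and $B \preceq \onehat$), so every intermediate and output table has size at most $2^{h^*(\onehat)} = \prod_j N_j^{w_j^*}$. Each SM-join is near-linear in the sizes of its two input and two output tables: partitioning $\Pi_Z(T(Y))$ into $\Light$ and $\Heavy$ by degree, forming the light-restricted join $T(X)\Join(T(Y)\ltimes\Light(Z))$, and expanding it to $X\vee Y$ can all be done in $\tilde O(2^{h^*(\onehat)})$ time using sorted indices (expansion itself is $\tilde O(N)$ per Section~\ref{sec:background}). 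Since the number of SM-steps is bounded by the proof length, a quantity depending only on query size because the dual extreme points are data-independent, summing over all steps together with the $\tilde O(N)$ cost of reading the inputs and of the final semi-join reduction yields the claimed $\tilde O(N+\prod_j N_j^{w_j^*})$.

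Correctness has two directions. Soundness is the easy one: every returned tuple lies in some $T(\onehat)$, and $T(\onehat)$-tuples are built only by joining projections of input relations and filling in functionally determined attributes via expansion, so each is a full tuple over $\bX$; the final semi-join reduction against every input relation then discards any tuple whose projection onto some $R_j$ is absent from $R_j^D$, leaving only genuine members of $Q^D$. The substance is completeness: I must show every $t\in Q^D$ survives into some copy of $T(\onehat)$. The engine is a \emph{flow lemma}, proved by tracking $t$ through the SM-steps. At a step $(X,Y)\to(X\wedge Y,X\vee Y)$, writing $Z=X\wedge Y$, suppose $\Pi_X(t)\in T(X)$ and $\Pi_Y(t)\in T(Y)$. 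Then $\Pi_Z(t)$ lies in both $\Pi_Z(T(X))$ and $\Pi_Z(T(Y))$, and exactly one of two things happens: if $\Pi_Z(t)$ is heavy in $T(Y)$ then $\Pi_Z(t)=\Pi_{X\wedge Y}(t)\in T(X\wedge Y)$, while if it is light then $\Pi_X(t)$ joins with the light tuples of $T(Y)$ and expands to $\Pi_{X\vee Y}(t)\in T(X\vee Y)$. Thus $t$ is never dropped \emph{provided both operands already contain its projection}.

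The heart of the matter, and the main obstacle, is precisely this proviso. A single output tuple has well-defined projections everywhere, but the two tables entering an SM-join may have been produced by divergent heavy/light histories, so $\Pi_X(t)$ can sit in $T(X)$ while $\Pi_Y(t)$ has been stranded elsewhere; this is exactly the two failure modes flagged before the definition of a good sequence (a heavy or light branch that never climbs back to $\onehat$, and an attempt to join the light part of some $Z$ against its own heavy part). To control this I would make the label bookkeeping the bridge between the data flow and the purely combinatorial proof sequence. I would define, for each label $\ell$, a \emph{run}: a consistent record of which SM-steps sent its tuples heavy versus light, and prove by induction on the SM-steps the refined invariant that for every $B$ with $\ell\in\querylabels(B)$ and every $t\in Q^D$ whose heavy/light decisions agree with $\ell$, one has $\Pi_B(t)\in T(B)$. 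The two clauses of goodness are exactly what the induction needs: $\calA(X,Y)=\querylabels(X)\cap\querylabels(Y)\neq\emptyset$ guarantees that whenever $T(X)$ and $T(Y)$ are joined there is a common run under which both operands carry $t$ (ruling out joining light against heavy from a shared ancestor), and the fresh-label splitting $\querylabels(X\wedge Y)=f(\calA(X,Y))$, together with its propagation to the untouched elements, records the heavy alternative so that no decision history is forgotten.

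Finally, because the proof sequence is good, every label ends up in $\bigcup_{\onehat\in\calB}\querylabels(\onehat)$; equivalently every run terminates at a copy of $\onehat$. Combining this with the refined invariant, each $t\in Q^D$ has its run reach some $T(\onehat)$ containing $\Pi_{\onehat}(t)=t$, which establishes completeness and hence correctness. The one place I expect to have to be careful is the bookkeeping in that inductive step: matching the accumulation rule for $\querylabels(Z)$ on the elements $Z\notin\{X,Y\}$ to the actual duplication of tuples across heavy and light branches in the cache, since it is easy to conflate the proof-sequence multiset (where $X,Y$ are consumed) with the label multiset (where nothing is removed).
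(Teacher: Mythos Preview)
Your approach is correct and follows the same overall strategy as the paper: bound runtime via Lemma~\ref{lmm:cache-bound}, get soundness from the final semi-join, and prove completeness by tracking output tuples through the SM-steps using the label bookkeeping. The difference is one of packaging. You introduce a per-tuple notion of a ``run'' (a heavy/light history) and tie each label to a run, then maintain the invariant that for every $B$ with $\ell\in\querylabels(B)$ and every $t$ whose run matches $\ell$, one has $\Pi_B(t)\in T(B)$. The paper short-circuits this: for each label $i$ it defines the subquery $Q_i \defeq \Join_{B:\, i\in\querylabels(B)} T(B)$ and maintains the single set-level invariant $Q \subseteq \bigcup_i Q_i$. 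The inductive step is then one line: for $i\in\calA(X,Y)$ one has $Q_i \subseteq Q_i^{\Heavy}\cup Q_i^{\Light}$ where the heavy and light branches adjoin $T(X\wedge Y)$ and $T(X\vee Y)$ respectively, and labels $i\notin\calA(X,Y)$ do not see both operands so their $Q_i$ is untouched. At the end, goodness puts every label in some $\querylabels(\onehat)$, so each $Q_i$ has a $T(\onehat)$ factor and hence $Q_i\subseteq T(\onehat)$.

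What the paper's formulation buys is that you never need to argue that labels are in bijection with heavy/light histories, nor worry about whether every tuple's run is realized by some label; the join $Q_i$ absorbs all of that. Your version works too, but the extra ``run'' layer is where you yourself anticipate bookkeeping pain (the mismatch between the accumulating label multiset and the cache that actually deletes $T(X),T(Y)$), and it can be dropped entirely.
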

\begin{proof}
   At any point in time, let 
   $L_i = \{ B \in \calB \suchthat i \in \querylabels(B)\}$.
   Let $Q_i \defeq \Join_{B \in L_i} T(B)$.
   We show by induction on the number of SM-steps
   that the algorithm maintains the following invariant:
   $Q \subseteq \bigcup_i Q_i$.
   The base case holds because initially there is only one label, and $Q=Q_1$.
   Consider an SM-step, $(X,Y)\rightarrow (X \wedge Y, X\vee Y)$. 
   Only the subqueries $Q_i$ for which $i \in \calA(X,Y)$ contain both $X$ and $Y$.
   For each such query, consider two new queries: 
   \begin{eqnarray*}
      Q_i^{\Heavy} &=& T(X \wedge Y) \Join \Join_{B \in L_i} T(B) \\
      Q_i^{\Light} &=& T(X\vee Y) \Join \Join_{B \in L_i} T(B).
   \end{eqnarray*}
   Then, obviously $Q_i \subseteq Q_i^{\Heavy} \cup Q_i^{\Light}$ and thus
   the invariant is maintained after the SM-step.
   In the end, each sub-query reaches $\hat 1$ and the semi-join reduction
   filters the result.
(We need this step because large input
relations $R_j$ might have had coefficients $w^*_j$ set to $0$ in the
optimal solution of dual-LLP.)
\end{proof}

\begin{figure}[th]
\begin{center}
\begin{tikzpicture}[scale=.8]
\node[] at (0,0) (0) {$\hat 0$};
\node[] at (-4.5,1.5) (c) {$C$};
\node[] at (0,1.5) (b) {$B$};
\node[draw, rectangle] at (-6,3) (z) {$Z$};
\node[draw, rectangle] at (-3,3) (x) {$X$};
\node[draw, rectangle] at (0,3) (y) {$Y$};
\node[draw, rectangle] at (3,3) (u) {$U$};
\node[] at (-3,4.5) (a) {$A$};
\node[] at (1.5,4.5) (d) {$D$};
\node[] at (-3,6) (1) {$\hat 1$};

\path[] (0) edge (c);
\path[] (0) edge (b);
\path[] (0) edge (u);
\path[] (c) edge (z);
\path[] (c) edge (a);
\path[] (b) edge (x);
\path[] (b) edge (y);
\path[] (b) edge (d);
\path[] (x) edge (a);
\path[] (y) edge (a);
\path[] (u) edge (d);
\path[] (z) edge (1);
\path[] (a) edge (1);
\path[] (d) edge (1);
\end{tikzpicture}
\end{center}
  \caption{A lattice with an SM-proof that is not good}
  \label{fig:non-tree}
\end{figure}
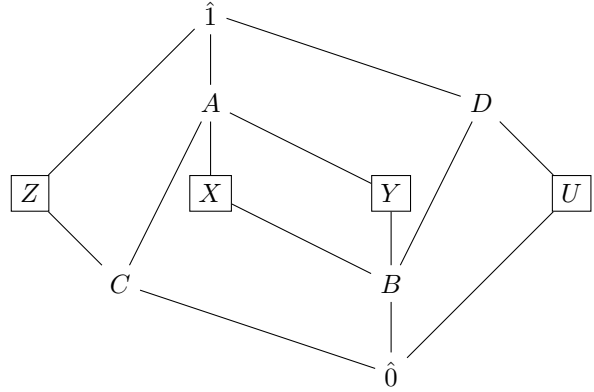

\begin{example}
  Not every SM-proof sequence is good.  Consider the following
  SM-proof sequence of $h(X)+h(Y)+h(Z)+h(U)\geq 2h(\hat 1)+2h(\hat 0)$
  on the lattice in Fig.~\ref{fig:non-tree}:
  \begin{eqnarray*}
     h(X)+h(Y) &\geq & h(A) + h(B) \\
     h(A)+h(Z) &\geq & h(\hat 1) + h(C)\\
     h(B)+h(U) &\geq & h(D) + h(\hat 0)\\
     h(C)+h(D) &\geq & h(\hat 1)+h(\hat 0)
  \end{eqnarray*}
  From the first three SM-steps, 
   $\querylabels(B) = \{2\}$ and $\querylabels(C) = \{3\}$,
  and $\querylabels(D) = \{2\}$. Thus, at the last step 
   $\calA(C,D) = \emptyset$ which is
  not good.
  The inequality admits a different proof sequence, which is good: 
  $(X,Z)\rightarrow
  (C,\hat 1)$, $(Y,U)\rightarrow (\hat 0,D)$, $(C,D)\rightarrow (\hat
  0, \hat 1)$.  It is unknown whether every SM-proof sequence can be
  transformed into a good one.
\end{example}

\begin{example}
   For the query in Fig.~\ref{fig:bad:sm:proof}, an SM-proof is
   \begin{eqnarray*}
      h(X)+h(Y)&\geq&h(C)+h(A)\\
      h(Z)+h(W)&\geq&h(D)+h(B)\\
      h(A)+h(D)&\geq&h(\hat 1)+h(\hat 0)\\
      h(B)+h(C)&\geq&h(\hat 1)+h(\hat 0).
   \end{eqnarray*}
   After the second SM-step, $\querylabels(C)=\{1,3\}$, 
   $\querylabels(D)=\{1,2\}$,
   $\querylabels(A)=\{2,3\}$, $\querylabels(B)=\{3\}$.
   Hence, labels $2$ and $3$ are pushed to copies of $\querylabels(\hat 1)$,
   but label $1$ is not present in any set $\querylabels(\hat 1)$. So this proof 
   sequence is no good, for a different reason from the previous example.
\end{example}

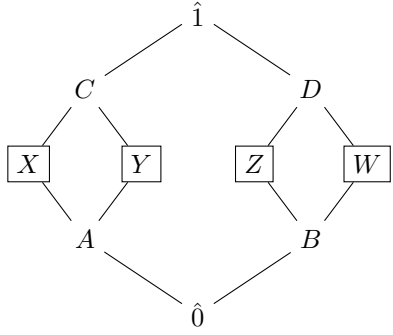
\begin{figure}[t]
   \centering
\begin{tikzpicture}[domain=0:20, scale=0.5]
\node[] at (1.5,2) (a) {$A$};
\node[] at (1.5,6) (c) {$C$};
\node[draw,rectangle] at (0,4) (x) {$X$};
\node[draw,rectangle] at (3,4) (y) {$Y$};
\node[] at (4.5,0) (0) {$\hat 0$};
\node[draw,rectangle] at (6,4) (z) {$Z$};
\node[] at (7.5,2) (b) {$B$};
\node[] at (7.5,6) (d) {$D$};
\node[draw,rectangle] at (9,4) (w) {$W$};
\node[] at (4.5,8) (1) {$\hat 1$};

\path[] (0) edge (a);
\path[] (0) edge (b);
\path[] (a) edge (x);
\path[] (a) edge (y);
\path[] (b) edge (z);
\path[] (b) edge (w);
\path[] (x) edge (c);
\path[] (y) edge (c);
\path[] (z) edge (d);
\path[] (w) edge (d);
\path[] (c) edge (1);
\path[] (d) edge (1);
\end{tikzpicture}
\caption{Another example with a bad SM-proof}
\label{fig:bad:sm:proof}
\end{figure}

Here, we show that some lattices don't have any SM-proofs:
%
%

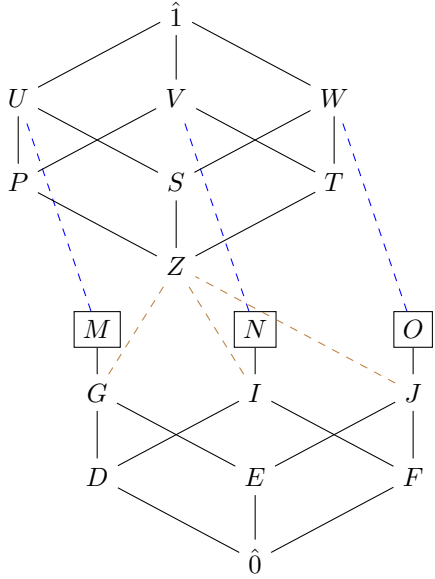
\begin{figure}[th]
\begin{center}
\begin{tikzpicture}[scale=.7, yscale=.8]
\node[] at (0,0) (0) {$\hat 0$};
\node[] at (-3,2) (m) {$\Lm$};
\node[] at (0,2) (n) {$\Ln$};
\node[] at (3,2) (p) {$\Lp$};
\node[] at (-3,4) (mn) {$\Lmn$};
\node[] at (0,4) (mp) {$\Lmp$};
\node[] at (3,4) (np) {$\Lnp$};
\node[draw, rectangle] at (-3,5.5) (x) {$\Lx$};
\node[draw, rectangle]  at (0,5.5) (y) {$\Ly$};
\node[draw, rectangle]  at (3,5.5) (z) {$\Lz$};
\begin{scope}[shift={(-1.5, 7)}]
\node[] at (0,0) (mnp) {$\Lmnp$};
\node[] at (-3,2) (a) {$\La$};
\node[] at (0,2) (b) {$\Lb$};
\node[] at (3,2) (c) {$\Lc$};
\node[] at (-3,4) (ab) {$\Lab$};
\node[] at (0,4) (ac) {$\Lac$};
\node[] at (3,4) (bc) {$\Lbc$};
\node[] at (0, 6) (1) {$\onehat$};
\end{scope}

\path[] (0) edge (m);
\path[] (0) edge (n);
\path[] (0) edge (p);
\path[] (m) edge (mn);
\path[] (m) edge (mp);
\path[] (n) edge (mn);
\path[] (n) edge (np);
\path[] (p) edge (mp);
\path[] (p) edge (np);
\path[] (mn) edge (x);
\path[] (mp) edge (y);
\path[] (np) edge (z);

\path[dashed, brown] (mn) edge (mnp);
\path[dashed, brown] (mp) edge (mnp);
\path[dashed, brown] (np) edge (mnp);

\path[dashed, blue] (x) edge (ab);
\path[dashed, blue] (y) edge (ac);
\path[dashed, blue] (z) edge (bc);

\path[] (mnp) edge (a);
\path[] (mnp) edge (b);
\path[] (mnp) edge (c);
\path[] (a) edge (ab);
\path[] (a) edge (ac);
\path[] (b) edge (ab);
\path[] (b) edge (bc);
\path[] (c) edge (ac);
\path[] (c) edge (bc);
\path[] (ab) edge (1);
\path[] (ac) edge (1);
\path[] (bc) edge (1);
\end{tikzpicture}
\end{center}
  \caption{A lattice with no SM-proof sequence}
  \label{fig:no-smp}
\end{figure}
\begin{example}
\label{ex:no-smp}
  Not every lattice admits an SM-proof.  Consider the lattice in
  Fig.~\ref{fig:no-smp}.  It satisfies the inequality $h(\Lx)+h(\Ly)+h(\Lz)
  \geq 2h(\hat 1)+h(\hat 0)$, but there is no SM-proof sequence that
  derives it.  This is somewhat surprising, because Lemma~\ref{lmm:generalized:inequality}
  showed that every inequality can be proven by adding up sub-modularity 
  inequalities, and indeed our inequality can be obtained by summing up:
  \begin{eqnarray}
     h(\Lx) + h(\Lmnp) &\geq & h(\Lab) + h(\Lmn) \\
     h(\Ly) + h(\Lmnp) &\geq & h(\Lac) + h(\Lmp) \\
     h(\Lz) + h(\Lmnp) &\geq & h(\Lbc) + h(\Lnp) \\
     h(\Lab)+h(\Lac)&\geq& h(\onehat) + h(\La)\label{eq:no-smp-tri-abc-1}\\
     h(\Lbc)+h(\La) &\geq & h(\onehat) + h(\Lmnp) \label{eq:no-smp-tri-abc-2}\\
     h(\Lmn)+h(\Lmp)&\geq&h(\Lmnp)+h(\Lm)\label{eq:no-smp-tri-mnp-1}\\
     h(\Lnp)+h(\Lm) &\geq & h(\Lmnp) + h(\hat 0) \label{eq:no-smp-tri-mnp-2}
  \end{eqnarray}
  (The last two inequalities each consists of 2 SM-steps, identical to
  the proof of $h(xy)+h(xz)+h(yz)\geq 2h(xyz)+h(\hat 0)$ for the
  triangle query, see Example~\ref{ex:running:2}.)  But this is not an
  SM-proof, because in the first step we need $h(\Lmnp)$, which
  is only produced later.  More surprisingly, the lattice is normal.
\end{example}


\subsection{Meeting LLP-Bound with Conditional SM Algorithm -- CSMA}
\label{sec:csma}

Recall that in the chain proof we decompose an input $h(R_j)$ into
a telescoping sum $h(R_j) = \sum_{i\in e_j}(h(R_j\wedge C_i)-h(R_j\wedge
C_{i-1}))$. If $h$ is entropic, then $h(Y)-h(X) =h(Y|X)$ is the conditional
entropy. Hence, the chain proof decomposes an input
$h(R_i)$ into a sum of conditional terms.
Taking cues from this conditional decomposition, and the SM-steps of the
SM-proof, we devise a new type of proof sequence that resolves
the dilemma of the query of Example~\ref{ex:no-smp}, depicted in Fig.~\ref{fig:no-smp}.
In what follows, when $X\preceq Y$ we write $h(Y|X)$ as a short hand for
$h(Y)-h(X)$. We assume $h$ is a polymatroid.
\begin{example*}[\continued{ex:no-smp}]
The key issue is that, starting from $h(\Lx)+h(\Ly)+h(\Lz)$ there is no SM-step
that will help prove the desired inequality $h(\Lx)+h(\Ly)+h(\Lz) \geq
2h(\onehat)+h(\zerohat)$. For example, applying $h(\Lx)+h(\Ly)\geq
h(\onehat)+h(\Lm)$ and we are left to show that $h(\Lm)+h(\Lz) \geq
h(\onehat)+h(\zerohat)$, but $\Lm \vee \Lz = \Lbc \prec \onehat$ and so this is
simply impossible.
The trick is to apply an SM-step between $\Lmn$ and $\Lmp$, which are {\em
below} the input relations, as suggested by~\eqref{eq:no-smp-tri-mnp-1}.
To obtain $\Lmn$ and $\Lmp$, we {\em decompose} $h(\Lx)$ into 
$h(\Lx|\Lmn)+h(\Lmn)$ and $h(\Ly)$ into $h(\Ly|\Lmp)+h(\Lmp)$; these
decomposition steps get the proof off the ground.
\end{example*}
Formally, our proof strategy involves the following 
{\em three} basic rules, instead of just the SM-rule as before:
\bi
 \item CD-rule: for $X \prec Y$, $h(Y) \to h(Y|X)+h(X)$.
 \item CC-rule: for $X \prec Y$, $h(Y|X)+h(X) \to h(Y)$.
 \item SM-rule: for $A \incomp B$, $h(A)+h(B | A\wedge B) \to h(A\vee B).$
\ei
(CD stands for {\em conditional decomposition}, CC {\em conditional
composition}, and SM {\em sub-modularity}.)
Together, they are called the {\em CSM rules} (for {\em conditional
submodularity}).
The vision behind CSMA is to show that 
\bi
 \item[{\bf (A)}] every output
inequality~\eqref{eq:ineq1} can be proved using a series of CD, CC, and SM
rules, and 
 \item[{\bf (B)}] each rule can be interpreted combinatorially to become an
    algorithmic step, and together they constitute an algorithm (called CSMA)
    that runs in time $\tilde O(2^{h^*(\onehat)})$
\ei
Interestingly, we fell short of objective (A) yet were able to achieve a {\em
stronger} result than objective (B)!
In particular, with respect to (A)
we conjecture that the three rules above form a complete proof system.
We only managed to prove a weaker version of the conjecture 
(Theorem~\ref{thm:csm-seq}) which is sufficient
for CSMA to work, at the price of a poly-log factor.
On the other hand, with respect to (B) we will work with output inequalities 
that are more general than~\eqref{eq:ineq1}, and with constraints more general
than the fd-constraints.

\subsubsection{Conditional $\lelp$ ($\cllp$)}

A key technical tool that helps realize the vision is a conditional version of
$\lelp$, called the {\em conditional $\lelp$} ($\cllp$).
A nice by-product of $\cllp$ is that we will be able to compute join
queries with given degree bounds, which is even more general than join queries
with functional dependencies! 

Let $\calP$ be some set of pairs $(X,Y) \in L^2$ such that $X \prec Y$.
Assume for every pair $(X,Y)\in \calP$, there is a given non-negative number
$n_{Y|X}$ called a {\em $\log$-degree bound}. The $\cllp$ is defined as:
\begin{eqnarray*}
\max \ & h(\hat 1)\\
\text{s.t.} \ & h(Y)-h(X) \leq n_{Y|X} & \forall (X,Y) \in \calP\\
\ & h(A\vee B)+h(A\wedge B)-h(A)-h(B) \leq 0 & \forall A \incomp B\in L\\
\ & h(X) - h(Y) \leq 0 & \forall X \prec Y \in L.
\end{eqnarray*}
(By default, $h(\hat 0)=0$, and $h(X)\geq 0$.)
In other words, $\cllp$ requires $h$ to be a polymatroid, subject to the 
$\log$-degree constraints. 
The cardinality constraints are a special case of the $\log$-degree constraints
$h(Y)-h(\zerohat)\leq n_{Y|\zerohat} = n_Y$.
(Intuitively, cardinality constraints are
degree bounds of the empty tuple; fd-constraints $X \to Y$ imposes the
$Y$-degree bound of $0$ for every $X$-tuple;
and so the degree bounds $h(Y | X) \leq n_{Y|X}$ strictly generalizes both
cardinality constraints and fd constraints.)
The following is obvious:

\bprop
$\lelp$ is exactly the same as $\cllp$ for the special case when
$\calP = \setof{(\hat 0, R_j)}{R_j\in \bR}$.
\eprop

Furthermore, we can easily use the $\log$-degree constraints to encode input relations 
with known maximum degree bounds.

We will also need the dual $\cllp$. Let $c_{Y|X}, s_{A,B}$ and $m_{X,Y}$ denote the
(non-negative) dual variables corresponding to the $\log$-degree,
sub-modularity, and monotonicity constraints, respectively. 
For each $Z \in L-\{\hat 0\}$,
define
\begin{multline*}
   \flow(Z) \defeq 
   \sum_{\substack{X: X\prec Z\\(X,Z)\in \calP}}c_{Z|X}-
   \sum_{\substack{Y: Z\prec Y\\(Z,Y)\in \calP}}c_{Y|Z}+
   \sum_{\substack{A\incomp B\\A\wedge B = Z}}s_{A,B}+
   \sum_{\substack{A\incomp B\\A\vee B = Z}}s_{A,B}-
   \sum_{A: A\incomp Z}s_{A,Z}-
   \sum_{X: X\prec Z}m_{X,Z}+
   \sum_{Y: Z\prec Y}m_{Z,Y}.
\end{multline*}
Then, the dual $\cllp$ is
\begin{eqnarray}
   \min & \sum_{(X,Y)\in \calP}n_{Y|X}c_{Y|X}&\nonumber\\
   \text{s.t.} & \flow(\hat 1) \geq 1 &\label{eqn:dual:cllp}\\
   & \flow(Z) \geq 0 & \forall Z \in L-\{\hat 1, \hat 0\}.\nonumber
\end{eqnarray}

\begin{example*}[\continued{ex:no-smp}]
Consider the lattice in Figure~\ref{fig:no-smp}.
Suppose the input relations are $T({\Lx}), T({\Ly}), T({\Lz})$
(along with FDs imposing the lattice structure),\footnote{We abuse
notation here, when the same notation symbol is used for different input
relations. This is to avoid notation cluttering later in
the description of the algorithm.}
and we are not given any other 
bounds on degrees/cardinalities, then the $\cllp$ has 
$\calP=\{(\zerohat,\Lx), (\zerohat,\Ly), (\zerohat,\Lz)\}$ where 
$n_{\Lx}=n_{\Lx|\zerohat}=\log_2|T(\Lx)|$,
$n_{\Ly}=n_{\Ly|\zerohat}=\log_2|T(\Ly)|$, and
$n_{\Lz}=n_{\Lz|\zerohat}=\log_2|T(\Lz)|$. 
In this case $\cllp$ is just $\lelp$.
If in addition we also knew, for example, an upper bound $d$ on the degree of 
$\Lmn$ in table $T({\Lx})$, then we can extend the $\cllp$ by adding 
$(\Lmn,\Lx)$ to $\calP$ where $n_{\Lx|\Lmn}=\log_2 d$.

As mentioned earlier, this lattice satisfies the inequality 
$2h(\onehat)\leq h(\Lx)+h(\Ly)+h(\Lz)$, which does not admit an SM-proof. 
In the $\cllp$, this inequality corresponds to the constraint 
$h(\onehat) \leq \frac{n_{\Lx}+n_{\Ly}+n_{\Lz}}{2}$, which results from the 
following dual solution.
\[c_{\Lx}=c_{\Ly}=c_{\Lz}=1/2,\]
\begin{equation}s_{\Lx,\Lmnp}=s_{\Ly,\Lmnp}=s_{\Lz,\Lmnp}=1/2,\label{eq:no-smp-sol}\end{equation}
\[s_{\Lab,\Lac}=s_{\La,\Lbc}=s_{\Lmn,\Lmp}=s_{\Lm,\Lnp}=1/2,\]
(where all dual variables that are not specified above are zeros). In this dual 
solution, $\flow(\onehat)=1$ while all other lattice nodes have a 
$\flow$ value of $0$. Hence, this solution is feasible.
\end{example*}

\subsubsection{CSM proof sequence}

The analog of output inequality~\eqref{eq:ineq1} in the conditional world is
\begin{equation}
   \sum_{(X,Y)\in \calP} c_{Y|X}h(Y|X) \geq h(\hat 1)
   \label{eqn:coi} 
\end{equation}
Identical to Lemma~\ref{lmm:generalized:inequality}, we can show
that~\eqref{eqn:coi} holds for all polymatroids if there are vectors 
$s$ and $m$ such that $(c,s,m)$ is feasible to the dual-$\cllp$.

To answer question {\bf (A)}, we
prove a ``reachability'' lemma that helps us construct a CSM proof sequence.
Let $(c,s,m)$ be any feasible solution to the dual-$\cllp$ \eqref{eqn:dual:cllp}.  
Let $K \subseteq L$ be a set of lattice elements that contains $\hat 0$.
The {\em conditional closure} of $K$ (with respect to $(c,s,m)$) 
is computed from $K$ by repeatedly applying the following two steps:
(1) CD-step (``Conditional Decomposition''): if $Y \in K$ and $X \prec Y$ then add $X$ to $K$,
(2) CC-step (``Conditional Composition''): if $X \in K$ and $c_{Y|X}>0$ then add $Y$ to $K$.

\blmm
For any dual-feasible solution $(c,s,m)$,
let $K \subseteq L$ be a set that contains $\hat 0$, and $\bar K$ be its
conditional closure. If
$\hat 1 \notin \bar K$, then
there are two lattice elements $A,B \in \bar K$ such that
$A\vee B \notin \bar K$ and $s_{A,B}>0$.
\label{lmm:reachability}
\elmm
\bp
If $\hat 1 \notin \bar K$ then $S \defeq \sum_{Z \notin \bar K} \flow(Z) \geq 1$.
If there is no such pair $(A,B)$, then every dual variable
$c_{Y|X}, m_{X,Y}, s_{A,B}$ contributes a non-positive amount to the sum $S$,
which is a contradiction.
\ep

\begin{example*}[\continued{ex:no-smp}]
Consider the dual solution in \eqref{eq:no-smp-sol}. Let's take $K=\{\zerohat\}$
as an example.
Because $c_{\Lx}, c_{\Ly}, c_{\Lz} >0$ (where $c_{\Lx}$ is just an alias 
for $c_{\Lx|\zerohat}$ and so on), the conditional closure $\bar K$ contains 
$\Lx, \Ly, \Lz$ due to the CC-steps. After applying CD-steps, $\bar K$ includes 
all lattice elements below $\Lx, \Ly, \Lz$ as well:
\[\bar K=\{\zerohat, \Lm, \Ln, \Lp, \Lmn, \Lmp, \Lnp, \Lx, \Ly, \Lz\}.\]
(At this point $\bar K$ is closed.)
Applying Lemma~\ref{lmm:reachability}, we can find $\Lmn, \Lmp\in\bar K$ with $s_{\Lmn,\Lmp}>0$.
If we add $\Lmnp=\Lmn \vee \Lmp$ to $K$ (and $\bar K$), we can apply Lemma~\ref{lmm:reachability} again and find $\Lx, \Lmnp\in\bar K$ with $s_{\Lx,\Lmnp}>0$. After adding $\Lab=\Lx\vee \Lmnp$ to $K$, we can find $\Ly, \Lmnp\in \bar K$ with $s_{\Ly,\Lmnp>0}$. After adding $\Lac= \Ly\vee \Lmnp$, we will find $\Lab, \Lac\in \bar K$ and add $\onehat=\Lab \vee \Lac$.
\end{example*}

The above lemma allows us to constructively
prove a weaker inequality than \eqref{eqn:coi}, which is
our answer to question {\bf (A)} above. 

\bthm
Let $(c,s,m)$ be an arbitrary feasible solution to the dual-$\cllp$, where
$c_{Y|X} = q_{Y|X}/d$ and $s_{A,B} = q'_{A,B}/d$ are rational numbers.
Let $\calB$ be a multiset of variables $h(Y|X)$, $(X,Y) \in \calP$, where each
variable $h(Y|X)$ occurs $4^{|L|} \cdot q_{Y|X}$ times.
Then, there is a sequence of CD-, CC-, and SM-rules that transforms $\calB$ into 
another multiset $\bar \calB$ which contains the variable $h(\hat 1)$.
Moreover, in this sequence all occurrences of identical rules are consecutive 
in the sequence (hence, they can be combined into a single ``rule with a 
multiplicity'').
\label{thm:csm-seq}
\ethm
\bp
We prove the theorem with an algorithm.
We start from the set $K = \{ \zerohat \}$, and keep adding
elements to it using conditional closure (CC- and CD-) steps until $K$ is 
conditionally closed; then we add a new element $A\vee B$ with an SM-step 
using the pair $(A,B)$ found by Lemma~\ref{lmm:reachability}. 
This process is repeated until $\hat 1\in K$. 
With regard to $\calB$, initially we will pretend that for each variable 
$h(Y|X) \in \calB$ we have only $q_{Y|X}$ copies of it instead of $4^{|L|}q_{Y|X}$ 
copies. While adding elements to $K$, we will also add variables to $\calB$
so eventually $\calB$ contains $\leq 4^{|L|}q_{Y|X}$ of each variable.

We maintain the following invariants throughout the execution of the algorithm:
\bi
\item For every $X\in K$, there is at least one copy of $h(X)$ in $\calB$. 
   To maintain this invariant, whenever we add a new element $X$ to $K$ while 
   $\calB$ does not contain any copy of $h(X)$, we apply some conditional
   closure rules on $\calB$ to produce $h(X)$.
\item For every copy of $h(Y|X)$ currently in $\calB$, there will always remain at 
   least one copy of $h(Y|X)$ after each step of the algorithm
   (i.e.\ in all subsequent multisets $\calB$). To maintain this invariant, 
   before applying any rule (where we will be losing one copy of each term on 
   the left-hand side of that rule in order to gain one copy of each term on the 
   right-hand side), we duplicate the multiplicities of all terms in the current 
   $\calB$, all previous multisets $\calB$, and all rules that have been applied 
   previously. (In effect, we re-run the entire history of rule application
   once.)
\ei
The above invariants are initially satisfied. Now we take the conditional 
closure of $K$. Whenever $X$ is added to $K$ due to some $Y\in K$ that 
satisfies $X\prec Y$ (signalling a CD-step), we first check whether $\calB$ 
contains a copy of $h(X)$. If it does, then no further action is needed. If 
not, we duplicate multiplicities in the current and all previous $\calB$ and 
all previous rules, and then we apply a CD-rule $h(Y)\rightarrow h(X)+h(Y|X)$.
Whenever $Y$ is added to $K$ due to $c_{Y|X}>0$ for some $X\in K$, we first 
check whether $h(Y)$ is in $\calB$ already. If not, we duplicate all 
multiplicities as before, and then apply a CC-rule $h(X)+h(Y|X)\rightarrow h(Y)$.

Now, suppose $K$ is already conditionally closed, we add $A\vee B$ using an 
SM-step guaranteed by Lemma~\ref{lmm:reachability}. We check whether 
$h(A\vee B)$ is in $\calB$. If it is not, we check whether $h(B|A\wedge B)$ is 
in $\calB$. If it is not, we duplicate all multiplicities and apply a CD-rule 
$h(B)\rightarrow h(A\wedge B)+h(B|A\wedge B)$. Now, duplicate all multiplicities 
again, and apply an SM-rule $h(A)+h(B|A\wedge B)\rightarrow h(A\vee B)$.

For each element that is added to $K$, we have to duplicate its multiplicity at 
most twice, and there are at most $|L|$ such elements.

Finally, we show that the same rule cannot be applied multiple times. In the 
above, before we applied any CC-rule that produced $h(Y)$, we checked 
whether $h(Y)$ was already in $\calB$. Only if it was not, we applied the rule 
adding $h(Y)$ to $\calB$, and letting the second invariant preserve it in 
$\calB$. The same holds for SM-rules. In the above arguments, we applied 
CD-rules of the form $h(Y)\rightarrow h(X)+h(Y|X)$ in two different places: 
In the first, we checked that $h(X)$ was not in $\calB$ before we added both 
$h(X)$ and $h(Y|X)$ to $\calB$. In the second, we checked that 
$h(B|A\wedge B)$ was not in $\calB$ before we added both $h(B|A\wedge B)$ and 
$h(A\wedge B)$ to $\calB$.
\ep

The series of CC-, CD-, SM-rules with multiplicities is called
a {\em CSM proof sequence}.
Interpreted integrally, we think of the proof sequence as
having $D \leq 4^{|L|}d$ copies of $h(\hat 1)$ that it tries to reach, but at
$\bar \calB$ it reaches at least one copy and we stop.

\begin{example*}[\continued{ex:no-smp}]
Now, we simulate the proof of Theorem~\ref{thm:csm-seq} on the dual solution in \eqref{eq:no-smp-sol} (which corresponds to the inequality $h(\onehat) \leq \frac{h(\Lx)+h(\Ly)+h(\Lz)}{2}$).
We will start off with four copies of the right-hand side (i.e.\ $2h(\Lx)+2h(\Ly)+2h(\Lz)$), and generate a CSM proof sequence that will produce one copy of $h(\onehat)$ (out of 4 copies).
\begin{eqnarray}
2h(\Lx)\rightarrow 2h(\Lx|\Lmn) + 2h(\Lmn) & & \text{(adding $\Lmn$ to $\bar K$ based on $\Lx \in \bar K$)} \label{eq:csm-step1}\\
2h(\Ly)\rightarrow 2h(\Ly|\Lmp) + 2h(\Lmp) & & \text{(adding $\Lmp$ to $\bar K$ based on $\Ly \in \bar K$)} \label{eq:csm-step2}\\
2h(\Lmp)\rightarrow 2h(\Lmp|\Lm) + 2h(\Lm) & & \text{(extracting $h(\Lmp|\Lm)$ for the next SM-rule)} \label{eq:csm-step3}\\
2h(\Lmn) + 2h(\Lmp|\Lm) \rightarrow 2h(\Lmnp) & & \text{(SM-rule based on $s_{\Lmn, \Lmp}>0$)} \label{eq:csm-step4}\\
h(\Lmnp)+h(\Lx|\Lmn)\rightarrow h(\Lab)& & \text{(SM-rule based on $s_{\Lx, \Lmnp}>0$)} \label{eq:csm-step5}\\
h(\Lmnp)+h(\Ly|\Lmp)\rightarrow h(\Lac)& & \text{(SM-rule based on $s_{\Ly, \Lmnp}>0$)} \label{eq:csm-step6}\\
h(\Lac) \rightarrow h(\Lac|\La)+h(\La) & & \text{(extracting $h(\Lac|\La)$ for the next SM-rule)} \label{eq:csm-step7}\\
h(\Lab)+h(\Lac|\La) \rightarrow h(\onehat) & &\text{(SM-rule based on $s_{\Lab, \Lac}>0$)} \label{eq:csm-step8}
\end{eqnarray}
Note that rules \eqref{eq:csm-step1}\ldots\eqref{eq:csm-step4} above had multiplicities of $2$, because we needed to produce two copies of $h(\Lmnp)$: one for \eqref{eq:csm-step5} and another for \eqref{eq:csm-step6}.
\end{example*}

\subsubsection{The CSM algorithm (CSMA)}

CSMA is our answer to question {\bf (B)}. The algorithm takes as input a join
query with functional dependencies {\em and} maximum degree bounds (if any) from
input relations. This input is represented by the set $\calP^{(0)}$, corresponding
$\log$-degree bounds, and the linear program $\cllp^{(0)}$.
For example, if we use CSMA for the original join query with functional 
dependencies (with no other max-degree bounds), then we would be
starting with $\calP^{(0)} = \{(\hat 0, R_j) \suchthat R_j \in \bR\}$; in
this case $\cllp^{(0)}$ is equivalent to $\lelp$. 

Let $h^{(0)}$ and $(c^{(0)},s^{(0)},m^{(0)})$ be a pair of primal and dual optimal solutions to
$\cllp^{(0)}$, and $\opt$ be its optimal objective value.
CSMA takes the CSM-proof sequence for $\cllp^{(0)}$ as symbolic {\em
instructions}.
For each instruction, CSMA does some computation, spawning a number
of sub-problems, creating new intermediate tables for the sub-problems if 
needed. The final output is contained in the union of outputs of the sub-problems.

For each sub-problem, CSMA constructs a new pair-set $\calP'$, and a new 
linear program $\cllp'$ with dual-feasible solution $(c',s',m')$.
Note that to construct $\cllp'$, there has to be a corresponding $\log$-degree
bound $n'_{Y|X}$ for each pair $(X,Y) \in \calP'$.
The following two invariants are maintained:
\bi
 \item[(Inv1)] For any $(X,Y) \in \calP'$, there is a table $T(Y)$ (an input relation
    of the sub-problem) that ``guards'' the
    constraint $h(Y|X) \leq n'_{Y|X}$ of $\cllp'$ in the sense that 
    \[ \max_{v \in \Pi_X(T)}\log_2\deg_T(v) \leq n'_{Y|X}. \]
    (Note that if $X =\hat 0$ then the above says $\log_2|T| \leq
    n'_{Y|\zerohat} = n'_Y$, which is a cardinality constraint.)
 \item[(Inv2)] $(c',s',m')$ is feasible to dual-$\cllp'$ 
    with objective value satisfying
    $\obj' \defeq \sum_{(X,Y)\in \calP} n'_{Y|X}c'_{Y|X} \leq \opt$.
\ei
It can be verified that the two invariants are satisfied at $\cllp^{(0)}$.
Next we describe how CSMA deals with each instruction from the CSM sequence.

\bi
\item[(1)] {\bf CD-rule} $h(Y) \to h(Y|X)+h(X)$ with multiplicity $t$.

   Recall that $h(Y)$ is just a short hand for $h(Y|\zerohat)$.
   By (Inv1) there is a table $T(Y)$ with $\log_2 |T| \leq n_Y$.
   Define 
\begin{eqnarray*}
   n'_{Y|X} &\defeq& \max_{v\in\Pi_X(T)}\log_2 \deg_T(v)\\
   n'_{X}   &\defeq& \log_2 |\Pi_X(T)|
\end{eqnarray*}
Lemma~\ref{lmm:partition} shows that $T$ can be partitioned into at most 
$\ell = 2\log N$ sub-tables $T^{(1)},\dots,T^{(\ell)}$
such that $n^{(j)}_{Y|X}+n^{(j)}_X \leq n_Y$, for all $j \in [\ell]$, where
\begin{eqnarray*}
   n^{(j)}_{Y|X} &\defeq& \max_{v\in\Pi_X(T^{(j)})}\log_2 \deg_{T^{(j)}}(v), \\ 
   n^{(j)}_{X} &\defeq& \log_2 |\Pi_X(T^{(j)})|.
\end{eqnarray*}

For each of these sub-tables $T^{(j)}$ of $T$, we create a sub-problem with $T$ 
replaced by $T^{(j)}$.
For the $j$th sub-problem, we add $(\hat 0, X)$ and $(X,Y)$ to $\calP$
with corresponding $\log$-degree constraints 
$h(X) \leq n^{(j)}_X$ and $h(Y|X) \leq n^{(j)}_{Y|X}$, respectively.
We compute the projection of $T^{(j)}$ onto $X$ so we have guards for the two new
constraints.
Set 
\begin{eqnarray*}
   c'_{Y|X} &=& c_{Y|X}+t/D, \\
   c'_X &=& c_X+t/D, \\
   c'_Y &=& c_Y-t/D \geq 0. 
\end{eqnarray*}
If $c'_Y = 0$, then we remove $(\zerohat,Y)$ from $\calP$.
By examining $\flow(Z)$ at each node, we can verify that $(c',s,m)$ is a feasible
solution to the new $\cllp'$ with a {\em reduction} in objective value of 
$(n_Y-n'_{Y|X}-n'_X)t/D$.

\item[(2)] {\bf CC-rule} $h(Y|X)+h(X)\to h(Y)$ with multiplicity $t$.

Let $R$ be the guard for $h(Y|X)\leq n_{Y|X}$ and $S$ for $h(X) \leq n_X$.
Let $\theta$ be a threshold to be determined.
If $n_{Y|X}+n_X \leq \opt + \theta$, then we can compute the table
$T(Y) \defeq S(X) \Join R(Y)$ by going over all tuples in $S$ and expanding them
using matching tuples in $R$. The runtime is $\tilde O(2^{n_X+n_{Y|X}})
= \tilde O(2^{\opt + \theta})$.
The dual solution is modified by setting 
\begin{eqnarray*}
   c'_{Y|X} &=& c_{Y|X}-t/D,\\
   c'_X&=&c_X-t/D, \\
   c'_Y&=&c_Y+t/D, \\
   n'_{Y} &=& n_{Y|X}+n_X.
\end{eqnarray*}

If $n_{Y|X}+n_X > \opt + \theta$, we will start afresh from an optimal solution
to the {\em current} $\cllp$. 
Lemma~\ref{lmm:reduced-obj} below shows that the current 
$\cllp$ has an optimal objective value at most $\opt - \theta/(D-1)$.

\item[(3)] {\bf SM-rule} $h(A)+h(B|A\wedge B) \to h(A\vee B)$ with 
multiplicity $t$.

This is similar to case (2), and becomes identical to case (2) when $A=A\wedge B
= X$ and $B = Y$.
Let $R$ be the guard for $h(A)\leq n_{A}$ and $S$ for $h(B|A\wedge B) \leq
n_{B|A\wedge B}$.
If $n_A+n_{B|A\wedge B} \leq \opt+\theta$, then we can compute 
$T(A\vee B) = R \Join S$ in time $\tilde O(2^{\opt+\epsilon})$.
If $n_A+n_{A|A\wedge B} > \opt+\theta$, then we start afresh from a new optimal
solution to the current $\cllp$.
The minor difference to case (2) is that we have to modify the variable
$s'_{A,B} = s_{A,B}-t/D$.
By selecting the correct threshold $\theta$, we can prove that
CSMA runs in time $O((\log N)^e2^{\opt})$, where $e$ is a
data-independent constant (Theorem~\ref{thm:main-csma}).
\ei

\begin{example*}[\continued{ex:no-smp}]
Given a conjunctive query $Q$ whose functional dependencies correspond to the
lattice in Figure~\ref{fig:no-smp} and whose input relations are $T({\Lx}),
T({\Ly}), T({\Lz})$, where $|T({\Lx})|=|T({\Ly})|=|T({\Lz})|= N$.
In the $\cllp$, we have $\calP=\{(\zerohat, \Lx), (\zerohat, \Ly), (\zerohat, \Lz)\}$ where $n_{\Lx}=n_{\Ly}=n_{\Lz}=\log_2 N=:n$.
The optimal objective value is $\opt=\frac{3n}{2}$ (which implies that $|Q|\leq N^{\frac{3}{2}}$), 
and the feasible dual solution given by \eqref{eq:no-smp-sol} is optimal.
Consider the CSM sequence \eqref{eq:csm-step1}\ldots\eqref{eq:csm-step8} that was constructed earlier for \eqref{eq:no-smp-sol}.
We will explain how to run CSMA on this sequence in order to answer $Q$ in time within a polylogarithmic factor of $2^{\opt}=N^{\frac{3}{2}}$.

\bi
 \item The first rule~\eqref{eq:csm-step1} in the sequence is a CD-rule:
    $h(\Lx)\rightarrow h(\Lx|\Lmn) + h(\Lmn)$. The corresponding algorithmic
    step would be to project $T({\Lx})$ on $\Lmn$ while making sure that the
    projection size times the maximum degree of the projection in $T({\Lx})$
    does not exceed $|T({\Lx})|$
(i.e.\ while making sure that the sum of $n'_{\Lmn}   \defeq \log_2
|\Pi_{\Lmn}(T({\Lx}))|$
and 
$n'_{\Lx|\Lmn} \defeq \max_{v\in\Pi_{\Lmn}(T({\Lx}))}\log_2 \deg_{T({\Lx})}(v)$ 
does not exceed $n_{\Lx}$). If all tuples $v$ in the projection 
$\Pi_{\Lmn} (T({\Lx}))$ have the same degree $\deg_{T({\Lx})}(v)$ (i.e., if 
$T({\Lx})$ is ``uniform'' with respect to $\Lmn$), then the required condition 
is met.  Otherwise, let's assume for simplicity that all degrees 
$\deg_{T({\Lx})}(v)$ are powers of $2$ (and $|T({\Lx})|$ is also a power of 2). 
If this is the case, then based on $\deg_{T({\Lx})}(v)$ we can partition
$T({\Lx})$ into a logarithmic number (namely $n_{\Lx}+1$) of parts 
$T^{(0)}({\Lx}), T^{(1)}({\Lx}), \ldots$ such that the required condition is 
met in each one of them.  (In particular, let $T^{(i)}({\Lx})$ satisfy 
$n_{\Lx|\Lmn}^{(i)}\leq i$ and $n_{\Lmn}^{(i)}\leq n_{\Lx}-i$.)
Now, the execution of CSMA will split into a logarithmic number of branches,
each of which will continue on a different part of $T({\Lx})$. 
For some arbitrarily-fixed $i$, let's track the execution of the $i$-th 
branch (i.e.\ the one on $T^{(i)}({\Lx})$).

\item The second rule~\eqref{eq:csm-step2} is another CD-rule: 
   $h(\Ly)\rightarrow h(\Ly|\Lmp) + h(\Lmp)$.  Similar to above, it will 
   result in the partitioning of $T({\Ly})$ into $O(\log N)$ parts $T^{(0)}({\Ly}), 
   T^{(1)}({\Ly}), \ldots$ (such that $T^{(j)}({\Ly})$ satisfies
$n_{\Ly|\Lmp}^{(j)}\leq j$ and $n_{\Lmp}^{(j)}\leq n_{\Ly}-j$.)
The current $i$-th branch on $T^{(i)}({\Lx})$ will now branch further into 
$O(\log N)$ branches corresponding to $T^{(0)}({\Ly}), T^{(1)}({\Ly}), \ldots$
Let's keep track of the $j$-th branch (on $T^{(j)}({\Ly})$) for some arbitrary 
$j$.

\item The third rule~\eqref{eq:csm-step3} is yet another CD-rule:
   $h(\Lmp)\rightarrow h(\Lmp|\Lm) + h(\Lm)$. It will result in the partitioning
   of $T({\Lmp})\defeq \Pi_{\Lmp} (T^{(j)}({\Ly}))$ (where $\Pi_{\Lmp} 
   (T^{(j)}({\Ly}))$ resulted from the previous CD-step) into $O(\log N)$ parts: 
   $T^{(0)}(\Lmp), T^{(1)}(\Lmp), \ldots$ (such that $T^{(k)}(\Lmp)$ satisfies 
   $n_{\Lmp|\Lm}^{(k)}\leq k$  and $n_{\Lm}^{(k)}\leq n_{\Lmp}-k$.)
   Let's track the $k$-th branch (on $T^{(k)}(\Lmp)$).

\item The fourth rule~\eqref{eq:csm-step4} is an SM-rule: $h(\Lmn) + h(\Lmp|\Lm)
   \rightarrow h(\Lmnp)$. The corresponding algorithmic step is to join the
   table $T(\Lmn)$ (or more precisely $\Pi_{\Lmn}(T^{(i)}(\Lx))$ from Step 1) 
   with the table $T(\Lmp)$ (or more precisely $T(\Lmp)$ from Step 3) in order 
   to get a relation $T(\Lmnp)$.
The time required to compute this join is bounded by the size of $T(\Lmn)$
times the maximum degree of $\Lm$ in $T(\Lmp)$. Depending on the current
execution branch $(i, j, k)$, this time might or might not exceed our budget of
$2^{\opt}=N^{3/2}$. For example, in the branch where both $i$ and $j$ are
maximal, we will have $|T(\Lmn)|\leq 1$ and $|T(\Lmp)|\leq 1$, hence the join
takes $O(1)$ time. On the other hand, in the branch where $i=j=0$ and $k$ is
maximal, each one of  $|T(\Lmn)|$, $|T(\Lmp)|$, and the maximum degree of $\Lm$
in $T(\Lmp)$ could be as large as $N$, hence the join could take time $O(N^2)$.
Luckily, Lemma~\ref{lmm:reduced-obj} below implies that ``when one door closes,
another one opens'': In those particular branches where the join cannot be
computed within our time budget, there are extra constraints that if considered
in the $\cllp$ they would reduce the optimal objective value $\opt$. For
example, in the branch where $i=j=0$ and $k$ is maximal, the value of $\Lm$ is
already fixed. This is because the degree of $\Lm$ in table $T(\Lmp)$ is
maximal, hence it is equal to $|T(\Lmp)|$.

\item The remaining rules \eqref{eq:csm-step5}\ldots \eqref{eq:csm-step8} are similar to the previous ones.
\ei
\end{example*}

\subsubsection{Analysis of CSMA}

\blmm
Let $X\prec Y$ be in the lattice $L$.
Let $T(Y)$ be a table with $\log_2 |T| \leq n_Y$.
Then, $T$ can be partitioned into at most 
$\ell = 2\log N$ sub-tables $T^{(1)},\dots,T^{(\ell)}$
such that $n^{(j)}_{Y|X}+n^{(j)}_X \leq n_Y$, for all $j \in [\ell]$, where
\begin{eqnarray*}
   n^{(j)}_{Y|X} &\defeq& \max_{v\in\Pi_X(T^{(j)})}\log_2 \deg_{T^{(j)}}(v)\\
   n^{(j)}_{X} &\defeq& \log_2 |\Pi_X(T^{(j)})|.
\end{eqnarray*}
\label{lmm:partition}
\elmm
\bp
To obtain the copies $T^{(j)}$, observe that the number of elements $v \in
\Pi_X(T)$ with $\log$-degree in the interval
$[j,j+1)$ is at most $|T|/2^j \leq 2^{n_Y-j}$.
Hence, if we partition $T$ based on which of the buckets $[j,j+1)$ the
$\log$-degree falls into, we would almost have  the required inequality:
$n^{(j)}_{Y|X}+n^{(j)}_X \leq (j+1)+(n_Y-j) = n_Y+1$.
To resolve the situation, we partition each $T^{(j)}$ into two equal-sized
tables. Overall, we need $\ell = 2\log N$.
\ep

\blmm
Given a $\cllp$ whose optimal objective value is $\opt$, and a feasible dual
solution $(c, s, m)$ whose objective value is  $\obj=\sum_{(\bar X, \bar Y)\in
\calP}c_{\bar Y|\bar X} n_{\bar Y| \bar X}$,
let $c_Y>0$ for some $(\zerohat, Y)\in \calP$.
\bi
\item If $n_Y>\obj$, then $\opt < \obj$.
\item Given $\theta\geq 0$, $0<\epsilon<1$, $\epsilon\leq c_Y$, if $n_Y>\obj+\theta$, 
then $\opt < \obj - \frac{\epsilon\theta}{1-\epsilon}$.
\ei
\label{lmm:reduced-obj}
\elmm
\bp
Let $(c', s, m')$ be a dual solution obtained by setting $c'_Y=c_Y-\epsilon$, $m'_{Y,\onehat}=m_{Y,\onehat}+\epsilon$ and keeping other $c'$ and $m'$ values identical to their $c$ and $m$ counterparts. $(c',s, m')$ is not necessarily feasible. However, it satisfies $\flow(\onehat)\geq 1-\epsilon$ and $\flow(Z)\geq 0$ for all $Z\in L-\{\onehat,\zerohat\}$. Let $(c'', s'', m'')$ be a dual solution obtained by multiplying $(c',s,m')$ by $\frac{1}{1-\epsilon}$. Now $(c'', s'', m'')$ is indeed feasible. Let $\obj''$ be its objective value. Because it is dual feasible, $\obj''\geq \opt$.

\begin{eqnarray*}
   \obj'' &=& \sum_{(\bar X, \bar Y)\in \calP}c''_{\bar Y|\bar X} n_{\bar Y| \bar X}\\
   &=& \sum_{(\bar X,\bar Y)\in \calP} \frac{c'_{\bar Y|\bar X}}{1-\epsilon}n_{\bar Y|\bar X}\\
        &=& \frac{1}{1-\epsilon}\sum_{(\bar X,\bar Y)\in \calP} c_{\bar Y|\bar X}n_{\bar Y|\bar X}-\frac{\epsilon}{1-\epsilon} n_Y\\
        &< & \frac{1}{1-\epsilon}\obj-\frac{\epsilon}{1-\epsilon}(\obj+\theta)\\
        &= & \obj - \frac{\epsilon\theta}{1-\epsilon}.
\end{eqnarray*}
\ep

\bthm
CSMA runs in time $O(N+(\log N)^e2^{\opt})$, where $\opt$ is the optimal 
objective value for $\cllp^{(0)}$, $N$ is the input size, and $e$ is a
data-independent constant.
\label{thm:main-csma}
\ethm
\bp
Since the number of CSM rules is at most $|L|^2$ (these are rules with
multiplicities), so no rule is repeated.
The worst case is obtained when the algorithm branches as far as possible only
to have to reduce the objective by $\epsilon = \frac{\theta}{D-1}$ at the very
end and all leaf nodes of the execution tree have to be started afresh with the
new optimal value. Let $x = 2^\opt$ and $T(x)$ denote the runtime of the 
algorithm on the $\cllp^{(0)}$ with objective value $\opt$. Then, the 
recurrence for the runtime is
\begin{eqnarray*}
T(x) &=& (\ell)^{|L|^2}T\left(2^{\opt - \frac{\theta}{D-1}}\right)
   + \ell^{|L|^2} 2^{\theta}x\\
   &=& (\ell)^{|L|^2}T\left(\frac{x}{2^{\theta/(D-1)}}\right)
      + \ell^{|L|^2} 2^{\theta}x
\end{eqnarray*}
To get the exponential decay effect we set $\theta$ so that
$2^{\theta/(D-1)} = 2\ell^{|L|^2}$, which means
$\theta = (D-1)(|L|^2\log_2\ell+1).$
\ep


\section{Conclusions}
\label{sec:conclusions}

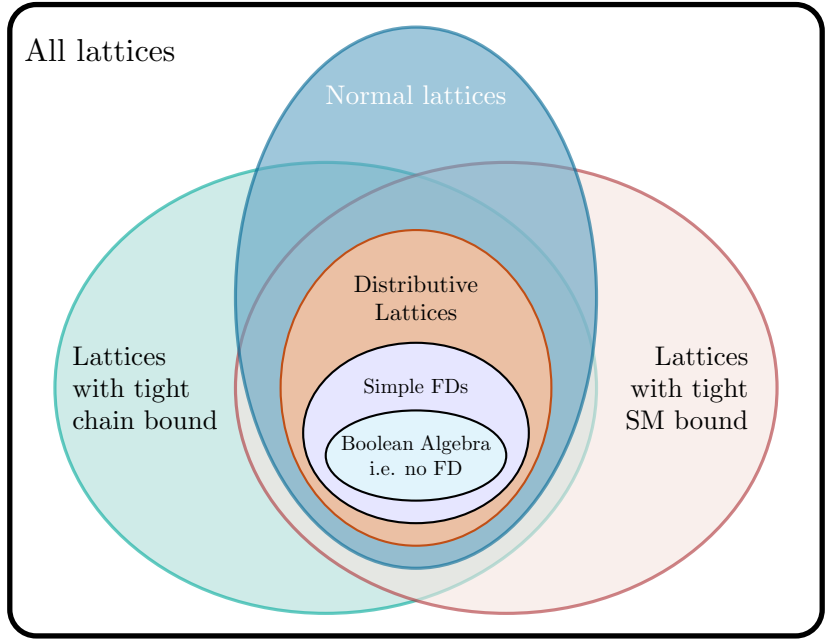
\begin{figure}
\centering
\begin{tikzpicture}[scale=0.6, every node/.style={scale=0.6}]
  \draw [rounded corners=10pt,color=black,line width=2pt,fill=white] (-9,-9) rectangle (9,5);
  \draw [color=JungleGreen,very thick,fill=JungleGreen!30,opacity=0.6] (-2,-3.5) ellipse (6 and 5);
  \draw [color=Maroon,very thick,fill=Maroon!10,opacity=0.6] (2,-3.5) ellipse (6 and 5);
  \draw [color=MidnightBlue,very thick,fill=MidnightBlue!50,opacity=0.6] (0,-1.5) ellipse (4 and 6);
  \draw [color=Bittersweet,thick,fill=Bittersweet!30] (0,-3.5) ellipse (3 and 3.5);
  \draw [color=black,thick,fill=blue!10] (0,-4.5) ellipse (2.5 and 2);
  \draw [color=black,thick,fill=cyan!10] (0,-5) ellipse (2 and 1);
  \node [align=left, scale=2] at (-7,4) {All lattices}; 
  \node [align=center,scale=1.3] at (0,-5) {Boolean Algebra\\i.e. no FD}; 
  \node [align=center,scale=1.3] at (0,-3.5) {Simple FDs};
  \node [align=center,scale=1.5] at (0,-1.5) {Distributive\\Lattices};
  \node [align=left, scale=1.7] at (-6,-3.5) {Lattices\\with tight\\chain bound}; 
  \node [align=right, scale=1.7] at (6,-3.5) {Lattices\\with tight\\SM bound}; 
  \node [align=center,scale=1.7,color=white] at (0,3) {Normal lattices};
\end{tikzpicture}
\caption{A summary of the lattices discussed in this paper}
\label{fig:summary}
\end{figure}

We studied ways to prove worst-case output size bounds, and algorithms meeting
the bounds for join queries with functional dependencies.
A main aim was to design an algorithm running within time bounded by the
entropy-based linear program proposed by Gottlob et al.~\cite{GLVV}.
For this purpose we developed new, lattice theoretic techniques, of
independent interest.  We described several classes of lattices, and
several ways to prove upper or lower bounds on the worst-case query
output, summarized in Fig.~\ref{fig:summary}.
On the algorithmic side, we devised the novel idea of turning a proof of an
inequality into an algorithm. Three proof techniques lead to three different
algorithms with increasing complexity. 
Our main algorithm does meet GLVV bound, up to a poly-$\log$ factor. The
algorithm manages to solve a stronger problem, where input relations have 
prescribed maximum degree bounds, of which functional dependencies and
cardinality bounds are special cases.

%
\bibliographystyle{acm}
\bibliography{main}
\appendix

\appendix

\section{Additional Material for Sec~\ref{sec:intro}}
\label{sec:appendix:intro}

We show that the query shown in \eqref{eqn:degree-bound} has output size
bounded by $O(N^{3/2},Nd_1,Nd_2)$.
W.L.O.G we may
allow an outgoing edge to be colored with more than one color,
i.e. the FD $xy\rightarrow c_1$ is not required.  This is because
the largest output is obtained when the number of pairs $(x,y)$ in
$R$ is maximized, while $|R|\leq N$, and this happens when each edge
has only one color.
%
The bound follows easily because
 \[ h(zx) + h(c_1) \geq h((zxc_1)^+) = h(xyzc_1c_2). \]

\section{Additional Material for Sec~\ref{sec:lattice}}
\label{sec:appendix:lattice}

The following proposition can be found in~\cite{MR1956925}, pp. 774; we
reproduce it here for completeness.
\bprop[Lovasz's monotonization]
Let $\bL=(L,\preceq)$ be a lattice, and $h$ be a non-negative submodular 
function on the lattice. Define $\bar h : L \to \R$ by
\[
\bar h(X) = \begin{cases}
0 & X = \zerohat\\
\min_{Y: X \preceq Y}h(Y) & X \neq \zerohat
\end{cases}
\]
Then, $\bar h$ is a polymatroid where
$\bar h(\hat 1) = h(\hat 1)$ and $\forall X$, $\bar h(X) \leq h(X)$.
\eprop
\bp
We verify the only non-trivial property, that $\bar h$ is sub-modular.
Fix $X \incomp Y$, and let $\bar X = \argmin_{Z : X \preceq Z} h(Z)$ and 
$\bar Y = \argmin_{Z : Y \preceq Z} h(Z)$. Noting that 
$X \vee Y \preceq \bar X \vee \bar Y$ and
$X \wedge Y \preceq \bar X \wedge \bar Y$, we have
\begin{eqnarray*}
\bar h(X)+\bar h(Y) &=& h(\bar X)+h(\bar Y)\\
&\geq& h(\bar X \wedge \bar Y) + h(\bar X \vee \bar Y)\\
&\geq& \bar h(X \wedge Y) + \bar h(X \vee Y).
\end{eqnarray*}
\ep

%

\end{document}